\renewcommand\footnoterule{\kern2pt\hrule width \textwidth height.25pt\kern4pt}         %Redefining footnote line at bottom of the page.
\numberwithin{equation}{section}	                                                    %Equation numbering per section
\DeclareMathOperator{\ATM}{ATM}
\DeclareMathOperator{\RR}{RR}
\newcommand{\xva}{\text{xVA}}
\DeclareMathOperator{\CVA}{CVA}
\DeclareMathOperator{\DVA}{DVA}
\DeclareMathOperator{\BCVA}{BCVA}
\DeclareMathOperator{\PD}{PD}
\DeclareMathOperator{\PFE}{PFE}
\DeclareMathOperator{\PFL}{PFL}
\DeclareMathOperator{\EPE}{EPE}
\DeclareMathOperator{\ENE}{ENE}
\DeclareMathOperator{\bs}{BlackScholes}
\newcommand{\NMix}{N}
\newcommand{\NExp}{N^{\text{exp}}}
\newcommand{\NTen}{N^{\text{ten}}}
\newcommand{\NStrike}{N^{\text{strike}}}
\newcommand{\NCot}{N^{\text{cot}}}
\newcommand{\TCot}{T^{\text{cot}}}
\newcommand{\TExpCot}{T^{\text{exp,cot}}}
\newcommand{\TTenCot}{T^{\text{ten,cot}}}
\newcommand{\strikeATM}{\strike_{\ATM}}
\newcommand{\TExpCalib}{\mathbb{T}^{\text{exp}}}
\newcommand{\TTenCalib}{\mathbb{T}^{\text{ten}}}
\newcommand{\strikeCalib}{\mathbb{\strike}}
\newcommand{\half}{\frac{1}{2}}
\renewcommand{\d}{{\rm d}}
\newcommand{\e}{{\rm e}}                % "e" number
\newcommand{\E}{\mathbb{ E}}            % Expectation
\newcommand{\F}{\mathcal{F}}            % Filtration: market data
\newcommand{\I}{\mathbbm{1}}            % Indicator function
\newcommand{\N}{\mathcal{N}}            % Normal distribution
\newcommand{\Nmeas}{\mathbb{ N}}        % Measure N
\newcommand{\M}{\mathbb{ M}}            % Measure M
\newcommand{\Q}{\mathbb{ Q}}            % Measure Q
\newcommand{\R}{\mathbb{ R}}            % Real number
\newcommand{\U}{\mathcal{U}}            % Uniform distribution
\def\ds{{\d}s}
\def\dt{{\d}t}
\def\du{{\d}u}
\def\dv{{\d}v}
\def\dW{{\d}W}
\def\dx{{\d}x}
\newcommand{\pderiv}[2]{\frac{\partial#1}{\partial #2}}
\newcommand{\ppderiv}[2]{\frac{\partial^2 #1}{\partial {#2}^2}}
\newcommand{\deriv}[2]{\frac{\d\,#1}{\d\,#2}}
\newcommand{\maxFun}[2]{\max \left( #1, #2 \right)}
\newcommand{\maxOperator}[1]{\left( #1 \right)^+}
\newcommand{\minOperator}[1]{\left( #1 \right)^-}
\newcommand{\indicator}[1]{\I_{\left\{#1\right\}}}
\newcommand{\expPower}[1]{\e^{#1}}
\newcommand{\expBrace}[1]{\exp{\left\{#1\right\}}}
\newcommand{\rdef}{=:}
\newcommand{\ldef}{:=}
\newcommand{\var}{\mathbb{V}\text{ar}}      % Variance
\newcommand{\condExpSmall}[2]{\condExpSmallGeneric{#1}{#2}{}}
\newcommand{\condVarSmall}[2]{\condVarSmallGeneric{#1}{#2}{}}
\newcommand{\condExpSmallGeneric}[3]{\E_{#2}^{#3}\left[ #1 \right]}
\newcommand{\condVarSmallGeneric}[3]{\var_{#2}^{#3}\left(#1\right)}
\newcommand{\etal}{\textit{et al. }}
\newcommand{\zeroRomanUpperCase}[1]{\ifcase #1 \relax 0 \else {\MakeUppercase{\romannumeral #1}}\fi}
\newtheorem{definition}{Definition}[section]    % Definition
\newtheorem{thrm}{Theorem}[section]             % Significant result
\newtheorem{prop}{Proposition}[section]         % Statement which is interesting in it's own right, but no significant result
\newtheorem{crllry}{Corollary}[section]         % Short and easy consequence/result of a theorem or proposition
\newtheorem*{rem}{Remark}                       % Remark (unnumbered)
\newcommand{\resultFigureSize}{0.49\linewidth}
\newcommand{\annuity}{A}
\newcommand{\bank}{B}
\newcommand{\objFun}{E}
\newcommand{\payoff}{H}
\newcommand{\strike}{K}
\newcommand{\notional}{N}
\newcommand{\shortRate}{r}
\newcommand{\zcb}{P}
\newcommand{\recovRate}{\RR}
\newcommand{\stock}{S}
\newcommand{\swapRate}{S}
\newcommand{\tradeVal}{V}
\newcommand{\brownian}{W}
\newcommand{\dct}{\alpha}
\newcommand{\shift}{\gamma}
\newcommand{\swapType}{\delta}
\newcommand{\hazardRate}{\xi}
\newcommand{\vol}{\sigma}
\newcommand{\normCDF}{\Phi}
\newcommand{\optType}{\omega}
\newcommand{\impliedVolFun}[1]{\vol^{\text{imp},#1}} 
\newcommand{\impliedVolMkt}{\impliedVolFun{\text{mkt}}}
\newcommand{\impliedVolMdl}{\impliedVolFun{\text{mdl}}} 
\newcommand{\tradeValMkt}{\tradeVal^{\text{mkt}}}
\newcommand{\tradeValMdl}{\tradeVal^{\text{mdl}}} 
\title{On the Hull-White model with volatility smile for Valuation Adjustments}
\begin{document}

\author[1,2]{Thomas van der Zwaard\corref{cor1}}
\ead{T.vanderZwaard@uu.nl}
\author[1,2]{Lech A.~Grzelak}
\ead{L.A.Grzelak@uu.nl}
\author[1]{Cornelis W.~Oosterlee}
\ead{C.W.Oosterlee@uu.nl}
\cortext[cor1]{Corresponding author at Mathematical Institute, Utrecht University, Utrecht, the Netherlands.}
\address[1]{Mathematical Institute, Utrecht University, Utrecht, the Netherlands}
\address[2]{Rabobank, Utrecht, the Netherlands}

\begin{abstract}
\noindent Affine Diffusion dynamics are frequently used for Valuation Adjustments ($\xva$) calculations due to their analytic tractability.
However, these models cannot capture the market-implied skew and smile, which are relevant when computing $\xva$ metrics.
Hence, additional degrees of freedom are required to capture these market features.
In this paper, we address this through an SDE with state-dependent coefficients.
The SDE is consistent with the convex combination of a finite number of different AD dynamics.
We combine Hull-White one-factor models where one model parameter is varied.
We use the Randomized AD (RAnD) technique to parameterize the combination of dynamics.
We refer to our SDE with state-dependent coefficients and the RAnD parametrization of the original models as the rHW model.
The rHW model allows for efficient semi-analytic calibration to European swaptions through the analytic tractability of the Hull-White dynamics.
We use a regression-based Monte-Carlo simulation to calculate exposures.
In this setting, we demonstrate the significant effect of skew and smile on exposures and $\xva$s of linear and early-exercise interest rate derivatives.
\end{abstract}

\begin{keyword}
    Valuation Adjustments \sep Affine Diffusion \sep Volatility Smile and Skew \sep RAnD method \sep Interest Rate derivatives
\end{keyword}
\maketitle

{\let\thefootnote\relax\footnotetext{The views expressed in this paper are the personal views of the authors and do not necessarily reflect the views or policies of their current or past employers.
The authors have no competing interests.}}

\section{Introduction}  \label{sec:introduction}

% 1F Affine Diffusion models market practice for xVA due to analytic tractability. Demonstrate what is lacking: no smile and skew in HW1F
It is common practice to generate interest rate (IR) scenarios for Valuation Adjustment ($\xva$) calculations using one-factor short-rate models that belong to the Affine Diffusion (AD) class, motivated by the analytic tractability of these dynamics.
The most commonly used example is the Hull-White one-factor (HW) model with a time-dependent volatility parameter~\cite{Green201511}.
The model generates implied volatility skew, but the model skew cannot be controlled to fit the market-observed skew.
Furthermore, the model does not generate volatility smile.
In this paper, we address the critical need for a model to incorporate skew and smile in $\xva$ calculations for IR derivatives.
Furthermore, we demonstrate the significant impact of smile and skew on exposures and $\xva$s of IR derivatives, offering insights for practitioners that aim to align their $\xva$ models with market conditions.

% Motivation for using skew and smile in xVA
Skew and smile have a significant effect on $\xva$ calculations, both in terms of mispricing $\xva$s, as well as for risk-management purposes.
The most obvious case is when the derivatives in a portfolio are smile-dependent.
From a consistency perspective, if the valuation model of a derivative includes smile, then, ideally, so does the xVA valuation model.
See~\cite[Section 19.1.3]{Green201511} for further discussion.
Having consistent valuation models for the trading book and $\xva$s is relevant from a hedging perspective to have more consistent $\xva$ hedges with the trading book hedges.
Including skew and smile in the $\xva$ model implies a consistent smile sensitivity for both valuations. 
In addition, the skew and smile are also relevant for linear derivatives, especially for legacy trades that may have a different rate than the current at-the-money (ATM) rate and, therefore, not primarily driven by ATM volatilities.

The relevance of skew and smile for $\xva$ computations has been demonstrated in literature for equity and FX derivatives~\cite{GraafFengKandhaiOosterlee201406,FengOosterlee201709,SimaitisGraafHariKandhai201605} and for Margin Valuation Adjustments of IR derivatives~\cite{HoencampKortKandhai202212}.
The most pronounced effects were observed for Potential Future Exposure ($\PFE$) as this is a tail metric of the future exposure distribution, and skew and smile affect the tails of the state variable's distribution.
In addition, the effect increased when considering, for example, early-exercise derivatives or other exotic derivatives.

% Literature on xVA with smile and skew
It is challenging to find a suitable IR model for $\xva$ calculations that captures skew and smile while at the same time allowing for efficient and accurate calibration and pricing, which are the two main requirements when choosing underlying dynamics for $\xva$ purposes.
Some authors proposed to use Cheyette-type dynamics.
The one-factor Cheyette model fits in the Heath-Jarrow-Morton framework.
Adding a Displaced Diffusion to the Cheyette model allows skew to be generated.
Stochastic volatility can be added to this (e.g., through a CIR process for variance) to introduce curvature (smile).
For example, Andreasen used a four-factor Cheyette model with local and stochastic volatility~\cite{Andreasen201401}.
Hoencamp \etal used a two-factor Cheyette short-rate model, including a stochastic volatility component~\cite{HoencampKortKandhai202212}.
Due to the Cheyette separable form of instantaneous forward rate volatilities, this model is still Markovian with analytic bond prices.
Even though the Cheyette model can be calibrated to the ATM strips of European swaptions and their volatility slopes, however, only the smile curvature of one strip can be included~\cite[Section 16.3.2]{Green201511}.
Hence, not all the market information on smiles is incorporated in the model, but only a general slope of the smile.
Andreasen states that as a consequence, the curvature of all the smiles have to be roughly equivalent to have a sensible model~\cite{Andreasen201401}.
However, this is not necessarily the case in market data.
Furthermore, the calibration of the model to European swaptions requires (multiple) approximations of the swap rate before Fourier transform techniques can be used for the option pricing~\cite{Andreasen201401,HoencampKortKandhai202212}.
Due to the inability of the model to capture multiple market smiles, and due to the required swap rate approximations for European swaption pricing, the Cheyette dynamics do not meet all the requirements for $\xva$ calculations.

% How we do it
Therefore, we propose an alternative modelling approach, leveraging the analytic tractability of AD models~\cite{DuffiePanSingleton200011} while capturing the skew and smile observed from the model implied volatilities.
Additional degrees of freedom are required to achieve this.
Using the Fokker-Planck equation, we derive an SDE with state-dependent coefficients consistent with the convex combination of a finite number of AD dynamics.
We use the Randomized AD (RAnD) class of models~\cite{Grzelak202208,Grzelak202211} to reduce the dimensionality of extra model parameters to avoid overfitting.
For European option valuation, where the option value only depends on the terminal distribution, the option values are the weighted sum of the underlying option values.
This result is directly applicable during model calibration.
Generic derivative pricing and exposure simulation are done in a Monte-Carlo framework with regression methods.
After introducing a general framework, we introduce the randomized Hull-White (rHW) model based on the underlying HW dynamics, which belong to the class of AD models.
In this setting, we demonstrate the significant impact of skew and smile on (potential future) exposures and $\xva$ metrics of linear and early-exercise IR derivatives.
The simple and elegant rHW dynamics retain the desirable properties of AD models while introducing the flexibility to model smile and skew, satisfying the requirements of efficient and accurate calibration and pricing.

\section{Randomized Affine Diffusion} \label{sec:randGeneral}

The class of Affine Diffusion (AD) dynamics contains the SDEs where linearity conditions on the drift, diffusion and IR components are satisfied, i.e., the affinity conditions.
Under these conditions, the characteristic function (ChF) has a semi-closed exponential form~\cite{DuffiePanSingleton200011}, where the coefficients satisfy a set of complex-valued Ricatti ordinary differential equations. 
Using Fourier inversion techniques, the ChF can be used for efficient (occasionally analytical) derivative pricing, allowing for efficient calibration.
However, due to the linearity constraints, these models are unsuitable for exotic option pricing.
See~\cite[Section 7.3]{OosterleeGrzelak201911} for an extensive discussion on this class of models.
The Zero Coupon Bond (ZCB) is the ChF evaluated at $0$.
Therefore, these models allow for an analytic expression for the ZCB in an exponential form.
The analytic tractability motivates the frequent use of these models for $\xva$ purposes.

The starting point of this work is the convex combination of $\NMix \in \mathbb{N}$ different AD short-rate models $\shortRate_n(t)$, $n \in \{1,\ldots,\NMix\}$.
For each process $\shortRate_n(t)$, one model parameter is chosen and takes different values for each $n$.
We derive the SDE for short rate $\shortRate(t)$ with state-dependent drift or diffusion, consistent with this convex combination of short-rate models.
The dynamics of these processes under a generic measure $\M$ are driven by the same source of randomness $\brownian^{\M}(t)$:
\begin{align}
  \d\shortRate(t)
      &= \mu_{\shortRate}^{\M}(t, \shortRate(t)) \dt + \eta_{\shortRate}(t, \shortRate(t)) \dW^{\M}(t), \label{eq:randSDE} \\
  \d\shortRate_n(t)
    &= \mu_{\shortRate_n}^{\M}(t, \shortRate_n(t)) \dt + \eta_{\shortRate_n}(t, \shortRate_n(t)) \dW^{\M}(t), \label{eq:randUnderlyingSDE}
\end{align}
with initial condition $\shortRate(0) = \shortRate_n(0) = f^M(0,0)$ where the latter is the market instantaneous forward rate, which is defined in terms of the ZCBs implied by the market yield curve, i.e., $f^M(0,t) = - \pderiv{\log \zcb^{\text{M}}(0,t)}{t}$.
Here, we did not yet assume any particular form of the $\shortRate_n(t)$ dynamics, such that all results that follow are generic.

The definition of these $\NMix+1$ different short rates raises the question of how the measures, bank accounts and ZCBs are defined.
The short rate $\shortRate(t)$ defines bank account $\bank_{\shortRate}(t) = \expPower{\int_0^t \shortRate(s) \ds}$ which is the numeraire corresponding to the measure $\Q_{\shortRate}$.
Furthermore, the ZCB is the price of a unit payoff at terminal time $T$ under $\Q_{\shortRate}$, i.e.,
\begin{align}
  \zcb_{\shortRate}(t,T)
    &= \condExpSmallGeneric{\expPower{-\int_t^T \shortRate(s) \ds}}{t}{\Q_{\shortRate}}. \label{eq:randZCB}
\end{align}
Analogously, the $N$ individual processes $\shortRate_n(t)$ for $n \in \{1,\ldots,\NMix\}$ respectively define their own bank accounts $\bank_{\shortRate_n}(t)$ which are the numeraires corresponding to the measures $\Q_{\shortRate_n}$.
The ZCB is defined similarly as in Equation~\eqref{eq:randZCB}, where $\shortRate$ is replaced by $\shortRate_n$.
Additionally, due to affinity, the ZCB depends solely on the short rate at time $t$, i.e., $\zcb_{\shortRate_n}(t,T) = \zcb_{\shortRate_n}(t,T; \shortRate_n(t))$.

% Link to RAnD which we use to parameterize the mixture parameters
Combining $N$ processes increases the number of model parameters significantly compared to using a single dynamics to model an underlying.
Naively using these $N$ model parameters will lead to overfitting and difficulties during the model calibration.
However, we use the Randomized Affine Diffusion (RAnD) class of models~\cite{Grzelak202208,Grzelak202211}, where quadrature points of a distribution driven a few parameters are used to generate the $N$ parameters for the AD processes.
Hence, there is significant extra flexibility while keeping a limited amount of degrees of freedom in the model, which is particularly relevant during the calibration.
The original RAnD approach~\cite{Grzelak202208} focused on equities, with randomized parameters for the Black-Scholes and Bates dynamics, to achieve consistent option pricing on the S\&P500 and VIX.
The approach was extended to Heath-Jarrow-Morton short-rate dynamics, focusing on the HW model~\cite{Grzelak202211}.

Going forward, the process $\shortRate(t)$ is referred to as the RAnD model.
In this section, we derive the dynamics for $\shortRate(t)$ using the Fokker-Planck equation to ensure that the marginal distribution of $\shortRate(t)$ is consistent over time with the convex combination of marginal distributions of underlying short rates $\shortRate_n(t)$.
We prove that the pricing of plain vanilla derivatives under the RAnD model can be done using the weighted sum of prices under the finite number of AD dynamics.
To reduce the number of additional parameters in the RAnD model, we parameterize the model using the quadrature points of an exogenously specified distribution as in~\cite{Grzelak202208,Grzelak202211}.
Finally, we put our methodology in perspective by comparing our approach with mixture models and the uncertainty volatility model.

\subsection{Density equation} \label{sec:generalDensityEquation}

A convex combination of symmetric distributions yields heavy-tailed distributions.
Hence, combining distributions allows for a more flexible range of distributions.
We use this to reflect market smile and skew in our model.
These convex combinations are often defined in terms of the marginal distribution of the underlying dynamics.
Hence, the starting assumption is that under measure $\M$, at all points in time, the marginal distribution of $\shortRate(t)$ is consistent with the convex combination of marginal distributions of underlying short rates $\shortRate_n(t)$, see Definition~\ref{def:randPDF}.
All the results in this section will be given for a generic measure $\M$, since the results are required at later stages for various specific measures such as the risk-neutral measure $\M = \Q_{\shortRate}$ with numeraire $\bank_{\shortRate}(t)$, as well as the $T$-forward measure $\M = \Q_{\shortRate}^T$ with numeraire $\zcb_{\shortRate}(t, T)$.
\begin{definition}[RAnD PDF] \label{def:randPDF}
  Let $\shortRate(t)$ and $\shortRate_n(t)$ under measure $\M$ be defined as in Equations~\eqref{eq:randSDE} and~\eqref{eq:randUnderlyingSDE}, respectively.
  Let $f_{\shortRate(t)}^{\M}$ and $f_{\shortRate_n(t)}^{\M}$ denote respectively the PDFs of processes $\shortRate$ and $\shortRate_n$ at time $t$, under measure $\M$.
  The RAnD marginal distribution $f_{\shortRate(t)}^{\M}$ is defined as
  \begin{align}
    f_{\shortRate(t)}^{\M} (y)
      &\ldef \sum_{n=1}^{\NMix} \omega_n f_{\shortRate_n(t)}^{\M}(y), \label{eq:randPDF}
  \end{align}
  where non-negative weights $\omega_n \geq 0$ satisfy $\sum_{n=1}^{\NMix} \omega_n = 1$.
\end{definition}
This density equation defines the model for $\shortRate(t)$.
In~\ref{app:densityResults}, we prove that the density equation~\eqref{eq:randPDF} is measure invariant.
Therefore, the choice of measure $\M$ in Equation~\eqref{eq:randPDF} is not restrictive, i.e., the convex combination holds consistently through time and for all measures.
Two straightforward consequences of Definition~\ref{def:randPDF} for the RAnD CDF and RAnD moments are presented in Corollary~\ref{crllry:randCDF} and~\ref{crllry:randMoments}, respectively.
\begin{crllry}[RAnD CDF] \label{crllry:randCDF}
  Let the RAnD density be defined as in Definition~\ref{def:randPDF}, and let $F_{\shortRate_n(t)}^{\M}(y)$ denote the CDF of $\shortRate_n(t)$, then the RAnD CDF $F_{\shortRate(t)}^{\M}$ is given by:
  \begin{align}
    F_{\shortRate(t)}^{\M} (y)
      &= \sum_{n=1}^{\NMix} \omega_n F_{\shortRate_n(t)}^{\M}(y). \label{eq:randCDF}
  \end{align}
\end{crllry}
\begin{proof}
  The results follows from Equation~\eqref{eq:randPDF} by changing the integration and summation.
\end{proof}

\begin{crllry}[RAnD moments] \label{crllry:randMoments}
  Let the RAnD density be defined as in Definition~\ref{def:randPDF}, then the RAnD $m$'th central moment for $s < t$ is given by 
  \begin{align}
    \condExpSmallGeneric{\shortRate^m(t)}{s}{\M}
      &= \sum_{n=1}^{\NMix} \omega_n \condExpSmallGeneric{\shortRate_n^m(t)}{s}{\M}. \label{eq:randMoments}
  \end{align}
  Hence, when all moments of $\shortRate_n(t)$ are known analytically, the same holds for the $\shortRate(t)$ moments.
\end{crllry}
\begin{proof}
  The results follows from Equation~\eqref{eq:randPDF} by changing the integration and summation.
\end{proof}

\subsection{Density evolution} \label{sec:generalDensityEvolution}
For problems where the initial distribution is known, we can use the Fokker-Planck (FP) equation to obtain a PDE that describes the future evolution of the PDF in time.
This PDE is also known as the Kolmogorov forward PDE.
We re-iterate this well-known result in Proposition~\ref{prop:fokkerPlanck}.

\begin{prop}[Fokker-Planck equation] \label{prop:fokkerPlanck}
In general, for a process $X(t)$ that is governed by the following SDE
\begin{align}
  \d X(t)
    &= \mu(t, X(t)) \dt + \eta(t, X(t)) \dW(t), \nonumber
\end{align}
the FP equation for the density $f_{X(t)}(y)$ of $X(t)$ is
\begin{align}
  \pderiv{}{t} f_{X(t)}(y)
    &= - \pderiv{}{y} \left[ \mu(t, y) f_{X(t)}(y) \right] + \half \ppderiv{}{y} \left[ \eta^2(t, y) f_{X(t)}(y) \right], \label{eq:fokkerPlanck0}
\end{align}
where the initial condition is given by the Dirac delta function $f_{X(t_0)}(y) = \delta(y = X(t_0))$.
\end{prop}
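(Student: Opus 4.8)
The plan is to derive Equation~\eqref{eq:fokkerPlanck0} first in weak (distributional) form, by testing the SDE against smooth functions through It\^{o}'s lemma, and then to transfer all spatial derivatives onto the density by integration by parts; this is the classical route to the Kolmogorov forward equation, the operator on the right-hand side being the formal $L^2$-adjoint of the infinitesimal generator of $X$. First I would fix an arbitrary test function $\varphi \in C_c^\infty(\mathbb{R})$ and apply It\^{o}'s lemma to $\varphi(X(t))$, obtaining
\begin{align}
  \varphi(X(t))
    &= \varphi(X(t_0)) + \int_{t_0}^t \left( \mu(s,X(s))\varphi'(X(s)) + \half \eta^2(s,X(s))\varphi''(X(s)) \right) \ds + \int_{t_0}^t \eta(s,X(s))\varphi'(X(s)) \dW(s). \nonumber
\end{align}
Taking expectations, the stochastic integral drops out because its integrand is bounded (compact support of $\varphi'$ together with local boundedness of $\eta$), so it is a genuine martingale, and Fubini gives
\begin{align}
  \mathbb{E}\big[\varphi(X(t))\big]
    &= \mathbb{E}\big[\varphi(X(t_0))\big] + \int_{t_0}^t \mathbb{E}\left[ \mu(s,X(s))\varphi'(X(s)) + \half \eta^2(s,X(s))\varphi''(X(s)) \right] \ds. \nonumber
\end{align}

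Next I would write each expectation as an integral against the density, $\mathbb{E}[g(X(s))] = \int_{\mathbb{R}} g(y) f_{X(s)}(y) \d y$, and differentiate in $t$ (legitimate by continuity in $s$ of the integrand), which yields
\begin{align}
  \int_{\mathbb{R}} \varphi(y)\, \pderiv{}{t} f_{X(t)}(y) \d y
    &= \int_{\mathbb{R}} \left( \mu(t,y)\varphi'(y) + \half \eta^2(t,y)\varphi''(y) \right) f_{X(t)}(y) \d y. \nonumber
\end{align}
I would then integrate by parts on the right-hand side, once in the drift term and twice in the diffusion term, moving the $y$-derivatives off $\varphi$ and onto $\mu f_{X(t)}$ and $\eta^2 f_{X(t)}$; since $\varphi$ has compact support all boundary terms vanish, leaving
\begin{align}
  \int_{\mathbb{R}} \varphi(y) \pderiv{}{t} f_{X(t)}(y) \d y
    &= \int_{\mathbb{R}} \varphi(y) \left( - \pderiv{}{y}\big[ \mu(t,y) f_{X(t)}(y) \big] + \half \ppderiv{}{y}\big[ \eta^2(t,y) f_{X(t)}(y) \big] \right) \d y. \nonumber
\end{align}
As $\varphi \in C_c^\infty(\mathbb{R})$ is arbitrary, the two integrands agree almost everywhere, and under the assumed smoothness of $\mu$, $\eta$ and $f_{X(t)}$ this upgrades to the pointwise identity~\eqref{eq:fokkerPlanck0}. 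The initial condition is immediate: by the conditioning convention $X(t_0)$ is a deterministic starting point, so its law is the point mass $f_{X(t_0)}(y) = \delta(y = X(t_0))$.

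The computation here is routine; the real content, and the only obstacle, is its justification. One needs enough integrability and regularity of $\mu$, $\eta$ and of the marginal laws of $X$ to (i) ensure the stochastic integral is a true (not merely local) martingale so that its expectation vanishes, (ii) interchange $\pderiv{}{t}$ with the spatial integral, and (iii) carry out the integration by parts with vanishing boundary terms and then pass from the weak identity to the pointwise PDE in~\eqref{eq:fokkerPlanck0}. In this generic statement these are implicit standing hypotheses; for the affine dynamics employed later in the paper the coefficients are affine and the marginal laws Gaussian, so each of these conditions is satisfied automatically.
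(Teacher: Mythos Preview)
Your derivation is correct and follows the classical route to the Kolmogorov forward equation via It\^{o}'s lemma, duality, and integration by parts against test functions. Note, however, that the paper does not actually prove this proposition: it is presented as a ``well-known result'' that is merely re-iterated, without an accompanying proof. Your argument therefore supplies more than the paper itself does; it is the standard textbook derivation and is entirely appropriate here.
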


\subsection{RAnD dynamics} \label{sec:randGeneralDynamics}

The goal is to find $\mu_{\shortRate}(t, \shortRate(t)) $ and $\eta_{\shortRate}(t, \shortRate(t))$ in Equation~\eqref{eq:randSDE}, such that Equation~\eqref{eq:randPDF} holds when the dynamics $\shortRate_n(t)$ are given as in Equation~\eqref{eq:randUnderlyingSDE}.
The result in Proposition~\ref{prop:randDynamics} specifies the required drift and diffusion terms such that the marginal law of process $\shortRate(t)$ follows the same evolution as the convex combination of PDFs.

\begin{prop}[RAnD dynamics] \label{prop:randDynamics}
  Let the marginal distribution of process $\shortRate(t)$ be defined as in Definition~\ref{def:randPDF}.
  Let the dynamics of $\shortRate_n(t)$ be given as in Equation~\eqref{eq:randUnderlyingSDE}, assuming that the diffusion is non-explosive, i.e., $\left(\eta_{\shortRate_n}(t, y)\right)^2 \leq C_n \left( 1 + y^2\right)$ $\forall n$ $\forall t$.
  Then, the dynamics of $\shortRate(t)$ are given as in Equation~\eqref{eq:randSDE} with the following drift and diffusion, again assuming the non-explosion condition $\left(\eta_{\shortRate}(t, y)\right)^2 \leq C \left( 1 + y^2\right)$ $\forall t$:
  \begin{align}
    \mu_{\shortRate}^{\M}(t, y)
      &= \sum_{n=1}^{\NMix} \mu_{\shortRate_n}^{\M}(t, y) \Lambda_n^{\M}(t,y), \label{eq:randSDEDrift0} \\
    \eta_{\shortRate}(t, y)
      &= \sqrt{\sum_{n=1}^{\NMix} \left(\eta_{\shortRate_n}(t, y)\right)^2 \Lambda_n^{\M}(t,y)}, \label{eq:randSDEDiffusion0} \\
    \Lambda_n^{\M}(t,y)
     &\ldef \frac{\omega_n  f_{\shortRate_n(t)}^{\M}(y)}{\sum_{i=1}^{\NMix} \omega_i  f_{\shortRate_i(t)}^{\M}(y)}. \label{eq:randSDEHelper}
  \end{align}
  The above holds for all measures $\M$.
  As $f_{\shortRate_i(0)}^{\M}(y)$ will typically not be defined due to zero variance at initial time, we set $f_{\shortRate_i(0)}^{\M}(y) = 1$ $\forall i$, such that $\Lambda_n^{\M}(0,y) = \omega_n$.
\end{prop}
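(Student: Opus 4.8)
The plan is to exploit that the convex-combination structure of the density is preserved by the Fokker--Planck (FP) operator, and then to read off the required coefficients by ``inverting'' the FP equation. Since each constituent $\shortRate_n(t)$ solves the SDE~\eqref{eq:randUnderlyingSDE}, its density satisfies~\eqref{eq:fokkerPlanck0} with coefficients $\mu_{\shortRate_n}^{\M}$ and $\eta_{\shortRate_n}$. First I would differentiate the defining relation~\eqref{eq:randPDF} in time and, using the regularity of the constituent densities to interchange $\pderiv{}{t}$ with the finite sum, substitute each constituent FP equation; linearity of $\pderiv{}{y}$ and $\ppderiv{}{y}$ then lets me pull the $\omega$-weighted sum inside the spatial derivatives, giving
\begin{align}
  \pderiv{}{t} f_{\shortRate(t)}^{\M}(y)
    &= - \pderiv{}{y} \left[ \sum_{n=1}^{\NMix} \omega_n \mu_{\shortRate_n}^{\M}(t,y) f_{\shortRate_n(t)}^{\M}(y) \right]
       + \half \ppderiv{}{y} \left[ \sum_{n=1}^{\NMix} \omega_n \left( \eta_{\shortRate_n}(t,y) \right)^2 f_{\shortRate_n(t)}^{\M}(y) \right]. \nonumber
\end{align}

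Next I would write down the FP equation that the sought dynamics~\eqref{eq:randSDE} must satisfy, i.e.~\eqref{eq:fokkerPlanck0} with $\mu = \mu_{\shortRate}^{\M}$, $\eta = \eta_{\shortRate}$ and $f_{X(t)} = f_{\shortRate(t)}^{\M}$. Identifying the first-order (drift) flux and the second-order (diffusion) flux with their counterparts in the displayed equation gives $\mu_{\shortRate}^{\M}(t,y) f_{\shortRate(t)}^{\M}(y) = \sum_{n=1}^{\NMix} \omega_n \mu_{\shortRate_n}^{\M}(t,y) f_{\shortRate_n(t)}^{\M}(y)$ and $\left(\eta_{\shortRate}(t,y)\right)^2 f_{\shortRate(t)}^{\M}(y) = \sum_{n=1}^{\NMix} \omega_n \left(\eta_{\shortRate_n}(t,y)\right)^2 f_{\shortRate_n(t)}^{\M}(y)$; dividing by $f_{\shortRate(t)}^{\M}(y) = \sum_{i=1}^{\NMix} \omega_i f_{\shortRate_i(t)}^{\M}(y)$ produces exactly~\eqref{eq:randSDEDrift0}--\eqref{eq:randSDEHelper} with $\Lambda_n^{\M}$ as in~\eqref{eq:randSDEHelper}. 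Since all constituents and $\shortRate(t)$ start from the same Dirac mass $\delta(y = f^M(0,0))$, the initial conditions are consistent; the convention $f_{\shortRate_i(0)}^{\M}(y) = 1$ is a harmless normalization making $\Lambda_n^{\M}(0,y) = \omega_n$, so that $\mu_{\shortRate}^{\M}(0,\cdot)$ and $\eta_{\shortRate}(0,\cdot)$ reduce to the $\omega$-averages of the constituent coefficients. Because~\ref{app:densityResults} shows~\eqref{eq:randPDF} is measure invariant, the same derivation runs verbatim under every $\M$.

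The step I expect to be the main obstacle is justifying that one may read off the drift and diffusion by matching the flux integrands rather than merely the two full FP right-hand sides: a priori one could add $\half \pderiv{}{y} C(t,y)$ to the drift flux and $C(t,y)$ to the diffusion flux without changing $\pderiv{}{t} f_{\shortRate(t)}^{\M}$. The resolution is to recognize $\Lambda_n^{\M}(t,y)$ as the conditional mixing weight --- the probability that the mixture is ``of type $n$'' given $\shortRate(t) = y$ --- so that~\eqref{eq:randSDEDrift0} and~\eqref{eq:randSDEDiffusion0} are precisely the $\Lambda$-weighted averages of the constituents' infinitesimal first and second conditional moments, and these infinitesimal moments characterize the diffusion coefficients canonically. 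The non-explosion bounds $\left(\eta_{\shortRate_n}(t,y)\right)^2 \le C_n(1+y^2)$ and $\left(\eta_{\shortRate}(t,y)\right)^2 \le C(1+y^2)$ enter exactly here: they ensure the SDE~\eqref{eq:randSDE} with the constructed coefficients is well posed (admits a weak solution) and that its marginal density is the unique solution of the associated FP equation, which must then coincide with the convex combination~\eqref{eq:randPDF}. A secondary technical point I would flag is that $\sum_{i=1}^{\NMix} \omega_i f_{\shortRate_i(t)}^{\M}(y) > 0$ wherever the division defining $\Lambda_n^{\M}$ is performed, which holds on the support of the mixture for $t>0$.
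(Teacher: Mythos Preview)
Your proposal is correct and follows essentially the same route as the paper: write the Fokker--Planck equation for $\shortRate(t)$, expand $\pderiv{}{t} f_{\shortRate(t)}^{\M}$ via the convex combination and the constituent FP equations, match the first- and second-order flux terms, and divide by $f_{\shortRate(t)}^{\M}$. The only noteworthy difference is in how the matching step is justified: the paper simply matches terms, integrates, and disposes of the integration constants by invoking ``uniform convergence requirements'' from~\cite[Section 2.3]{Grzelak202211}, whereas you identify the gauge freedom explicitly and resolve it through the probabilistic interpretation of $\Lambda_n^{\M}$ as a conditional mixing weight together with well-posedness of the resulting SDE --- a somewhat more self-contained (and arguably more honest) treatment of the same issue.
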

\begin{proof}
  The proof is similar as the proof of Proposition 2.1 in~\cite{Grzelak202211}, which employs the Fokker-Planck equation to ensure that the marginal distribution of $\shortRate(t)$ is consistent over time with the weighted sum of marginal distributions of the processes $\shortRate_n(t)$.
    First, write down the FP equation from Proposition~\ref{prop:fokkerPlanck} for both $\shortRate(t)$ and $\shortRate_n(t)$:
    \begin{align}
      \pderiv{}{t} f_{\shortRate(t)}^{\M}(y)
        &= - \pderiv{}{y} \left[ \mu_{\shortRate}^{\M}(t, y) f_{\shortRate(t)}^{\M}(y) \right] + \half \ppderiv{}{y} \left[ \left(\eta_{\shortRate}(t, y)\right)^2 f_{\shortRate(t)}^{\M}(y) \right], \label{eq:fokkerPlanck1} \\
      \pderiv{}{t} f_{\shortRate_n(t)}^{\M}(y)
        &= - \pderiv{}{y} \left[ \mu_{\shortRate_n}^{\M}(t, y) f_{\shortRate_n(t)}^{\M}(y) \right] + \half \ppderiv{}{y} \left[ \left(\eta_{\shortRate_n}(t, y)\right)^2 f_{\shortRate_n(t)}^{\M}(y) \right]. \label{eq:fokkerPlanck2}
    \end{align}
    Substitution of Equation~\eqref{eq:randPDF} into Equation~\eqref{eq:fokkerPlanck1} yields:
    \begin{align}
      \pderiv{}{t} \left(\sum_{n=1}^{\NMix} \omega_n  f_{\shortRate_n(t)}^{\M}(y)\right)
        &= - \pderiv{}{y} \left[ \mu_{\shortRate}^{\M}(t, y) \sum_{n=1}^{\NMix} \omega_n  f_{\shortRate_n(t)}^{\M}(y) \right] + \half \ppderiv{}{y} \left[ \left(\eta_{\shortRate}(t, y)\right)^2 \sum_{n=1}^{\NMix} \omega_n  f_{\shortRate_n(t)}^{\M}(y) \right]. \label{eq:fokkerPlanck3}
    \end{align}
    Due to linearity of the derivative operator we can write:
    \begin{align}
      \pderiv{}{t} \left(\sum_{n=1}^{\NMix} \omega_n  f_{\shortRate_n(t)}^{\M}(y)\right)
        &= \sum_{n=1}^{\NMix} \omega_n  \pderiv{}{t}f_{\shortRate_n(t)}^{\M}(y), \nonumber 
    \end{align}
    where in the latter we can plug in the result from Equation~\eqref{eq:fokkerPlanck2}, which results in (using linearity of the derivative operator)
    \begin{align}
      \pderiv{}{t} \left(\sum_{n=1}^{\NMix} \omega_n  f_{\shortRate_n(t)}^{\M}(y)\right)
        &= \sum_{n=1}^{\NMix} \omega_n  \left( - \pderiv{}{y} \left[ \mu_{\shortRate_n}^{\M}(t, y) f_{\shortRate_n(t)}^{\M}(y) \right] + \half \ppderiv{}{y} \left[ \left(\eta_{\shortRate_n}(t, y)\right)^2 f_{\shortRate_n(t)}^{\M}(y) \right] \right) \nonumber \\
        &=  - \pderiv{}{y} \left[ \sum_{n=1}^{\NMix} \omega_n \mu_{\shortRate_n}^{\M}(t, y) f_{\shortRate_n(t)}^{\M}(y) \right] + \half \ppderiv{}{y} \left[ \sum_{n=1}^{\NMix} \omega_n \left(\eta_{\shortRate_n}(t, y)\right)^2 f_{\shortRate_n(t)}^{\M}(y) \right]. \label{eq:fokkerPlanck4}
    \end{align}
    Next, equate the results from~\eqref{eq:fokkerPlanck3} and~\eqref{eq:fokkerPlanck4}, matching the terms and integrating yields
    \begin{align}
      \mu_{\shortRate}^{\M}(t, y) \sum_{n=1}^{\NMix} \omega_n  f_{\shortRate_n(t)}^{\M}(y)
        &= \sum_{n=1}^{\NMix} \omega_n \mu_{\shortRate_n}^{\M}(t, y) f_{\shortRate_n(t)}^{\M}(y), \nonumber \\
      \left(\eta_{\shortRate}(t, y)\right)^2 \sum_{n=1}^{\NMix} \omega_n  f_{\shortRate_n(t)}^{\M}(y) 
        &= \sum_{n=1}^{\NMix} \omega_n \left(\eta_{\shortRate_n}(t, y)\right)^2 f_{\shortRate_n(t)}^{\M}(y), \nonumber
    \end{align}
    where from~\cite[Section 2.3]{Grzelak202211} it is clear that the relevant integration constants are all zero in order to satisfy uniform convergence requirements.
\end{proof}

Proposition~\ref{prop:randDynamics} implies that the drift/diffusion of $\shortRate(t)$ is the weighted average of the drifts/diffusions of underlying processes $\shortRate_n(t)$, where the weights are functions of the marginal distributions of $\shortRate_n(t)$.
Depending on the particular form of the $\shortRate_n(t)$ dynamics, the drift and/or diffusion of $\shortRate(t)$ can be state-dependent.
The latter case is known as a local volatility model.

By construction, the density equation~\eqref{eq:randPDF} holds for all points in time.
Using the Fokker-Planck equation, we have derived the SDE such that the RAnD density has the same evolution as the sum of weighted densities for all points in time.
Hence, the RAnD SDE captures the evolution of the convex combination of AD dynamics.

\subsection{Pricing equation} \label{sec:randGeneralPricing}

When considering convex combinations of densities for underlyings that are observable quantities, e.g., equities, `simple' derivative prices are the weighted sum of the derivative prices under the underlying models~\cite{BrigoMercurio200009}, where the density equation~\eqref{eq:randPDF} is imposed under the $T$-forward measure.
The result stated in Theorem~\ref{thrm:fastPricing} extends the aforementioned result to the case of IR derivatives, where the SDE does not correspond to an observable quantity.

\begin{thrm}[Pricing formula under the RAnD model] \label{thrm:fastPricing}
  Let $\tradeVal_{\shortRate_n}(t; T)$ denote the time $t$ value of a derivative under the AD dynamics $\shortRate_n(t)$ with payoff $\payoff(T;\shortRate_n(T))$ at time $T$.
  Assume that the payoff at time $T$ can be written as an explicit function of the state variable at time $T$, i.e., $\shortRate_n(T)$.
  Let the RAnD model $\shortRate(t)$ be defined as per Definition~\ref{def:randPDF}.
  Then, due to affinity of the $\shortRate_n(t)$ dynamics, the derivative price under $\shortRate(t)$, i.e., $\tradeVal_{\shortRate}(t;T)$, can be expressed as the convex combination of $\tradeVal_{\shortRate_n}(t;T)$:
  \begin{align}   
    \tradeVal_{\shortRate}(t;T)
      &= \sum_{n=1}^{\NMix} \omega_n \tradeVal_{\shortRate_n}(t;T). \label{eq:convexCombinationPricing}
  \end{align}
\end{thrm}
\begin{proof}
  Start from martingale pricing under each of the individual underlying affine models $\shortRate_n(t)$, and move from the risk-neutral measure $\Q_{\shortRate_n}$ to the $T$-forward measure $\Q_{\shortRate_n}^T$, i.e.,
  \begin{align}
    \tradeVal_{\shortRate_n}(t;T)
      &= \condExpSmallGeneric{ \frac{\bank_{\shortRate_n}(t)}{\bank_{\shortRate_n}(T)} \tradeVal_{\shortRate_n}(T;T)}{t}{\Q_{\shortRate_n}}
      = \zcb_{\shortRate_n}(t,T)  \condExpSmallGeneric{ \payoff(T;\shortRate_n(T))}{t}{\Q_{\shortRate_n}^T}. \label{eq:riskNeutralPricing}
  \end{align}
  Here, $\tradeVal_{\shortRate_n}(t;T)$ denotes the time $t$ value of the derivative with payoff $\payoff(T;\shortRate_n(T))$ at time $T$.
  The convex combination of Equation~\eqref{eq:riskNeutralPricing} yields
  \begin{align}
    \sum_{n=1}^{\NMix} \omega_n \tradeVal_{\shortRate_n}(t;T)
      &= \sum_{n=1}^{\NMix} \omega_n \zcb_{\shortRate_n}(t,T)  \condExpSmallGeneric{ \payoff(T;\shortRate_n(T))}{t}{\Q_{\shortRate_n}^T} \nonumber \\
      &= \zcb_{\shortRate}(t,T) \sum_{n=1}^{\NMix} \omega_n \condExpSmallGeneric{  \frac{\zcb_{\shortRate_n}(t, T)}{\zcb_{\shortRate}(t, T)} \payoff(T;\shortRate_n(T))}{t}{\Q_{\shortRate_n}^T} \nonumber \\
      &= \zcb_{\shortRate}(t,T) \sum_{n=1}^{\NMix} \omega_n \condExpSmallGeneric{  \payoff(T;\shortRate_n(T))}{t}{\Q_{\shortRate}^T}, \nonumber
  \end{align}
  moving from $\Q_{\shortRate_n}^T$ to the $\Q_{\shortRate}^T$ measure using the following Radon-Nikodym derivative $\forall n$:
  \begin{align}
    \lambda_{\Q_{\shortRate}^T}^{\Q_{\shortRate_n}^T}(T)
      &= \left.\deriv{\Q_{\shortRate_n}^T}{\Q_{\shortRate}^T}\right|_{\F(T)}
      = \frac{\zcb_{\shortRate}(t, T)}{\zcb_{\shortRate}(T, T)} \frac{\zcb_{\shortRate_n}(T, T)}{\zcb_{\shortRate_n}(t, T)}
      = \frac{\zcb_{\shortRate}(t, T)}{\zcb_{\shortRate_n}(t, T)}. \nonumber
  \end{align}
  Apply Equation~\eqref{eq:randPDF} with $\M = \Q_{\shortRate}^T$ and change the order of summation and integration:
  \begin{align}
    \sum_{n=1}^{\NMix} \omega_n \condExpSmallGeneric{  \payoff(T;\shortRate_n(T))}{t}{\Q_{\shortRate}^T}
      &= \sum_{n=1}^{\NMix} \omega_n \int_{\R} \payoff(T; x) f_{\shortRate_n(T)}^{\Q_{\shortRate}^{T}}(x) \dx 
      = \int_{\R} \payoff(T; x) \sum_{n=1}^{\NMix} \omega_n f_{\shortRate_n(T)}^{\Q_{\shortRate}^{T}}(x) \dx \nonumber \\
      &= \int_{\R} \payoff(T; x) f_{\shortRate(T)}^{\Q_{\shortRate}^{T}}(x) \dx 
      = \condExpSmallGeneric{  \payoff(T;\shortRate(T))}{t}{\Q_{\shortRate}^T}. \nonumber
  \end{align}
  Now, the payoff only depends on $\shortRate(T)$ rather than the whole trajectory of $\shortRate$, i.e., $\tradeVal_{\shortRate}(T;T) = \payoff(T;\shortRate(T))$.
  Hence, the affinity of $\shortRate_n(t)$ $\forall n$ can be used to rewrite the RAnD payoff in a convenient form.
  Standard martingale pricing yields:
  \begin{align}
    \zcb_{\shortRate}(t,T) \condExpSmallGeneric{  \payoff(T;\shortRate(T))}{t}{\Q_{\shortRate}^T}
      &= \zcb_{\shortRate}(t,T) \condExpSmallGeneric{  \tradeVal_{\shortRate}(T;T)}{t}{\Q_{\shortRate}^T}
      = \tradeVal_{\shortRate}(t;T). \nonumber
  \end{align}
  Combining all the results above yields the required result.
\end{proof}

In Theorem~\ref{thrm:fastPricing}, $\tradeVal_{\shortRate_n}(t;T)$ are arbitrage-free prices.
However, while calibrating the RAnD model to market prices, the model prices $\tradeVal_{\shortRate_n}(0;T)$ will not be consistent with the market prices $\tradeVal^{\text{mkt}}(0;T)$.
Only arbitrage-free price $\tradeVal_{\shortRate}(0;T)$ will be consistent with the market prices.

Theorem~\ref{thrm:fastPricing} also illustrates why the normalization of weights is imposed, i.e., $\sum_{n=1}^{\NMix} \omega_n = 1$.
For the RAnD model to fit the initial market yield curve $\zcb^{\text{M}}(0;T)$, this property is required, i.e., applying Equation~\eqref{eq:convexCombinationPricing} to the derivative $\zcb_{\shortRate}(0;T)$ yields:
\begin{align}   
  \zcb_{\shortRate}(0;T)
    &= \sum_{n=1}^{\NMix} \omega_n \zcb_{\shortRate_n}(0;T)
    = \zcb^{\text{M}}(0;T) \sum_{n=1}^{\NMix} \omega_n 
    = \zcb^{\text{M}}(0;T) . \label{eq:convexCombinationPricingZCB}
\end{align}

Due to the affinity, the ZCB under the AD dynamics depends solely on the state variable at the start date, i.e., $\zcb_{\shortRate_n}(t,T) = \zcb_{\shortRate_n}(t,T; \shortRate_n(t))$.
Hence, Theorem~\ref{thrm:fastPricing} holds for all linear IR derivatives and their European options.
As such, the RAnD model can be calibrated (semi) analytically to European option prices, depending on the pricing under the AD dynamics.
When calibrating to a term structure, due to the measure invariance of the density equation, each calibration instrument with a different expiry $T_i$ can be valued using Equation~\eqref{eq:convexCombinationPricing}.

Due to the analytic tractability of the underlying AD dynamics $\shortRate_n$, the valuation of $\tradeVal_{\shortRate_n}$ is typically fast in the case of linear or European derivatives.
Hence, Theorem~\ref{thrm:fastPricing} gives a powerful result where this efficient pricing under the underlying dynamics extends to the pricing under the RAnD dynamics.
However, this valuation formula cannot be naively used to price any derivative. 
The assumptions in Theorem~\ref{thrm:fastPricing} restrict to derivatives where the payoff can be expressed explicitly as a function of the state variable at payoff time.
Hence, for path-dependent derivatives, alternative valuation methods are required.

The elegance of this modelling setup is demonstrated in Theorem~\ref{thrm:fastPricing}.
If we would have started from the martingale pricing equation $\tradeVal_{\shortRate}(t;T) = \condExpSmallGeneric{ \frac{\bank_{\shortRate}(t)}{\bank_{\shortRate}(T)}  \tradeVal_{\shortRate}(T;T)}{t}{\Q_{\shortRate}}$, there would be no certainty that the payoff only depends on $\shortRate(T)$ rather than the whole trajectory of $\shortRate$, not even for European-type payoffs.
So, by construction, due to the affinity of the underlying dynamics, we obtain the explicit form with dependence only on $\shortRate(T)$.
In a way, we have correctly `guessed' the suitable form of the solution when the payoff is not path-dependent.

\subsection{RAnD parametrization} \label{sec:RAnDparametrization}

The RAnD model can be constructed starting from AD dynamics.
First, pick one of the model parameters from the drift $\mu_{\shortRate_n}^{\M}(t, \shortRate_n(t))$ or the diffusion $\eta_{\shortRate_n}(t, \shortRate_n(t))$ in Equation~\eqref{eq:randUnderlyingSDE}.
Then, use a different parameter value $\theta_n$ for each process, and continue to use the notation $\shortRate_n(t)\ldef \shortRate_n(t; \theta_n)$ for these processes.
The marginal distributions of these processes are then scaled by weights $\omega_n$ as per Equation~\eqref{eq:randPDF}, resulting in a set of $\NMix \in \mathbb{N}$ parameter pairs $\left\{\omega_n,\theta_n\right\}_{n=1}^{\NMix}$.
The setup can be interpreted as one of the AD model parameters following a discrete distribution, i.e., the parameter is randomized.
See~\cite{BrigoMercurio200009} for an example of the combination of lognormal distributions.

To avoid over-parametrization, we choose parameter pairs $\left\{\omega_n,\theta_n\right\}_{n=1}^{\NMix}$ to be given by the quadrature rule based on the first $\NMix$ moments of an exogenously specified continuous distribution~\cite{Grzelak202208}.
The only requirement of the exogenous stochasticity is that the moments of the continuous distribution exist and are finite.
Hence, the parameter pairs can still be viewed as a discrete distribution, which is specifically generated by a continuous distribution.
For example, the continuous distribution could be a normal or uniform distribution, i.e., $\U([\hat{a}, \hat{b}])$ or $\N (\hat{a}, \hat{b}^2 )$, such that one additional degree of freedom is introduced in the RAnD model w.r.t. the original AD dynamics.
The significant additional flexibility by merely adding one parameter is a key advantage of the model.

There is no approximation error like the quadrature error that needs to be considered since the quadrature is only used to reduce the dimensionality of the model parameters.
In other words, we do not view the discrete distribution $\left\{\omega_n,\theta_n\right\}_{n=1}^{\NMix}$ as the approximation of a continuous distribution of model parameter $\vartheta$, where convergence in $\NMix$ would need to be considered as in~\cite{Grzelak202208}.
The latter would lead to an approximate pricing formula compared to the equality in Theorem~\ref{thrm:fastPricing}, e.g., for $\vartheta \sim \N (\hat{a}, \hat{b}^2 )$ with CDF $F_{\vartheta}(x)$, where the approximation is driven by the quadrature error:
\begin{align}
  \tradeVal_{\shortRate(t;\vartheta)}(t; T)
    &= \int_{[a,b]} \tradeVal_{\shortRate(t;\theta)}(t; T) \d F_{\vartheta}(\theta)
    \approx \sum_{n=1}^{N} \omega_n \tradeVal_{\shortRate(t;\theta_n)}(t; T). \label{eq:fastPricingContinuousApprox}
\end{align}

At each quadrature point $\left\{\omega_n,\theta_n\right\}$ one has an AD model.
The parameter randomization relaxes the affinity constraints, as it is an external layer over the class of AD models.
The RAnD model can be calibrated using the pricing equation from Theorem~\ref{thrm:fastPricing}.
Recall that the parameters $\hat{a}$ and $\hat{b}$ drive the distribution $\N (\hat{a}, \hat{b}^2 )$ from which the quadrature points $\left\{\omega_n,\theta_n\right\}_{n=1}^{\NMix}$ are generated.
During the calibration, $\hat{a}$ and $\hat{b}$ are altered to generate new quadrature points, after which the pricing equation~\eqref{eq:convexCombinationPricing} is used to match the market implied price of the calibration instrument(s).
To allow for efficient calibration, the distribution of the randomized variable needs to have moments which can be computed in closed form.
The two main examples of uniform and normal distributions satisfy this requirement, and even allow to tabulate the quadrature points for a standard case and rescale them according to the parameters $\hat{a}$ and $\hat{b}$.
For further information on how these quadrature points are computed, see~\cite{Grzelak202208}.

\subsection{Mixture models and the uncertain volatility model}

% History on mixture models
The convex combination of distributions is also known as a mixture model.
In the early 2000s, several works appeared using (lognormal) mixtures to capture volatility smiles present in equity markets~\cite{BrigoMercurio200009,BrigoMercurio2002,BrigoMercurioSartorelli200303}.
While mixture models offer ample flexibility, there was criticism that some were stretching this flexibility beyond the reasonable.
For example, see the feedback by Piterbarg~\cite{Piterbarg200307}.
Mixture models are safe for European option valuation as their option value does not depend on the volatility path but only on the average volatility between now and expiry, i.e., the terminal distribution.
Hence, when valuing a non-path-dependent derivative, its value is the weighted sum of the derivative values in the various scenarios.
However, mixture models cannot be used to price contracts that depend on the dynamics of the underlying market variables.
The issue is that multiple dynamics can be consistent with the same mixture, leading to different prices.
For mixture models to work for all derivatives, one must fully specify the evolution of all model state variables through time and not use the valuation formula where prices are weighted over the different scenarios.
Brigo and Mercurio~\cite{BrigoMercurio200009} did this by deriving a local volatility model consistent with the assumption of a (lognormal) mixture.
This approach resulted in dynamics consistent with the chosen parametric form of the risk-neutral density used for calibration.
This local volatility model is fully self-consistent.
Path-dependent and early-exercise products can be valued using Monte-Carlo simulation of the local volatility dynamics.
% Other work by Brigo and others on the same topic~\cite{Brigo200812,BrigoPisaniRapisarda202104,BrigoRapisardaSridi2018}.

% Our contribution: continued
We also follow this route in this paper: the convex combination of processes and the corresponding valuation equation for plain vanilla options is only used for calibration purposes.
For pricing derivatives and simulating exposures, we use the SDE with state-dependent coefficients.

When setting up a RAnD model with various values for the volatility parameter in the diffusion term, the dynamics in Proposition~\ref{prop:randDynamics} are local-volatility dynamics.
This model is not the same as the `uncertain volatility (UV) model', which is a type of stochastic volatility model~\cite{Brigo200812}.
For the latter, it is known that for each model there exists a local-volatility version with the same marginals.
Hence, the European options prices at $t=0$ are consistent.

In the UV model, the volatility parameter follows a discrete distribution with probabilities and parameter values similar to the RAnD parameter weights and values.
The randomness in the volatility is independent of the Brownian shocks, and uncertainty is resolved at time $\epsilon > 0$.
When conditioning on time $u > \epsilon$, the volatility uncertainty is resolved, and it is known which parameter value is realized.
Hence, the transition density between $u$ and $t>u$ reduces to the known density used in the RAnD model with a parameter realization.
As a result, the UV transition density is a convex combination of transition densities.
So, both the marginals and transition densities are now known.
The mixture model is the local-volatility version of the UV model, and only the marginals are specified for this model.

The UV model seems to be Piterbarg's interpretation of mixture models~\cite{Piterbarg200307}, who states ``A mixture model only specifies what is going to happen at a fixed time in the future, and not what happens in between now and then...
Therefore, a mixture model is inherently under-specified.
There exist multiple dynamics that lead to the same mixture at a given future time.''
In other words, the mixture model only specifies the marginals but not the transition densities.
Hence, the model cannot be used to price path-dependent derivatives, whereas to European options only depend on the marginal distributions.

In summary, when setting up the local-volatility model consistent with a particular convex combination of dynamics, the model should not be interpreted as a UV model.
The two models only have the same marginals.
This line of reasoning extends to any parameter uncertainty.

\section{The Randomized Hull-White model} \label{sec:rHW}

For the application of computing exposures for Valuation Adjustments, we extend the general results from Section~\ref{sec:randGeneral} and derive the SDE consistent with the convex combination of several Hull-White one-factor (HW) models where one parameter is varied, i.e., either the mean-reversion or the model volatility parameter.
The RAnD technique is used to parameterize the combination of dynamics, resulting in the final model.
Going forward, we refer to this as the randomized Hull-White (rHW) model.

In~\cite{Grzelak202211}, the RAnD method was extended to Heath-Jarrow-Morton short-rate dynamics, focussing on the HW model.
In a single-period setting, the model was calibrated to the swaption implied market smile and skew.
When randomizing the mean-reversion parameter, a richer spectrum of implied volatility shapes can be generated than for the model volatility parameter~\cite{Grzelak202211}.
A randomized mean-reversion leads to the desired degree of curvature in implied volatilities, though at the same time, the implied volatility level is affected.
The model volatility can then be used to adjust the implied volatility level to achieve a perfect ATM fit.
Both the uniform and normal randomizing distributions yield good results.

We prove that the various realizations of the HW model at the quadrature points allow one to value non-path-dependent derivatives under the rHW model as the weighted sum of derivative prices under the underlying dynamics.
Hence, the rHW model allows for an efficient and accurate calibration where we profit from the analytic tractability of the underlying dynamics.
The model is calibrated not only to ATM co-terminal European swaptions but additionally to the information encoded in the market implied volatility surface, including skew and smile.
We extend the single-period calibration of the model from~\cite{Grzelak202211} by calibrating to data for multiple expiry-tenor pairs, which is required for $\xva$ modelling.
For exposure simulation in a Monte-Carlo setting, regression methods can be used to obtain the simulated Zero Coupon Bonds (ZCBs) along the future market scenarios, which in turn can be used to value the derivatives along these scenarios.

\subsection{The HW dynamics} \label{sec:HWDynamics}
We use the Gaussian one-factor formulation~\cite{BrigoMercurio2006} of the HW dynamics:
\begin{align}
  \shortRate(t)
    &= x(t) + b(t), \ \
  \dx(t)
    = -a_x x(t)\dt + \vol_x \d\brownian^{\Q_{\shortRate}}(t). \nonumber
\end{align}
Here, $a_x$ is the mean-reversion parameter, and $\vol_x$ is the model volatility; $b(t)$ is deterministic and used to fit the model to the market yield curve.

The HW model is often used for $\xva$ calculations due to its favourable properties: it falls within the AD class, allowing for analytic tractability; the short rate is normally distributed; calibration to swaptions can be done efficiently with a semi-analytic swaption pricing formula.
See~\ref{app:HW} for these well-known results.

For the time-homogeneous case, there are only two parameters to fit the whole market volatility surface, which is an impossible task.
Therefore, a time-dependent piece-wise constant volatility parameter is often used to calibrate the model to a strip of ATM co-terminal swaptions~\footnote{Co-terminal swaptions are swaptions where the sum of the expiry (of the option) and the maturity (of the underlying swap) are a constant number. In a matrix of implied swaption volatilities for a given strike, this amounts to selecting the counter diagonal.}.
The mean-reversion and volatility parameters have similar effects on the Black implied volatilities~\cite{OosterleeGrzelak201911}.
Therefore, due to the limited impact of the mean-reversion parameter, it is typically taken constant in practice and re-calibrated on an infrequent basis (e.g., weekly or monthly).
On the other hand, in practice, the volatility parameter is re-calibrated daily.

The forward rate under the HW model follows a shifted-lognormal distribution, such that the forward rate is displaced.
Hence, there is model skew by construction, but it does not necessarily match the market skew.
Alas, the model skew cannot be controlled, which is one of the model drawbacks.
In addition, the model does not generate a volatility smile.
Thus, it is only possible to calibrate the model to a limited part of the market volatility surface.

For ease of display, all results are presented for the case of constant model volatility $\vol_x$.
However, in the $\xva$ context, we require a time-dependent model volatility $\vol_x(t)$.
The results for this case can be found in~\ref{app:HWTimeDepVol}.

\subsection{The rHW dynamics} \label{sec:rHWDynamics}
The approach outlined in Section~\ref{sec:RAnDparametrization} can be used to derive the rHW dynamics: choose one of the model parameters to vary in the convex combination, which we refer to as randomization.
Due to the aforementioned effects on the shapes of implied volatility curves, we randomize mean reversion $a_x$ according to a discrete distribution, represented by $\NMix$ parameter pairs $\left\{\omega_n,\theta_n\right\}_{n=1}^{\NMix}$.
We work under the unique risk-neutral measure $\M = \Q_{\shortRate}$, i.e., the model is defined through the PDF from Definition~\ref{def:randPDF} under measure $\Q_{\shortRate}$.
The result is $N$ HW models $\shortRate_n(t)$ with mean-reversion parameter $\theta_n$, all driven by the same diffusion $\vol_x \brownian^{\Q_{\shortRate}}(t)$:
\begin{align}
  \shortRate_n(t)
    &= x_n(t) + b_n(t), \ 
  \d x_n(t)
    = - \theta_n x_n(t) \dt + \vol_x \dW^{\Q_{\shortRate}}(t). \label{eq:HWSDE1} 
\end{align}
The following integrated from for $\shortRate_n(t)$ can be derived using an integrating factor and Ito's lemma, i.e., for $s<t$:
\begin{align}
  \shortRate_n(t)
    &= x_n(s)\expPower{-\theta_n (t-s)} +  b_n(t) + \vol_{x} \int_{s}^{t}  \expPower{-\theta_n  (t-u)} \d\brownian^{\Q_{\shortRate}}(u). \label{eq:HWSDE1Integrated}
\end{align}
Clearly, $\shortRate_n(t) \sim \N \left( \condExpSmallGeneric{\shortRate_n(t)}{s}{\Q_{\shortRate}}, \condVarSmall{\shortRate_n(t)}{s} \right)$ conditional on $\F_s$, s.t.
\begin{align}
  &f_{\shortRate_n(t)}^{\Q_{\shortRate}}(y)
    = f_{\N \left( \bar{\mu}, \bar{\sigma}^2 \right)}^{\Q_{\shortRate}}(y), \label{eq:HWPDF} \\
  &\bar{\mu}
    \ldef \condExpSmallGeneric{\shortRate_n(t)}{s}{\Q_{\shortRate}}
    = x_n(s)\expPower{-\theta_n (t-s)} + b_n(t), \ 
  \bar{\sigma}^2
    \ldef \condVarSmall{\shortRate_n(t)}{s}
    = \frac{\vol_{x}^2 }{2\theta_n} \left( 1 - \expPower{-2\theta_n (t-s)} \right). \nonumber
    %= \int_{s}^{t} \vol_{x}^2(u) \expPower{-2\theta_n (t-u)}\du. \nonumber
\end{align}
The deterministic function $b_n(t)$ in Equation~\eqref{eq:HWSDE1} is obtained by fitting the model to the market yield curve, which is done in the derivation of the ZCB, see~\ref{app:HWZeroCouponBond}.

The dynamics from Equation~\eqref{eq:HWSDE1} do not yet fit in the required form of Equation~\eqref{eq:randUnderlyingSDE}.
The dynamics in Proposition~\ref{prop:randUnderlyingSDEHW} have the desired form.

\begin{prop}[HW dynamics in RAnD form] \label{prop:randUnderlyingSDEHW}
  The HW dynamics for $\shortRate_n(t)$ from Equation~\eqref{eq:HWSDE1} can be written in the form of Equation~\eqref{eq:randUnderlyingSDE} with drift and diffusion:
  \begin{align}
  \mu_{\shortRate_n}^{\Q_{\shortRate}}(t, \shortRate_n(t))
    &= \deriv{f^{\text{M}}(0,t)}{t} + \theta_n f^{\text{M}}(0,t) - \theta_n \shortRate_n(t) + \condVarSmall{\shortRate_n(t)}{0}, \label{eq:randUnderlyingSDEHWDrift} \\
  \eta_{\shortRate_n}(t, \shortRate_n(t))
    &= \vol_x. \label{eq:randUnderlyingSDEHWDiffusion}
  \end{align}
  The drift from Equation~\eqref{eq:randUnderlyingSDEHWDrift} is state dependent, whereas the diffusion in Equation~\eqref{eq:randUnderlyingSDEHWDiffusion} is of the same form as the diffusion of $\shortRate_n$ from Equation~\eqref{eq:HWSDE1}.
\end{prop}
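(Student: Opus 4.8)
The plan is to start from the Gaussian-additive representation in Equation~\eqref{eq:HWSDE1Integrated}, differentiate it to recover an SDE in $\shortRate_n(t)$ rather than $x_n(t)$, and then rewrite the drift so that it involves only the market forward curve $f^{\text{M}}(0,t)$ and the conditional variance, eliminating the auxiliary shift $b_n(t)$. Concretely, first I would write $\shortRate_n(t) = x_n(t) + b_n(t)$ with $\d x_n(t) = -\theta_n x_n(t)\dt + \vol_x\dW^{\Q_{\shortRate}}(t)$, so that $\d\shortRate_n(t) = \d x_n(t) + b_n'(t)\dt = \bigl(b_n'(t) - \theta_n x_n(t)\bigr)\dt + \vol_x\dW^{\Q_{\shortRate}}(t)$. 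Substituting $x_n(t) = \shortRate_n(t) - b_n(t)$ gives drift $b_n'(t) + \theta_n b_n(t) - \theta_n \shortRate_n(t)$, which already matches the claimed form~\eqref{eq:randUnderlyingSDEHWDrift} provided one shows the identity
\begin{align}
  b_n'(t) + \theta_n b_n(t)
    &= \deriv{f^{\text{M}}(0,t)}{t} + \theta_n f^{\text{M}}(0,t) + \condVarSmall{\shortRate_n(t)}{0}. \nonumber
\end{align}
The diffusion is immediate: differentiating~\eqref{eq:HWSDE1} the $\dW^{\Q_{\shortRate}}$-coefficient is simply $\vol_x$, establishing~\eqref{eq:randUnderlyingSDEHWDiffusion}, and confirming that it has the same (constant, state-independent) form as the diffusion in~\eqref{eq:HWSDE1}.

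The substance of the argument is therefore the identity for $b_n$. The standard HW yield-curve fit (see~\ref{app:HWZeroCouponBond}) gives the closed form $b_n(t) = f^{\text{M}}(0,t) + \tfrac{\vol_x^2}{2\theta_n^2}\bigl(1 - \expPower{-\theta_n t}\bigr)^2$, equivalently $b_n(t) = f^{\text{M}}(0,t) + g_n(t)$ where $g_n(t)$ is exactly the quantity whose relation to $\condVarSmall{\shortRate_n(t)}{0} = \tfrac{\vol_x^2}{2\theta_n}\bigl(1 - \expPower{-2\theta_n t}\bigr)$ must be pinned down. Plugging $b_n = f^{\text{M}}(0,\cdot) + g_n$ into the left-hand side above, the $f^{\text{M}}$ terms cancel against the right-hand side, leaving the purely deterministic claim $g_n'(t) + \theta_n g_n(t) = \condVarSmall{\shortRate_n(t)}{0}$. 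I would verify this by direct differentiation of $g_n(t) = \tfrac{\vol_x^2}{2\theta_n^2}\bigl(1 - \expPower{-\theta_n t}\bigr)^2$: one gets $g_n'(t) = \tfrac{\vol_x^2}{\theta_n}\bigl(1 - \expPower{-\theta_n t}\bigr)\expPower{-\theta_n t}$, and then $g_n'(t) + \theta_n g_n(t) = \tfrac{\vol_x^2}{2\theta_n}\bigl[2\expPower{-\theta_n t} - 2\expPower{-2\theta_n t} + 1 - 2\expPower{-\theta_n t} + \expPower{-2\theta_n t}\bigr] = \tfrac{\vol_x^2}{2\theta_n}\bigl(1 - \expPower{-2\theta_n t}\bigr)$, which is precisely $\condVarSmall{\shortRate_n(t)}{0}$. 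This closes the loop.

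The main obstacle, such as it is, is purely bookkeeping: being careful that $b_n(t)$ depends on $\theta_n$ (each underlying model has its own shift because the yield-curve fit depends on the mean-reversion parameter), and ensuring the time-derivative $\deriv{f^{\text{M}}(0,t)}{t}$ is treated as the genuine derivative of the market instantaneous forward curve rather than absorbed into $b_n'$. There is no analytic subtlety beyond differentiating exponentials and collecting terms; the non-explosion hypothesis $\left(\eta_{\shortRate_n}(t,y)\right)^2 \leq C_n(1+y^2)$ needed to invoke Proposition~\ref{prop:randDynamics} downstream is trivially satisfied here since $\eta_{\shortRate_n} \equiv \vol_x$ is bounded. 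Hence the proof is essentially a two-line substitution plus the one deterministic ODE identity verified above.
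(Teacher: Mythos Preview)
Your proposal is correct and follows essentially the same route as the paper: derive the SDE for $\shortRate_n$ from $\shortRate_n = x_n + b_n$, read off drift $b_n'(t) + \theta_n b_n(t) - \theta_n \shortRate_n(t)$ and diffusion $\vol_x$, then substitute the explicit yield-curve fit for $b_n$ and simplify. Your verification of the identity $g_n'(t) + \theta_n g_n(t) = \condVarSmall{\shortRate_n(t)}{0}$ is in fact more detailed than the paper's proof, which stops at ``combining this result with $b_n(t)$ \ldots\ and simplifying the terms''; the only cosmetic difference is that the paper carries the $x_n(0)\expPower{-\theta_n t}$ term in $b_n$ explicitly, but those contributions cancel in $b_n' + \theta_n b_n$ anyway, so your tacit use of $x_n(0)=0$ is harmless.
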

\begin{proof}
  The SDE corresponding to the process in Equation~\eqref{eq:HWSDE1} is
  \begin{align}
    \d\shortRate_n(t)
      &= \left[ \deriv{b_n(t)}{t} + \theta_n b_n(t) - \theta_n \shortRate_n(t) \right]\dt + \vol_x \dW^{\Q_{\shortRate}}(t). \label{eq:HWSDE2}
  \end{align}
  Matching the terms with those in Equation~\eqref{eq:randUnderlyingSDE} yields
  \begin{align}
    \mu_{\shortRate_n}^{\Q_{\shortRate}}(t, \shortRate_n(t))
      &= \deriv{b_n(t)}{t} + \theta_n b_n(t) - \theta_n \shortRate_n(t), \ \ 
    \eta_{\shortRate_n}(t, \shortRate_n(t))
      = \vol_x. \nonumber
  \end{align}
  Using $b_n(t)$ from Equation~\eqref{eq:HWb} and the Leibniz integration rule we write:
  \begin{align}
    \deriv{b_n(t)}{t}
      &= \deriv{f^{\text{M}}(0,t)}{t} + x_n(0) \theta_n \expPower{-\theta_n t} + \vol_x^2 B_n(0,t) \expPower{-\theta_n t}, \ 
    B_n(s,t)
      \ldef \frac{1}{\theta_n}\left( 1 - \expPower{-\theta_n (t-s)} \right). \nonumber
  \end{align}
  The final form of the drift is obtained by combining this result with $b_n(t)$ from Equation~\eqref{eq:HWb} and simplifying the terms.
\end{proof}

With the HW dynamics from Proposition~\ref{prop:randUnderlyingSDEHW}, it is possible to extend the generic result in Proposition~\ref{prop:randDynamics} to obtain the rHW dynamics.
This result is given in Corollary~\ref{crllry:rHWDynamics}.

\begin{crllry}[rHW dynamics] \label{crllry:rHWDynamics}
  Let the marginal distribution of short rate $\shortRate(t)$ be as in Definition~\ref{def:randPDF}.
  Let the dynamics of $\shortRate_n(t)$ HW dynamics from Equation~\eqref{eq:HWSDE1}, and with density from Equation~\eqref{eq:HWPDF}. 
  Then, the rHW dynamics of $\shortRate(t)$ are given as in Equation~\eqref{eq:randSDE}, with
  \begin{align}
    \mu_{\shortRate}^{\Q_{\shortRate}}(t, y)
      &= \sum_{n=1}^{\NMix}\left[\deriv{f^{\text{M}}(0,t)}{t} + \theta_n f^{\text{M}}(0,t) - \theta_n y + \condVarSmall{\shortRate_n(t)}{0} \right]\Lambda_n^{\Q_{\shortRate}}(t,y), \label{eq:rHWDrift} \\
    \eta_{\shortRate}(t, y)
      &= \vol_x, \label{eq:rHWDiffusion}
  \end{align}
  with weighting function $\Lambda_n^{\Q_{\shortRate}}(t,y)$ from Equation~\eqref{eq:randSDEHelper}.
  Hence, the diffusions of $\shortRate(t)$ and $\shortRate_n(t)$ are equivalent, whereas the drift of $\shortRate(t)$ is state dependent in a non-linear way.
\end{crllry}
\begin{proof}
  The result is a direct consequence of Propositions~\ref{prop:randDynamics} and~\ref{prop:randUnderlyingSDEHW}.
\end{proof}

\begin{rem}[Numerical stability of $\Lambda_n^{\Q_{\shortRate}}(t,y)$]
  The term $\Lambda_n^{\Q_{\shortRate}}(t,y)$ from Equation~\eqref{eq:randSDEHelper} can in some occasions be numerically instable due to overflows.
  From the normality of $\shortRate_n(t)$, see Equation~\eqref{eq:HWPDF}, it is possible to write
  \begin{align}
    \omega_n  f_{\shortRate_n(t)}^{\Q_{\shortRate}}(y)
      &= \omega_n \frac{1}{\sqrt{\condVarSmall{\shortRate_n(t)}{s}}\sqrt{2\pi}} \expBrace{-\frac{\left( y - \condExpSmallGeneric{\shortRate_n(t)}{s}{\Q_{\shortRate}}\right)^2}{2 \condVarSmall{\shortRate_n(t)}{s}}}
      \rdef \expPower{ \gamma_n^{\Q_{\shortRate}}(t, y) }. \nonumber 
  \end{align}
  Apply this to $\Lambda_n^{\Q_{\shortRate}}(t,y)$ from Equation~\eqref{eq:randSDEHelper}:
  \begin{align}
    \Lambda_n^{\Q_{\shortRate}}(t,y)
       &= \frac{\omega_n  f_{\shortRate_n(t)}^{\Q_{\shortRate}}(y)}{\sum_{i=1}^{\NMix} \omega_i  f_{\shortRate_i(t)}^{\Q_{\shortRate}}(y)} 
       = \frac{\expPower{ \gamma_n^{\Q_{\shortRate}}(t, y) }}{\sum_{i=1}^{\NMix} \expPower{ \gamma_i^{\Q_{\shortRate}}(t, y) }}
       = \pderiv{}{\gamma_n^{\Q_{\shortRate}}(t, y)} \left( \ln \sum_{i=1}^{\NMix} \expPower{ \gamma_i^{\Q_{\shortRate}}(t, y) }\right), \nonumber
  \end{align}
  which can be recognized as the softmax function, which is the gradient of the log-sum-exp function.
  The following manipulation can be done to guarantee numerical stability of $\Lambda_n^{\Q_{\shortRate}}(t,y)$:
  \begin{align}
    \Lambda_n^{\Q_{\shortRate}}(t,y)
       &= \frac{C \expPower{ \gamma_n^{\Q_{\shortRate}}(t, y) }}{C\sum_{i=1}^{\NMix} \expPower{ \gamma_i^{\Q_{\shortRate}}(t, y) }} 
       = \frac{ \expPower{ \gamma_n^{\Q_{\shortRate}}(t, y) + \ln (C)}}{\sum_{i=1}^{\NMix} \expPower{ \gamma_i^{\Q_{\shortRate}}(t, y) + \ln (C)}}, \nonumber
  \end{align}
  where choosing $\ln(C) \ldef - \max_i \gamma_i^{\Q_{\shortRate}}(t, y)$ avoids overflows in the computation.
\end{rem}

\subsection{Swaption pricing for model calibration} \label{sec:rHWSwaptionPricing}

An efficient pricing function is required to calibrate the rHW model to swaptions.
For the underlying dynamics, swaptions can be priced semi-analytically.
See~\ref{app:HWSwaption} for more details.
Combining this with the pricing equation from Theorem~\ref{thrm:fastPricing} leads to Corollary~\ref{crllry:rHWSwaption}, which yields semi-analytic swaption pricing under the rHW model, such that this beneficial property of the underlying dynamics is preserved.

\begin{crllry}[rHW swaption pricing formula] \label{crllry:rHWSwaption}
  Assume the set-up of Theorem~\ref{thrm:fastPricing}.
  Consider a swaption with expiry date $T_M \leq T_0$, where $T_0$ is the first swap reset date.
  The swap is a unit notional swap starting at $T_0$, maturing at $T_m$ with intermediate payment dates $T_1 < T_2 < \ldots < T_{m-1} < T_m$, with strike $\strike$, and $\swapType = -1$ for a payer swap, $\swapType = 1$ for a receiver swap.

  The value of a swaption under the rHW model $\shortRate(t)$ can be computed as:
  \begin{align}   
    \tradeVal^{\text{Swaption}}_{\shortRate}(t;T_M,T_1,\ldots,T_m,\strike,\swapType)
      &= \sum_{n=1}^{\NMix} \omega_n \tradeVal^{\text{Swaption}}_{\shortRate_n}(t;T_M,T_1,\ldots,T_m,\strike,\swapType), \label{eq:rHWSwaption}
  \end{align}
  where $\tradeVal^{\text{Swaption}}_{\shortRate_n}(\cdot)$ is given in Proposition~\ref{prop:HWSwaption}.
\end{crllry}
\begin{proof}
  This result is a direct consequence of Theorem~\ref{thrm:fastPricing} and Proposition~\ref{prop:HWSwaption}.
\end{proof}

This pricing result is only used at $t=0$ for calibrating the model to European options (see Section~\ref{sec:rHWCalibration}).
The result from Theorem~\ref{thrm:fastPricing} is restricted to derivatives with payoffs that can be written as a function of the state variable at the payoff date.
As such, path-dependent derivatives, e.g., early-exercise products, require alternative valuation methods.
Therefore, we consider a generic simulation-based framework to value any derivative and exposures of those derivatives, see Sections~\ref{sec:rHWSimulation} and~\ref{sec:rHWGenericPricing}.

\subsection{Model calibration} \label{sec:rHWCalibration}

The model must now be calibrated to market data.
As instruments, we use swaptions, which contain information about the correlation between different points on the discount curve~\cite{PuetterRenzitti202011a}.
The market implied swaption volatility surfaces are three dimensional, as volatilities are quoted in the market for $\NExp$ different expiries, $\NTen$ different tenors, and $\NStrike$ different strikes.
We denote the sets of expiries, tenors and strikes used for calibration as $\TExpCalib$, $\TTenCalib$ and $\strikeCalib$, respectively.
The model implied volatilities are shifted Black volatilities, see the remark below for more details.

\begin{rem}[Market implied volatilties]
The swaptions quoted in the market are typically quoted as (shifted) Black volatilities, which are (shifted) Black-Scholes volatilities with zero interest rate.
The value of a unit notional swap can be written in terms of the swap rate $\swapRate(t)$ and annuity $\annuity(t)$, see~\ref{app:HWSwaption} for more details:
\begin{align}
  &\tradeVal^{\text{Swap}}(t;T_1,\ldots,T_m,\strike,\swapType)
    = \swapType \annuity(t) \left[ \strike - \swapRate(t) \right], \nonumber \\
  &\swapRate(t)
    \ldef \frac{\zcb(t,T_0) - \zcb(t,T_m)}{\annuity(t)}, \ \
  \annuity(t)
    \ldef \sum_{k=1}^m \dct_k \zcb(t,T_k), \ \
  \dct_k
    \ldef T_k - T_{k-1}. \nonumber
\end{align}
For the corresponding swaption, we change to the measure associated with annuity $\annuity(t)$:
\begin{align}
  \tradeVal^{\text{Swaption}}(t;T_M,T_1,\ldots,T_m,\strike,\swapType)
    &= \condExpSmall{\expPower{- \int_t^{T_M} \shortRate(s) \ds}\maxOperator{ \swapType \annuity(T_M) \left[ \strike - \swapRate(T_M) \right]} }{t} \nonumber \\
    &=  \annuity(t) \condExpSmallGeneric{\maxOperator{\swapType \left[ \strike - \swapRate(t) \right]} }{t}{\annuity}. \nonumber
\end{align}
When assuming that the swap rate $\swapRate(T_M)$ is log-normally distributed, the swaption can be priced using the Black 76 formula.
However, with interest rates possibly being negative, we shift the distribution to avoid this issue.
Therefore, we assume that the swap rate $\swapRate(T_M)$ is shifted lognormal under the annuity measure, i.e.,
\begin{align}
  \d \swapRate(t)
    &= \vol \left( \swapRate(t) + \shift \right) \dW(t). \nonumber
\end{align}
The volatilities associated with this shifted lognormal model are called shifted Black volatilities:
\begin{align}
  \tradeVal^{\text{Swaption}}(t;T_M,T_1,\ldots,T_m,\strike,\swapType)
    &=  \annuity(t) \condExpSmallGeneric{\maxOperator{\swapType\left[ \strike - \swapRate(T_M) \right]} }{t}{\annuity} \nonumber \\
    &= \annuity(t) \cdot \bs(t, T, S(t) + \shift, \strike + \shift, 0, \vol, -\swapType), \nonumber
\end{align}
where
\begin{align}
  &\bs(t, T, S, \strike, \shortRate, \vol, \optType)
    = \optType \left[ S \normCDF(\optType d_1) - \strike \expPower{-\shortRate (T - t)} \normCDF(\optType d_2)  \right], \nonumber \\
  &d_1
    \ldef \frac{\ln \left( \frac{S}{\strike}\right) + \left(r + \half \vol^2\right) (T-t)}{\vol \sqrt{T - t}}, \ 
  d_2
    \ldef d_1 - \vol \sqrt{T - t}, \ 
  \optType
    \ldef \left\{
       \begin{array}{cc}
    	1, & \text{call}, \\
    	-1, & \text{put}.
       \end{array}
       \right. \nonumber
\end{align}
\end{rem}

The result from Corollary~\ref{crllry:rHWSwaption} gives semi-analytic swaption pricing, which allows for an efficient calibration phase.
When calibrating a model for $\xva$, multiple expiry-tenor combinations need to be included to end up with a useful model to compute exposure profiles.
The ability of the rHW model to be calibrated to market smile and skew is demonstrated in~\cite{Grzelak202211}, where the model is calibrated to the swaption implied volatilities for a single expiry-tenor pair.
Here, we extend this to a multiple expiry-tenor calibration through a HW model with piece-wise constant model volatility.

First, recall the calibration of HW dynamics.
Due to the high dimensionality of the market data ($\NExp \times \NTen \times \NStrike$), this model cannot be calibrated perfectly to the whole volatility surface.
Hence, particular strips of the volatility surface need to be emphasized.
Using the co-terminal strip (also known as the counter diagonal) is a common choice in an $\xva$ context, as co-terminal swaptions match the amount of counterparty credit risk as seen in the exposure profiles used for $\xva$~\cite{PuetterRenzitti202011a}.
This strip is specified by the combination of expiries and tenors that sum to the same swaption maturity $\TCot$.
The mean-reversion parameter can be used to get a global fit to the ATM instruments for all expiry-tenor combinations of the volatility surface.
For more details on the HW calibration, see~\ref{app:HWCalibration}.

\subsubsection{Moneyness, smile and skew} \label{sec:marketConcepts}
Moneyness, smile and skew are crucial when calibrating a model to market implied volatilities.
Hence, we summarize them one by one.

\paragraph{Moneyness} 
Moneyness indicates the relative position of the current strike $\strike$ or price w.r.t. the at-the-money (ATM) strike $\strikeATM$.
For strikes below the ATM strike, i.e., $\strike < \strikeATM$, puts are out of the money (OTM) and calls are in the money (ITM).
For strikes above the ATM strike, i.e., $\strike > \strikeATM$, puts are ITM and calls are OTM.
See Figure~\ref{fig:CallPutMoneyness} for a visualization.
\begin{figure}[H]
  \centering
  \includegraphics[width=0.45\linewidth]{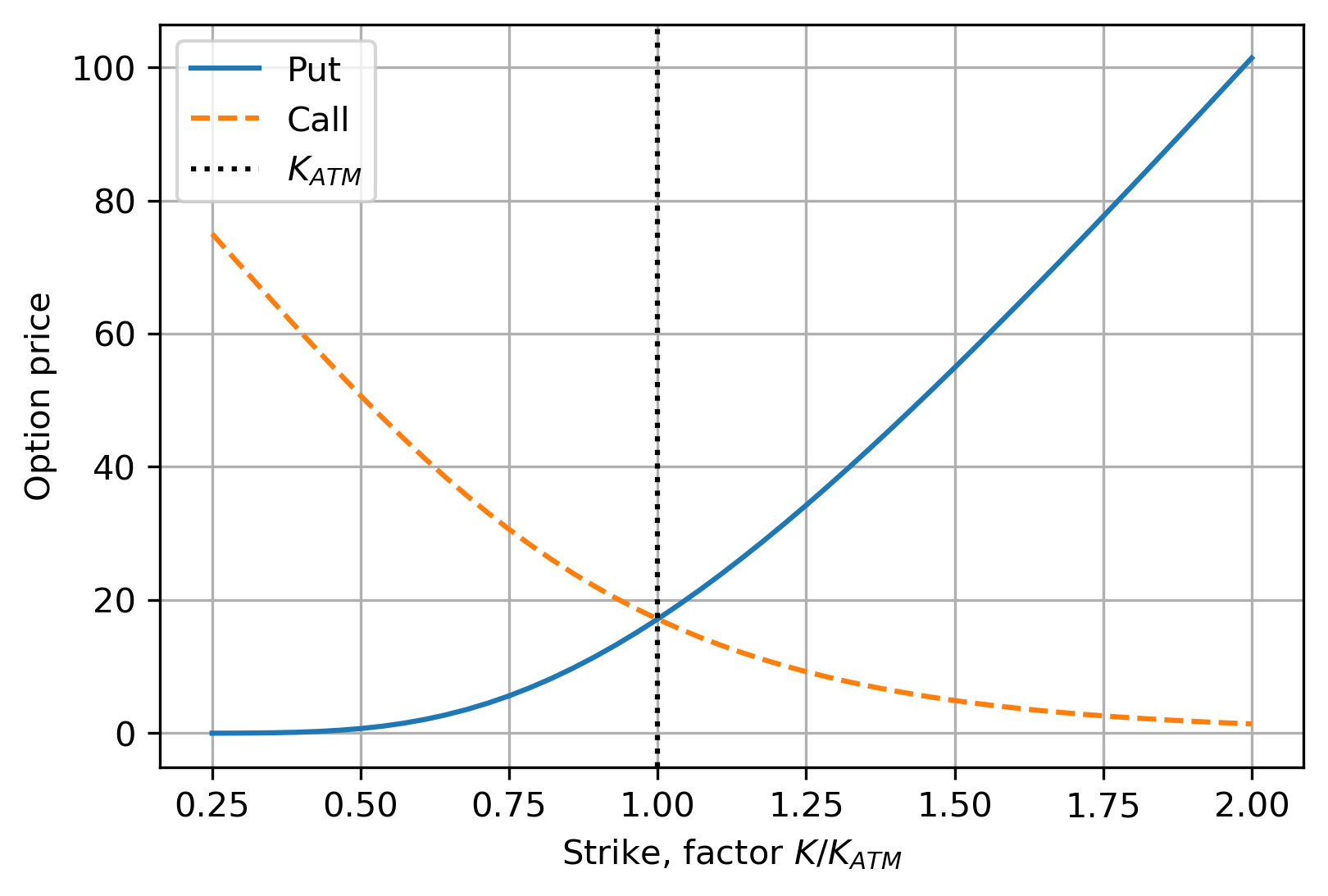}
  \caption{European option at $t = 0$ on a stock $\stock$ that follows a GBM dynamics, with $S(0) = 100$, $\shortRate = 0.03$, $\vol = 0.25$, and $T = 3$.}
  \label{fig:CallPutMoneyness}
\end{figure}

\paragraph{Smile} 
Volatility smile occurs when a higher probability is attributed to extreme events, represented by fatter tails of the market-implied distribution than implied by a lognormal distribution.
For a fixed option maturity, implied volatilities are lowest at the ATM strike and rise when the underlying is further ITM/OTM.
A higher demand for these options away from the ATM point means the market assigns a higher probability to extreme events.
This higher probability results in higher prices, such that the implied volatilities are also higher.
The market smiles are typically less pronounced as the option maturity increases.
Asymmetry of the smile tends to increase during periods of market turbulence.

Since OTM options are typically more liquid than ITM options, the former are often used for model calibration.
The OTM options imply an implicit weighting when using the option value in the calibration objective function, as OTM options are cheapest, see Figure~\ref{fig:CallPutMoneyness}.
 
\paragraph{Skew} 
Volatility skew is present when implied volatility goes down as the strike increases.
This phenomenon is caused by market participants' willingness to `overpay' for lower strike options.
In other words, investors have market concerns and buy puts to compensate for perceived risks.
Market implied volatility is the market's forecast of a likely movement in underlying.
In this case, more volatility is assigned to the downside than the upside, which is a possible indicator that downside protection is more valuable than upside speculation (for the stock market).

In terms of the implied distribution tails, the left tail is heavier than the lognormal distribution, and the right tail is less heavy.
Skew results from a negative correlation between the underlying and its volatility.
As the underlying moves down (up), the volatility tends to move up (down), making even greater (smaller) declines (increases) possible.
This mechanism explains the heavy left and thin right implied distribution tail.

\subsubsection{Market data} \label{sec:rHWCalibrationMarketData}

The flexibility of the rHW model allows more market data to be included in the calibration than for the HW model.
However, calibrating to the whole market surface is still not possible.
Once more, we need to consider what parts of the volatility surface are vital for calibration and where to put the emphasis.

Whereas the HW model can only calibrate to a single strike, the rHW model can capture the market skew and smile.
Hence, it is crucial to understand what information is captured in the wings of the volatility surface.
The left wing, i.e., $\strike < \strikeATM$, gives information about how valuable downward protection is.
The right wing, i.e., $\strike > \strikeATM$, provides information about the market perception of the upward potential.
Preferably, all this information is captured through calibration.
The natural extension of the HW calibration is to use the full range of strikes for each co-terminal expiry-tenor pair rather than a single point.

\subsubsection{Calibration procedure} \label{sec:rHWCalibrationProcedure}

For the rHW model calibration, we use a normal distribution $\N (\hat{a}, \hat{b}^2 )$ to generate the quadrature points, which have proven successful in calibrating the $1y$ market swaption smile for multiple tenors~\cite{Grzelak202211}.
In this case, the rHW model has one additional degree of freedom compared to the HW model.
We obtain $\NMix$ different models $\shortRate_n(t)$ with mean reversion $\theta_n$ and corresponding weights $\omega_n$.

We calibrate the rHW model to co-terminal swaptions, using all strikes for each expiry-tenor combination rather than just the ATM point.
The swaptions are valued using the result from Corollary~\ref{crllry:rHWSwaption}.
The calibration steps are summarized in Algorithm~\ref{algo:rHWCalib}.
During the calibration, we only consider OTM swaptions for a natural `weighting' in the target function in terms of the option values.
Let $\strike_{\ATM}$ denote the ATM strike, which is the swap rate such that the payer and receiver swap both have zero value.
Then a call (put) swaption is a receiver (payer) swaption, which is an option to enter a swap where the floating (fixed) leg is paid.
This swaption is ITM when $\strike > \strike_{\ATM}$ ($\strike < \strike_{\ATM}$) and OTM when $\strike < \strike_{\ATM}$ ($\strike > \strike_{\ATM}$).
For the HW model, the swaption pricing with Jamshidian's decomposition, see Proposition~\ref{prop:HWSwaption}, is done using ZCB calls (puts).
The Black-Scholes implied volatility is then computed with a Black-Scholes put (call) option.

\begin{algorithm}[ht!]
    \footnotesize
    \SetAlgoLined % For and if blocks end with keyword 'end'
    \DontPrintSemicolon % Some LaTeX compilers require you to use \dontprintsemicolon instead

    %\SetKw{KwCommand}{text to be printed}} creating custom language keywords
    %\SetKwFunction{KwFunctionCommand}{text to be printed} creating custom function keywords
    %\SetKwData{KwDataCommand}{text to be printed}} creating custom text different from the default

    %\TitleOfAlgo{Insert your title here}
    %\KwData{Possible input data}
    %\KwResult{Possible results}
    \KwIn{Market implied volatility surface, co-terminal maturity $\TCot$, number of quadrature points $\NMix$}
    \KwOut{Calibrated parameters $\hat{a}$, $\hat{b}$ and $\vol_{x}(t)$}
    \BlankLine
	Initialize parameters $\hat{a}$ and $\hat{b}$ \;
    Choose the set of calibration instruments, and define calibration objective function $\objFun(\hat{a}, \hat{b}, \vol_{x})$ \;

    Set calibration tolerance $\varepsilon$ \;
    Initialize the volatility pillar dates\;

    \While{$\objFun(\hat{a}, \hat{b}, \vol_{x})  > \varepsilon$}{
        Numerical optimization step to update $\hat{a}$ and $\hat{b}$ \;
        Compute the Gauss-quadrature weights $\left\{\omega_n,\theta_n\right\}_{n=1}^{\NMix}$ corresponding to $\N (\hat{a}, \hat{b}^2 )$, see~\cite{Grzelak202211} \;     
        Initialize $N$ HW processes $\shortRate_n$ with mean reversion $\theta_n$ as per Equation~\eqref{eq:HWSDE1} \;
        Bootstrap calibration of piece-wise constant model volatility $\vol_{x}$ to get a good ATM fit to the co-terminal swaption strip, see Algorithm~\ref{algo:HWCalib} in~\ref{app:HWCalibration} but then with valuation from Corollary~\ref{crllry:rHWSwaption} to compute the model value\;
        \ForEach{calibration instrument $\tradeValMkt_{i,j,k}$}{
            \For{$n \in \{1,2,\ldots,\NMix\}$}{
                Compute $\tradeValMdl_{\shortRate_n;i,j,k}$ for $\shortRate_n$ as in Proposition~\ref{prop:HWSwaption} \;
            }
            Compute $\tradeValMdl_{\shortRate;i,j,k}$ using the valuation from Corollary~\ref{crllry:rHWSwaption} using $\tradeValMdl_{\shortRate_n;i,j,k}$ \;
            Convert $\tradeValMdl_{\shortRate;i,j,k}$ to model implied volatility $\impliedVolMdl_{i,j,k}$ (using Black's formula) \;
        }
        Compute error metric $\objFun(\hat{a}, \hat{b}, \vol_{x})$ to assess the model fit to the market data \;
    }
    %Compute required output metrics and visualize results \;
    \caption{rHW calibration algorithm}
    \label{algo:rHWCalib}
\end{algorithm}

During the calibration, one must be cautious with negative mean-reversion ($\theta_n$ following from the quadrature points).
These negative values are bad for extrapolation, as model-implied volatilities will no longer decay with swaption expiry and length but can increase ~\cite{PuetterRenzitti202011b}.
Clearly, as exposures also have a decaying pattern, this decay is desired for $\xva$ calculations.
Hence, one must be cautious when using negative mean-reversion values for the $\shortRate_n(t)$ dynamics.

\subsection{Simulating the rHW dynamics} \label{sec:rHWSimulation}

The short-rate $\shortRate_n(t)$ from Equation~\eqref{eq:HWSDE1Integrated} can be simulated in an exact manner, such that large time steps can be made from time $s$ to $t$.
Ideally, there is a similar result for the rHW dynamics from Corollary~\ref{crllry:rHWDynamics}.
Hence, we integrate to obtain an expression for $\shortRate(t)$ conditional on $\shortRate(s)$ for $s < t$, i.e.,
\begin{align}
  \shortRate(t)
    &= \shortRate(s) + \int_s^t \mu_{\shortRate}^{\Q_{\shortRate}}(u, \shortRate(u)) \du + \sigma_x \int_s^t \dW^{\Q_{\shortRate}}(u), \nonumber \\
  \int_s^t \mu_{\shortRate}^{\Q_{\shortRate}}(u, \shortRate(u)) \du 
    &= f^{\text{M}}(0,t) - f^{\text{M}}(0,s) \nonumber \\
    &\quad + \int_s^t \sum_{n=1}^{\NMix}\left[\theta_n f^{\text{M}}(0,u) - \theta_n \shortRate(u) + \condVarSmall{x_n(u)}{0}\right]\Lambda_n(u,\shortRate(u)) \du.\nonumber 
\end{align}
Due to the state-dependent nature of the integrated drift, it is impossible to make large time steps in an analytic fashion.

Therefore, we use the Euler-Maruyama discretization scheme for the Monte-Carlo simulation, i.e., for $t_{i}<t_{i+1}$
\begin{align}
  \shortRate(t_{i+1})
    &= \shortRate(t_{i}) + \mu_{\shortRate}^{\Q_{\shortRate}}(t_i, \shortRate(t_i)) \Delta t + \eta_{\shortRate}(t, \shortRate(t_i)) \sqrt{\Delta t} Z, \ \ \shortRate(0) = f^{\text{M}}(0, 0),\label{eq:rHWEuler}
\end{align}
where $\Delta t = t_{i+1} - t_{i}$, $Z \sim \N(0,1)$ and with drift and diffusion from Corollary~\ref{crllry:rHWDynamics}.
Here, for the drift and diffusion terms, $\shortRate(u)$ is frozen at the initial value at $t_i$.
Hence, many small time steps are needed to avoid a sizable discretization error and, hence, to get a proper numerical approximation to the SDE from Equation~\eqref{eq:randSDE}.

Alternatively to the Euler-Maruyama scheme, one could attempt to use the predictor-corrector method~\cite{KloedenPlaten199904,HunterJackelJoshi200107}, which is commonly used for efficient simulation of the Libor Market Model through a two-step drift approximation.
Finally, one could use machine learning techniques to `learn' the large time steps offline and then reproduce them online, e.g., using the 7-league scheme~\cite{LiuGrzelakOosterlee202202}.
For the current work, we stick to the Euler-Maruyama scheme.

\begin{rem}[Consistency of calibration and simulation]
    The dynamics used for $\tradeVal_{\shortRate_n}(t;T)$ in the result from Theorem~\ref{thrm:fastPricing} are based on the dynamics from Equation~\eqref{eq:randUnderlyingSDE} with $\M = \Q_{\shortRate_n}$.
    On the other hand, for simulation purposes, we use $\M = \Q_{\shortRate}$ in the drift of Equation~\eqref{eq:randSDE} for $\d\shortRate(t)$ as per Proposition~\ref{prop:randDynamics}.
    The use of different measures raises the question of how $\mu_{\shortRate_n}^{\Q_{\shortRate_n}}$ and $\mu_{\shortRate_n}^{\Q_{\shortRate}}$ are related.
    Since the former is used for calibration and the latter for simulation, consistency between the two is essential.
    Due to the uniqueness of the risk-neutral measure, $\dW^{\Q_{\shortRate_n}}(t) = \dW^{\Q_{\shortRate}}(t)$, such that $\mu_{\shortRate_n}^{\Q_{\shortRate_n}} = \mu_{\shortRate_n}^{\Q_{\shortRate}}$.
    Hence, the calibration and simulation are consistent.
\end{rem}

\subsection{Generic derivative pricing} \label{sec:rHWGenericPricing}
To price derivatives and exposures under the rHW dynamics, we use a Monte-Carlo framework using the simulation scheme from Section~\ref{sec:rHWSimulation}.
We simulate $M_{\tradeVal}$ paths of the dynamics as per Equation~\eqref{eq:rHWEuler}.
Then, the goal is to value along all these simulated trajectories, for which (future) ZCB values are required to compute the (future) cash flows.
Let $\zcb_{\shortRate}(t,T; x)$ denote the ZCB conditional on reaching $\shortRate(t)=x$, i.e., 
\begin{align}
  \zcb_{\shortRate}(t,T; x)
     &= \E\left[\left.\expPower{- \int_{t}^{T} \shortRate(s) \ds} \right| \shortRate(t) = x\right]. \label{eq:zcbMC1}
\end{align}

For the HW model we can compute $\zcb_{\shortRate_n}(t,T; x)$ analytically conditional on reaching future state $\shortRate_n(t)=x$, $t\geq 0$.
However, for the rHW model, this is not the case, and $\zcb_{\shortRate}(t,T; x)$ needs to be obtained in a different way.
As can be seen from Equation~\eqref{eq:zcbMC1}, the ZCB is a conditional expectation, which could be naively approximated using a nested Monte-Carlo simulation.
To avoid this nested simulation, we propose to do a separate and independent Monte-Carlo simulation with $M_{\zcb}$ paths to obtain a functional approximation of $\zcb_{\shortRate}(t,T;x)$ based on the cross-sectional information of $\shortRate(t)$ at $t>0$ through a least-squares regression.
These regression-based methods lend themselves naturally for $\xva$ calculations, also known as American Monte Carlo~\cite{Green201511}.
For example, an $n$-th order polynomial can be used to regress $\zcb_{\shortRate}(t,T;x)$ on $\shortRate(t)=x$, i.e., find $\beta^{\top} = [\beta_0, \ldots, \beta_n]$ such that
\begin{align}
  \zcb_{\shortRate}(t,T; x)
    &\approx \beta_{0} + \beta_{1} \cdot x + \beta_{2} \cdot x^2 + \cdots + \beta_{n} \cdot x^n
    \rdef \beta^{\top} \cdot \psi(x), \nonumber
\end{align}
where along the path, we can approximate the integral inside the expectation in Equation~\eqref{eq:zcbMC1} numerically using trapezoidal integration.
Given the calibrated ZCB approximations, the pricing of derivatives in a Monte-Carlo framework is as fast as for the HW case where $\zcb_{\shortRate_n}(t,T;x)$ can be obtained analytically conditional on reaching $\shortRate_n(t)=x$, as the evaluation of the regression polynomial is cheap.

For a low number of paths $M_{\zcb}$, the path-wise approximation of the ZCB as input for the regression will likely result in a substantial numerical error.
In this case, we do not cover the full spectrum of $\shortRate(t)$ and consequently only a handful of $\expPower{- \int_{t}^{T} \shortRate(s) \ds} | \shortRate(t) = x$, such that the approximation error in
\begin{align}
  \zcb_{\shortRate}(t,T;x)
     &\approx \left.\expPower{- \int_{t}^{T} \shortRate(s) \ds}\right| \shortRate(t) = x \label{eq:zcbMC2}
\end{align}
is significant due to high variance.
To deal with this, one can use a nested simulation with a low number of nested paths $M_{\text{nested}}\in[10,30]$ as a variance reduction technique~\cite{KrepkiyLakhanyZhang202105}.
Alternatively, use a large number of paths $M_{\zcb}$ to cover a larger spectrum of simulated $\shortRate(t)$, which we use as a regression basis.
We choose to do the latter.

This generic valuation framework can be used to price derivatives in a Monte-Carlo setting, but also extends naturally to compute exposure profiles of derivatives: 
\begin{enumerate}
  \item Identify all the dates $T_k$ for which ZCBs $\zcb_{\shortRate}(t,T_k)$ are required to value the derivative(s) in question in a Monte-Carlo setting.
  \item Identify all important dates $0 = t_0 < t_1 < \ldots < t_n = T$ at which ZCBs are required to value the derivative or exposure profile.
  \item Run a Monte-Carlo simulation with a large number of paths $M_{\zcb}$.
  Include at least all the aforementioned important dates where at each important date $t_i$ and each date $T_k > t_i$ a least-squares regression is done to obtain a polynomial approximation of $\zcb_{\shortRate}(t_i,T_k; x)$.~\footnote{If this amount of dates becomes too large, one can pick a subset of the relevant dates, and interpolate over these spines as done for a regular yield curve that is calibrated to the market at $t=t_0$.}
  \item Run a new, independent Monte-Carlo simulation with $M_{\tradeVal}$ paths, including at least all the aforementioned important dates.
  At each date $t_i$ for which ZCBs are required for derivative valuation, use the regression from the previous step.
\end{enumerate}

\section{Numerical results} \label{sec:results}

We demonstrate the ability of the rHW model introduced in Section~\ref{sec:rHW} to capture the market-implied smile and skew and show the effect on exposures and $\xva$s of swaps and Bermudan swaptions.
Remarkably, whereas in other asset classes the main impact of smile and skew was on tail exposures, we demonstrate that the effect on average exposures and $\xva$ metrics of all IR derivatives considered in this paper are significant.
Furthermore, we will see that the smile impact will be more substantial when considering Bermudan swaptions compared to the linear IR swap derivative.

As market data, we use market implied volatility surfaces as of 02/12/2022 and the corresponding market yield curve, see Figure~\ref{fig:VolSurfaceCoterminal}.
The surfaces show smile on the short end, which transforms into skew over time.
The smile and skew are most significant for the USD data.
Therefore, we only present the USD results in this section.
For EUR results, see~\ref{app:resultsEUR}.
The conclusions that are drawn in this section extend to those results.
\begin{figure}[ht!]
  \centering
  \begin{subfigure}[b]{\resultFigureSize}
    \includegraphics[width=\linewidth]{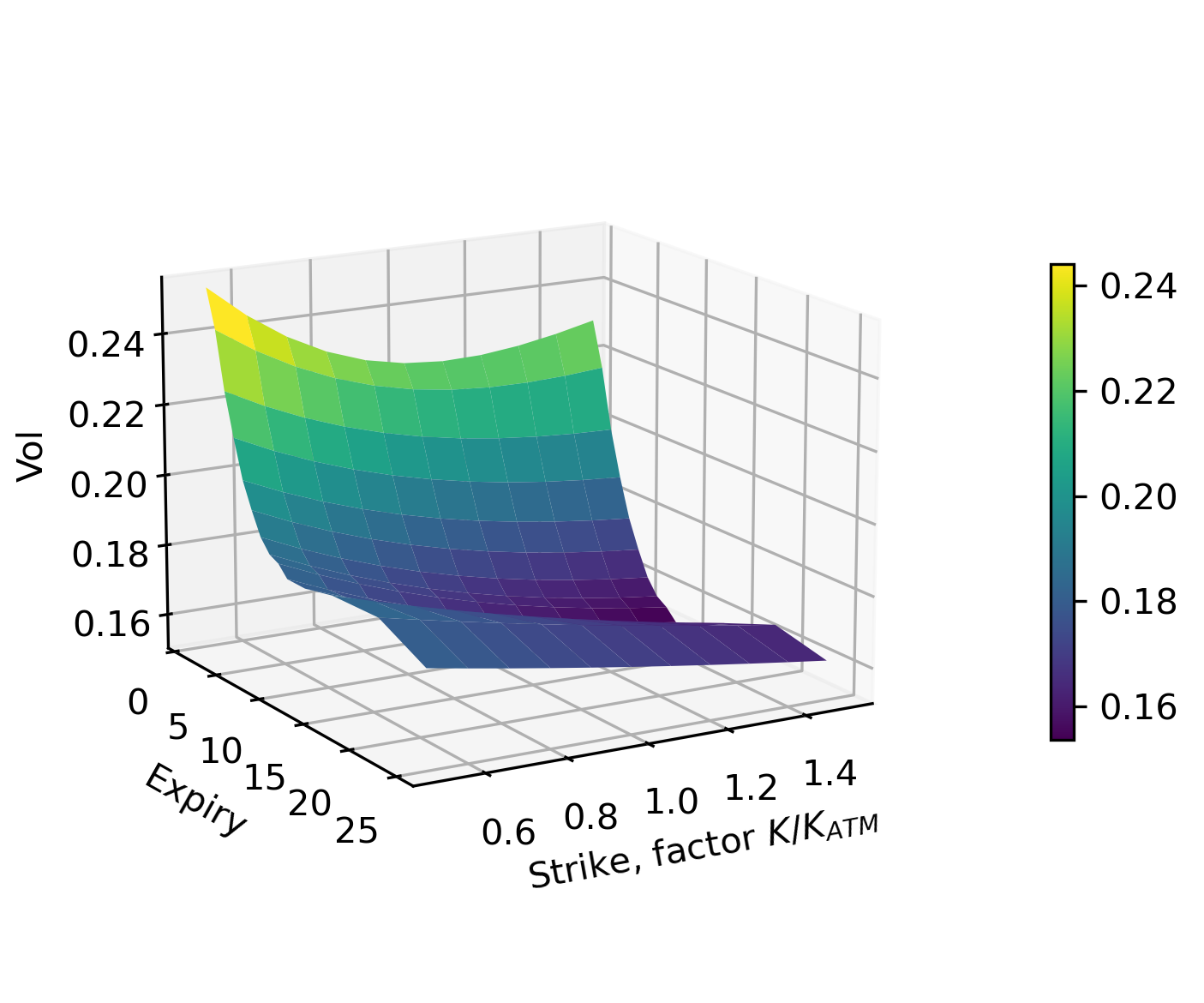}
    \caption{EUR surface.}
    \label{fig:EURVolSurfaceTenor10Y}
  \end{subfigure}
  \begin{subfigure}[b]{\resultFigureSize}
    \includegraphics[width=\linewidth]{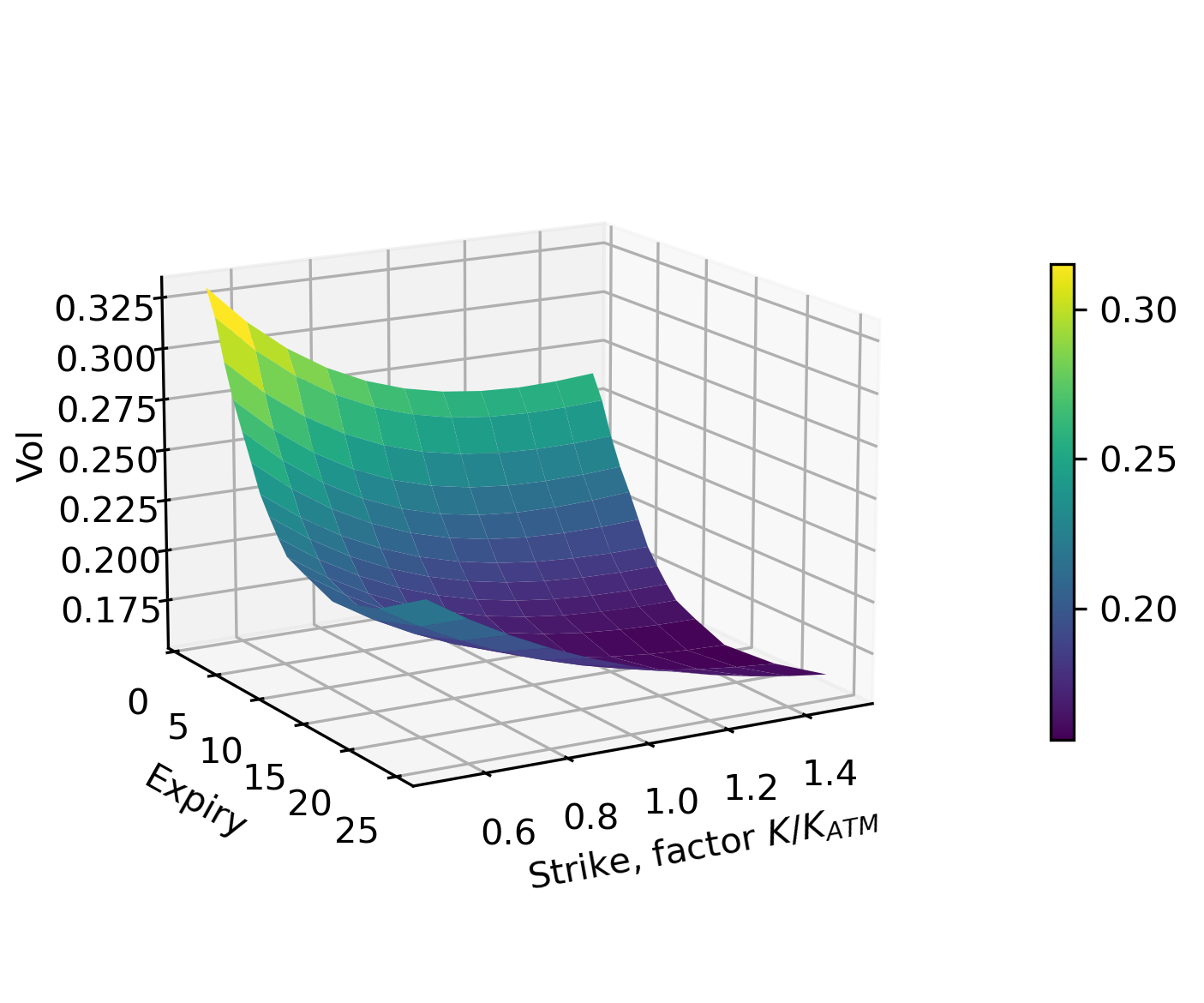}
    \caption{USD surface.}
    \label{fig:USDVolSurfaceTenor10Y}
  \end{subfigure}
  \caption{
  Swaption implied volatility surfaces with a 10Y tenor.
  As both surfaces are constructed using a different shift (EUR uses a shift of $3$, USD a shift of $1$), the surfaces are not directly comparable, especially in terms of level.
  The strike is given as a factor times the at-the-money (ATM) strike $\strikeATM$, e.g., 1.2 means a strike of $1.2\cdot \strikeATM$.
  This ATM strike is different for every tenor-expiry combination.
  }
  \label{fig:VolSurfaceCoterminal}
\end{figure}

In the results that follow, $\NMix=5$ quadrature points are used.
This number has shown to be appropriate when calibrating to market-implied swaption smiles for multiple tenors~\cite{Grzelak202211}.
Mind that different values for $\NMix$ represent different models, not a worse/better approximation, i.e., we are not affected by the convergence to the continuous case as in Equation~\eqref{eq:fastPricingContinuousApprox}.
The lower the value of $\NMix$, the faster the calibration is.
Furthermore, the simulation speed will also be slightly affected as the computational time of the state-dependent drift increases in $\NMix$, see Corollary~\ref{crllry:rHWDynamics}.
The choice of $\NMix=5$ is sufficiently large to have enough flexibility in the model while remaining sufficiently low for a suitable computational speed.

In Section~\ref{sec:resultsCalibration}, the rHW model is calibrated to the strip of co-terminal swaptions.
Given this calibrated model, we simulate example paths in Section~\ref{sec:resultsSimulation} and compare the resulting trajectories and distributions with the HW model. 
In Section~\ref{sec:resultsZCBRegression}, we demonstrate that pricing derivatives with a regression-based ZCB expression is appropriate by showing this for the HW case and comparing it to the analytic benchmark available for this model.
Finally, in Section~\ref{sec:resultsExposures}, we assess the effect of smile and skew on exposure profiles and $\xva$s of IR swaps and Bermudan swaptions using the rHW model.

\subsection{Calibration} \label{sec:resultsCalibration}

We calibrate the rHW model to co-terminal swaptions with $\TCot = 30y$, using all strikes for each expiry-tenor combination, through the calibration approach described in Section~\ref{sec:rHWCalibration}.
For comparison, we calibrate the HW model to the ATM point of the same co-terminal swaptions.
The HW parameters $a_x$ and $\vol_x$ have opposite effects on the model-implied swaption volatility, i.e., for larger values of $a_x$, the implied volatility is lower, whereas the implied volatility is higher for larger values of $\vol_x$.
The calibrated model parameters are presented in Figure~\ref{fig:calibrationUSD}.
The rHW mean reversions $\theta_n$ at the quadrature points $\left\{\omega_n,\theta_n\right\}_{n=1}^{\NMix}$ are significantly larger than that of the HW model, and therefore the same holds for the model volatilities.
The high model volatility values can be explained in a similar way as the opposite effects of the HW mean-reversion and model-volatility parameters on implied volatilities: high mean-reversion values are used in the distribution, so the scaling of the diffusion part needs to be increased to generate sufficient randomness with the model (achieved by increasing the model volatility).

\begin{figure}[ht!]
  \centering
  \begin{subfigure}[b]{0.435\linewidth}
    \includegraphics[width=\linewidth]{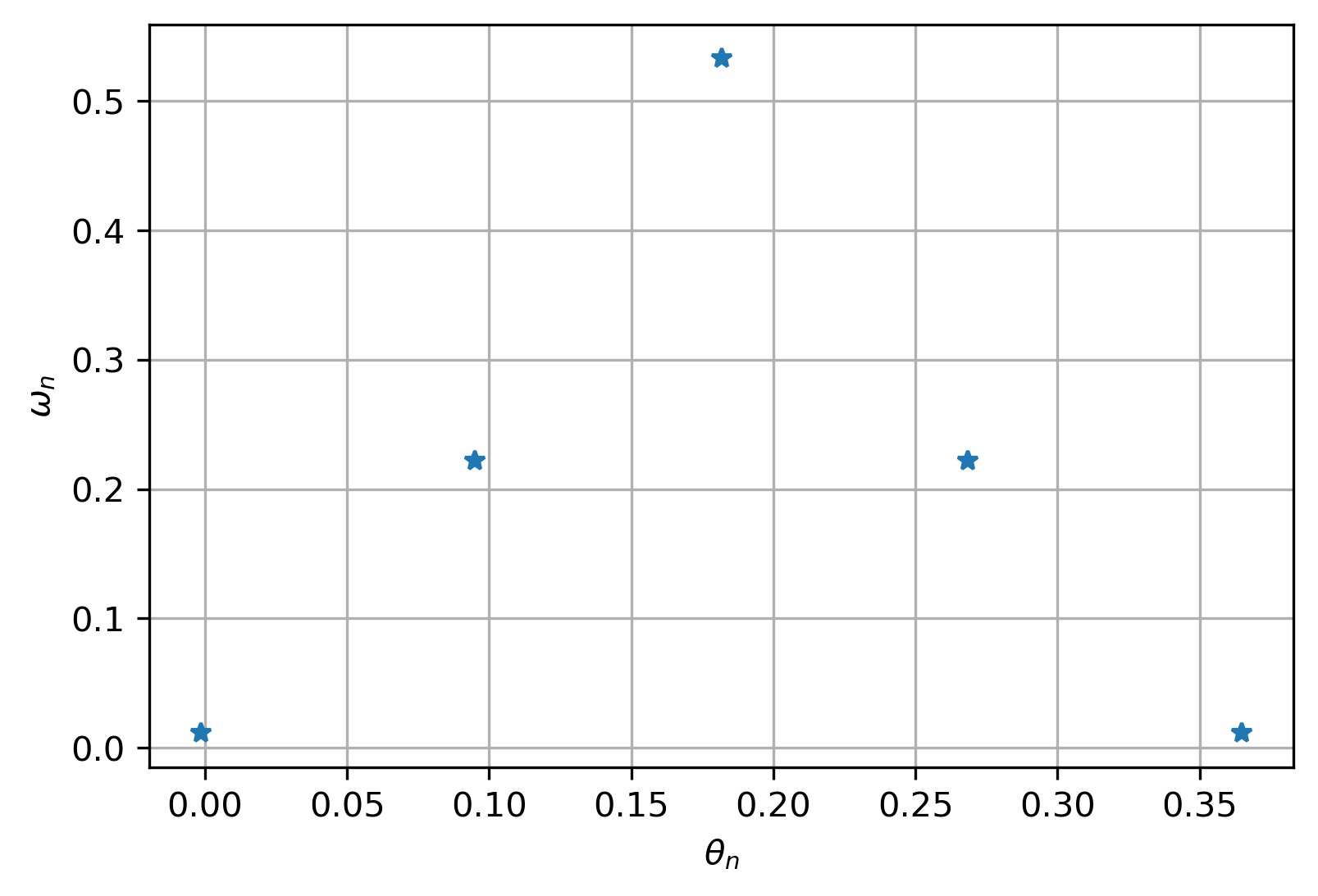}
    \caption{Quadrature points for rHW.}
    \label{fig:quadraturePointsUSD}
  \end{subfigure}
  \begin{subfigure}[b]{0.45\linewidth}
    \includegraphics[width=\linewidth]{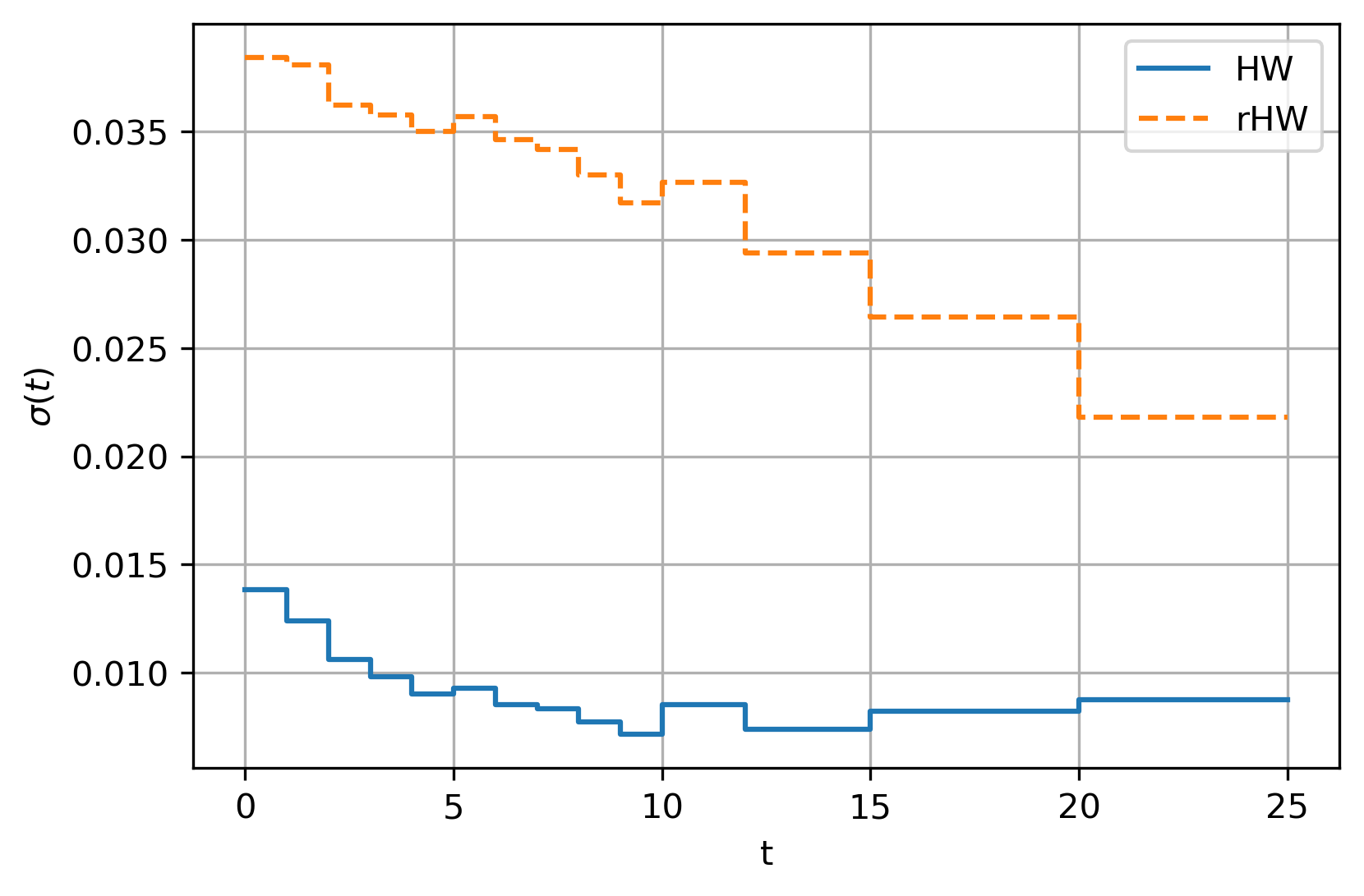}
    \caption{Model volatilities.}
    \label{fig:modelVolatilitiesUSD}
  \end{subfigure}
  \caption{Calibration of the USD market data from Figure~\ref{fig:VolSurfaceCoterminal} for $\NMix=5$ and quadrature points $\left\{\omega_n,\theta_n\right\}_{n=1}^{\NMix}$ for $\N (\hat{a}, \hat{b}^2 )$ with $\hat{a} = 0.181711$ and $\hat{b} = 0.064055$.
    The calibrated HW mean-reversion is $a_x = 0.030228$.}
    %Randomization cot-smiles: a = 0.18171116432084375 and b = 0.06405507604724615
    %rand_hw1f.x (theta) = array([-0.00129227,  0.09487643,  0.18171116,  0.2685459,  0.3647146])
    %rand_hw1f.w (omega) = array([0.01125741, 0.22207592, 0.53333333, 0.22207592, 0.01125741])
  \label{fig:calibrationUSD}
\end{figure}

From the implied swaption volatilities in Figure~\ref{fig:USDVolSlicesNormalRandCotsmilesCalib}, it can be seen that the rHW model is sufficiently flexible to be calibrated well to the smile and skew of the set of co-terminal swaptions.
Clearly, the rHW model outperforms the HW model, which only has a good fit at $\strikeATM$, and the model skew cannot be controlled.
\begin{figure}[ht!]
  \centering
  \begin{subfigure}[b]{0.45\linewidth}
    \includegraphics[width=\linewidth]{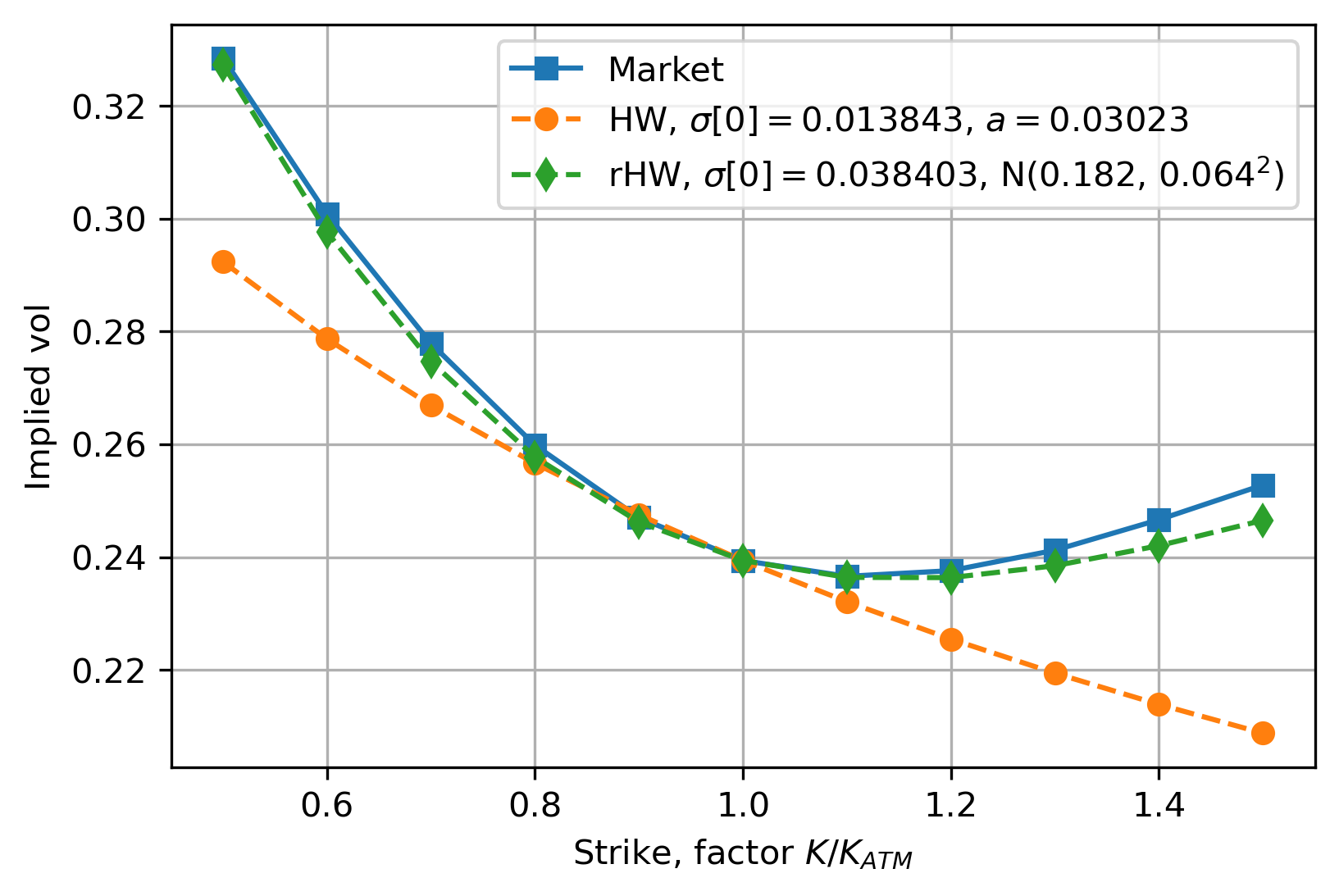}
    \caption{1Y expiry, 29Y tenor.}
    \label{fig:USD1yExpiry29yTenorVolSliceNormalRandCotsmilesCalib}
  \end{subfigure}
  \begin{subfigure}[b]{0.45\linewidth}
    \includegraphics[width=\linewidth]{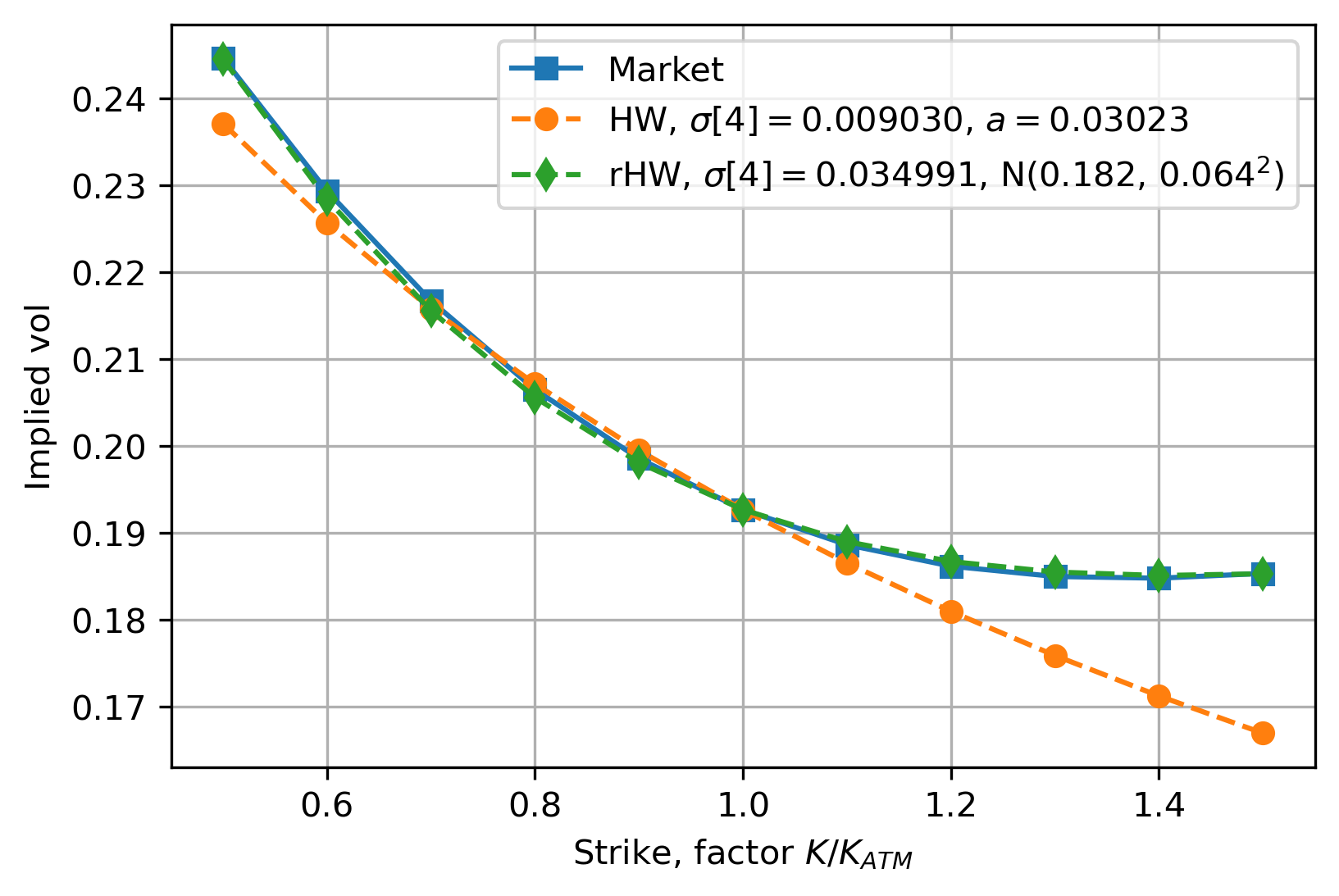}
    \caption{5Y expiry, 25Y tenor.}
    \label{fig:USD5yExpiry25yTenorVolSliceNormalRandCotsmilesCalib}
  \end{subfigure}
  \begin{subfigure}[b]{0.45\linewidth}
    \includegraphics[width=\linewidth]{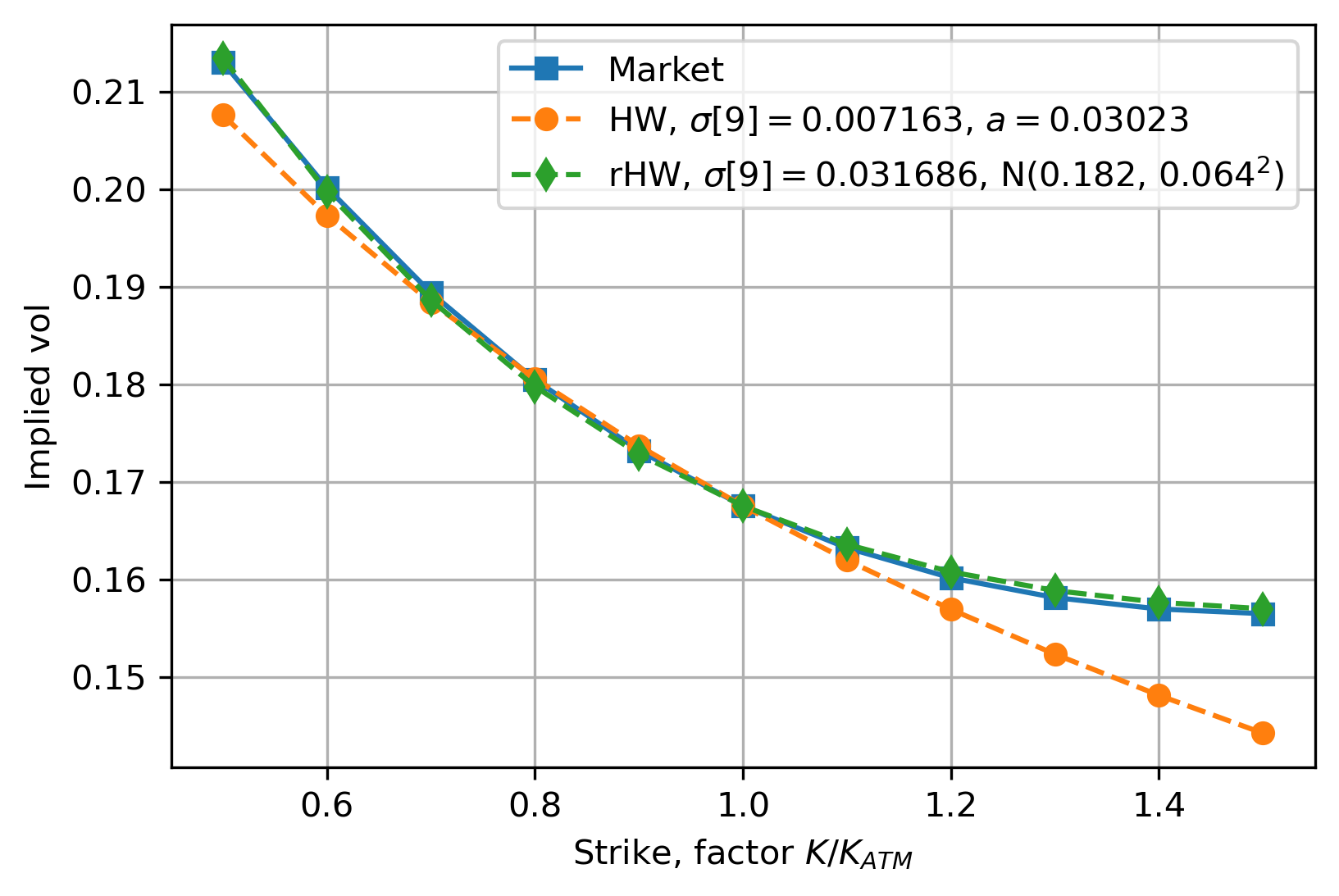}
    \caption{10Y expiry, 20Y tenor.}
    \label{fig:USD10yExpiry20yTenorVolSliceNormalRandCotsmilesCalib}
  \end{subfigure}
  \begin{subfigure}[b]{0.45\linewidth}
    \includegraphics[width=\linewidth]{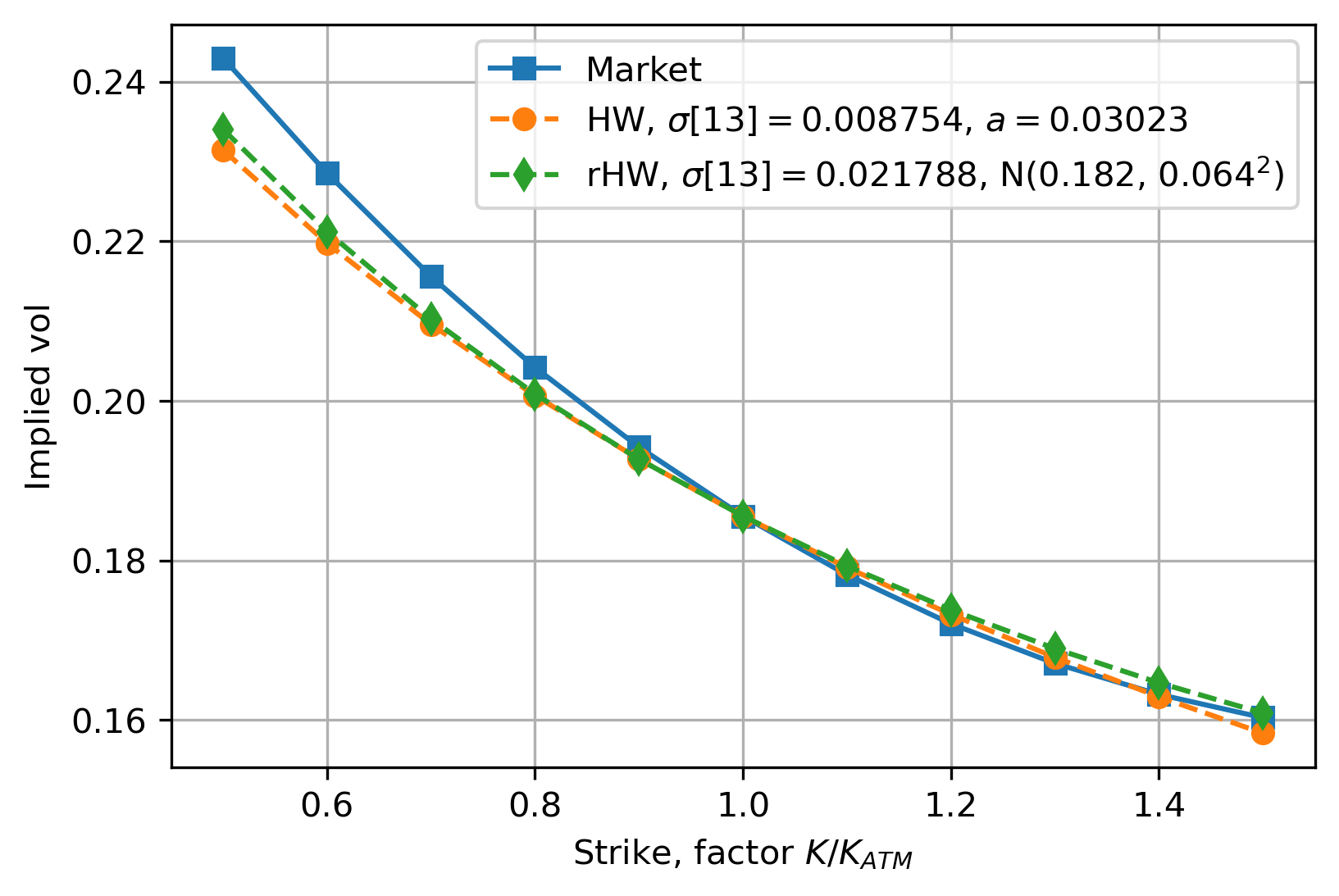}
    \caption{25Y expiry, 5Y tenor.}
    \label{fig:USD25yExpiry5yTenorVolSliceNormalRandCotsmilesCalib}
  \end{subfigure}
  \caption{Market and model swaption implied volatilities for the calibrated parameters from Figure~\ref{fig:calibrationUSD}.}
  \label{fig:USDVolSlicesNormalRandCotsmilesCalib}
\end{figure}

For the non-co-terminal expiry-tenor combinations, the rHW ATM model volatilities are further away from the market volatilities than for the HW model, see Figure~\ref{fig:USDNormalRandCotsmilesCalibImpvolError}.
The good fit for these points with the HW model results from the mean-reversion calibration, which takes into account all the ATM quotes for the entire volatility cube and minimizes the Mean Squared Error for all these points.
If a good ATM fit is required for all expiry-tenor pairs of the volatility surface, the error metric in the calibration procedure from Section~\ref{sec:rHWCalibration} can be updated accordingly by taking into account these extra points.
However, since our numerical results focus on the co-terminal strip, the calibration procedure from Section~\ref{sec:rHWCalibration} is suitable.

\begin{figure}[ht!]
  \centering
  \begin{subfigure}[b]{0.45\linewidth}
    \includegraphics[width=\linewidth]{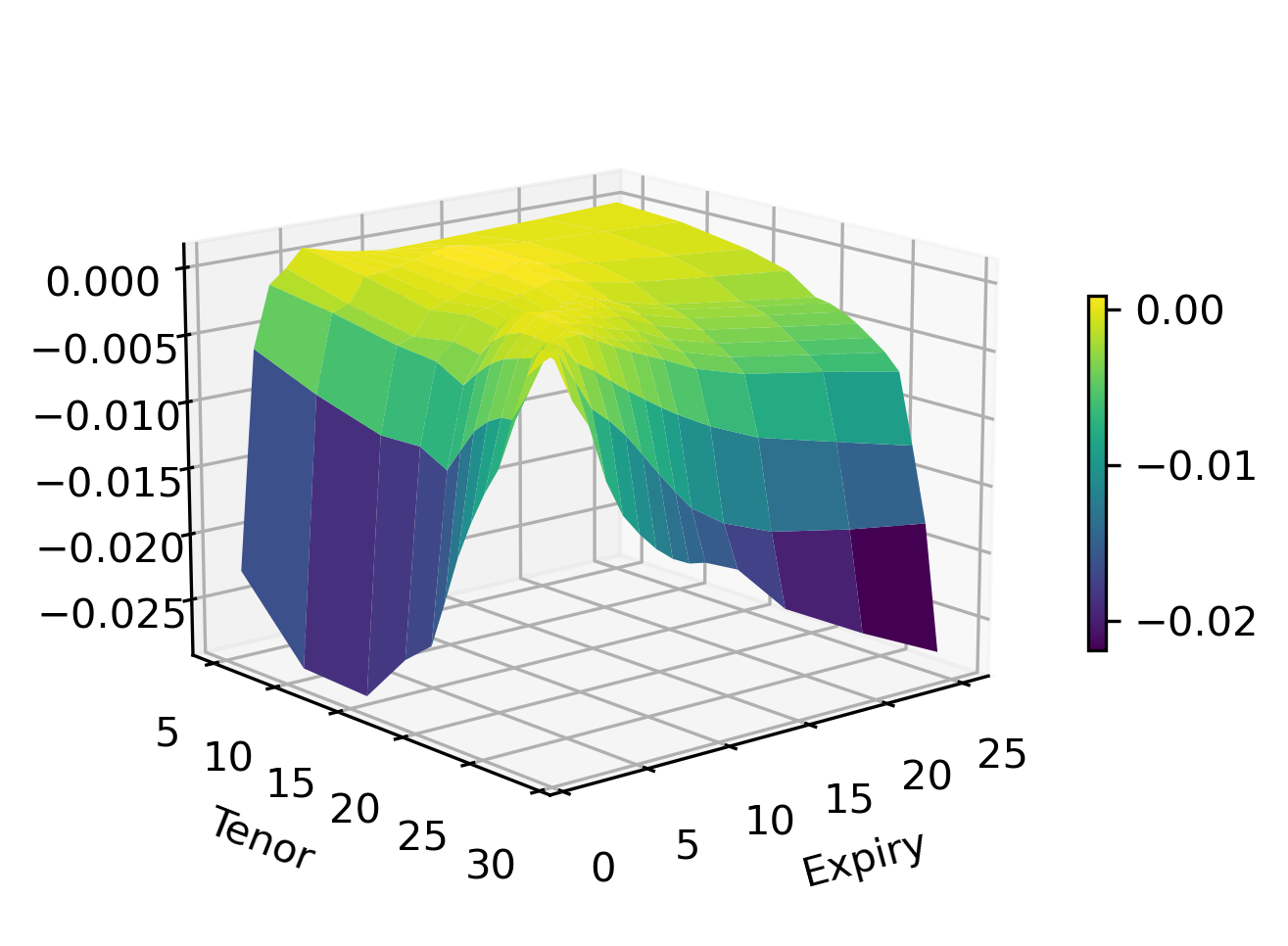}
    \caption{HW model.}
    %\caption{$\vol_{\text{HW}} - \vol_{\text{mkt}}$}
    \label{fig:USDNormalRandCotsmilesCalibImpvolErrorHW}
  \end{subfigure}
  \begin{subfigure}[b]{0.45\linewidth}
    \includegraphics[width=\linewidth]{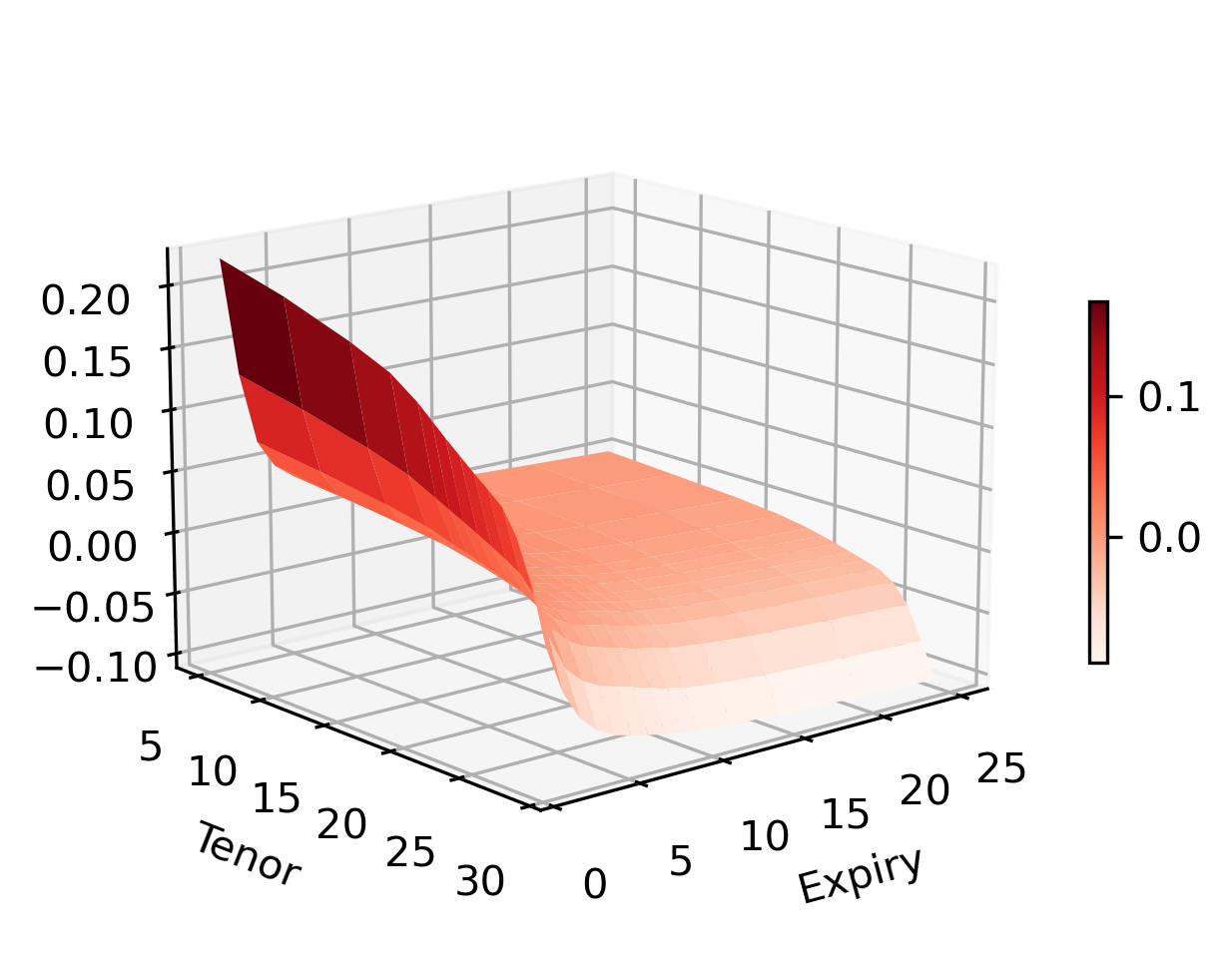}
    \caption{rHW model.}
    %\caption{$\vol_{\text{rHW}} - \vol_{\text{mkt}}$}
    \label{fig:USDNormalRandCotsmilesCalibImpvolErrorrHW}
  \end{subfigure}
    \caption{Implied volatility calibration error for all ATM points.}
  \label{fig:USDNormalRandCotsmilesCalibImpvolError}
\end{figure}

In the results presented so far, we have chosen to calibrate to all co-terminal smiles simultaneously.
As a result, the model fit to the first co-terminal smile is not perfect, see Figure~\ref{fig:USD1yExpiry29yTenorVolSliceNormalRandCotsmilesCalib}.
However, the rHW model is sufficiently flexible to allow for a perfect fit when merely calibrating to this smile.
See Figure~\ref{fig:USDVolSlicesNormalRandInitSmileCalib} for an example.
\begin{figure}[ht!]
  \centering
  \includegraphics[width=0.45\linewidth]{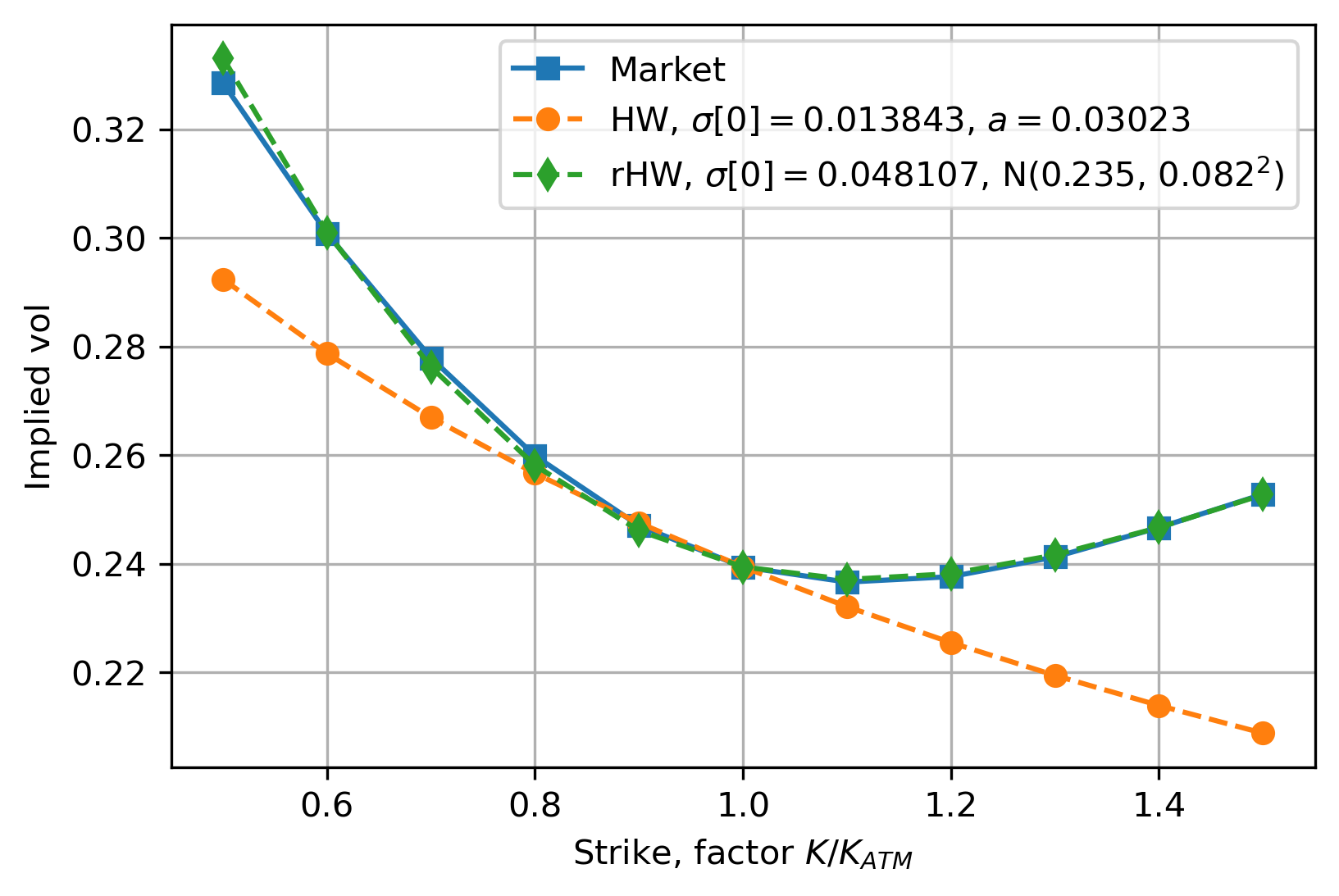}
  \caption{USD calibration to the initial market smile.}
  \label{fig:USDVolSlicesNormalRandInitSmileCalib}
\end{figure}

Mind that this affects the fit for the other co-terminal strips, as well as the fit for other instruments of the volatility surface, as in Figure~\ref{fig:USDNormalRandCotsmilesCalibImpvolError}.
Hence, it is a tradeoff, and one must select which market data is most relevant for a particular application.
For $\xva$ purposes, a proper fit to all co-terminal smiles is desired.

In conclusion, whereas the HW model can only be fitted to the ATM point of all co-terminal smiles and the model skew cannot be controlled; the rHW skew and smile can be controlled to fit the market-observed implied volatilities for all strikes of the co-terminal strips.
However, one needs to be aware of the quality of the fit for non-co-terminal ATM points.

\subsection{Simulation} \label{sec:resultsSimulation}

Given the calibrated dynamics as presented in Section~\ref{sec:resultsCalibration}, and using the simulation as described in Section~\ref{sec:rHWSimulation}, we present example rHW paths and compare them with the paths of the underlying models.
Since all dynamics are driven by the same source of randomness, see Section~\ref{sec:randGeneral}, the same random samples must be used for all simulated paths.

Figure~\ref{fig:RAnDUSDNormalCotsmilesDistributionPathsComparisonNormal} shows that a typical $\shortRate(t)$ path is similar to the $\shortRate_2(t)$ path.
Recall that the quadrature points are given in Figure~\ref{fig:quadraturePointsUSD}, where $\shortRate_2(t)$ has the largest weight of all underlying processes.
Hence, it makes sense that the $\shortRate(t)$ paths are similar to this $\shortRate_2(t)$ path.
However, if we consider a $\shortRate(t)$ path that ends at a high value, see Figure~\ref{fig:RAnDUSDNormalCotsmilesDistributionPathsComparisonHigh} for an example, the $\shortRate(t)$ trajectory looks much more like that of $\shortRate_0(t)$, which has a very low mean-reversion parameter, such that the short-rate path is quite extreme.
So, in general, the $\shortRate(t)$ paths look a lot like the highly mean-reverting paths of $\shortRate_2(t)$.
However, in some extreme cases, we diverge from this, such that the tails of the $\shortRate(t)$ distribution should be fatter than that of $\shortRate_2(t)$. 

\begin{figure}[ht!]
  \centering
  \begin{subfigure}[b]{0.45\linewidth}
    \includegraphics[width=\linewidth]{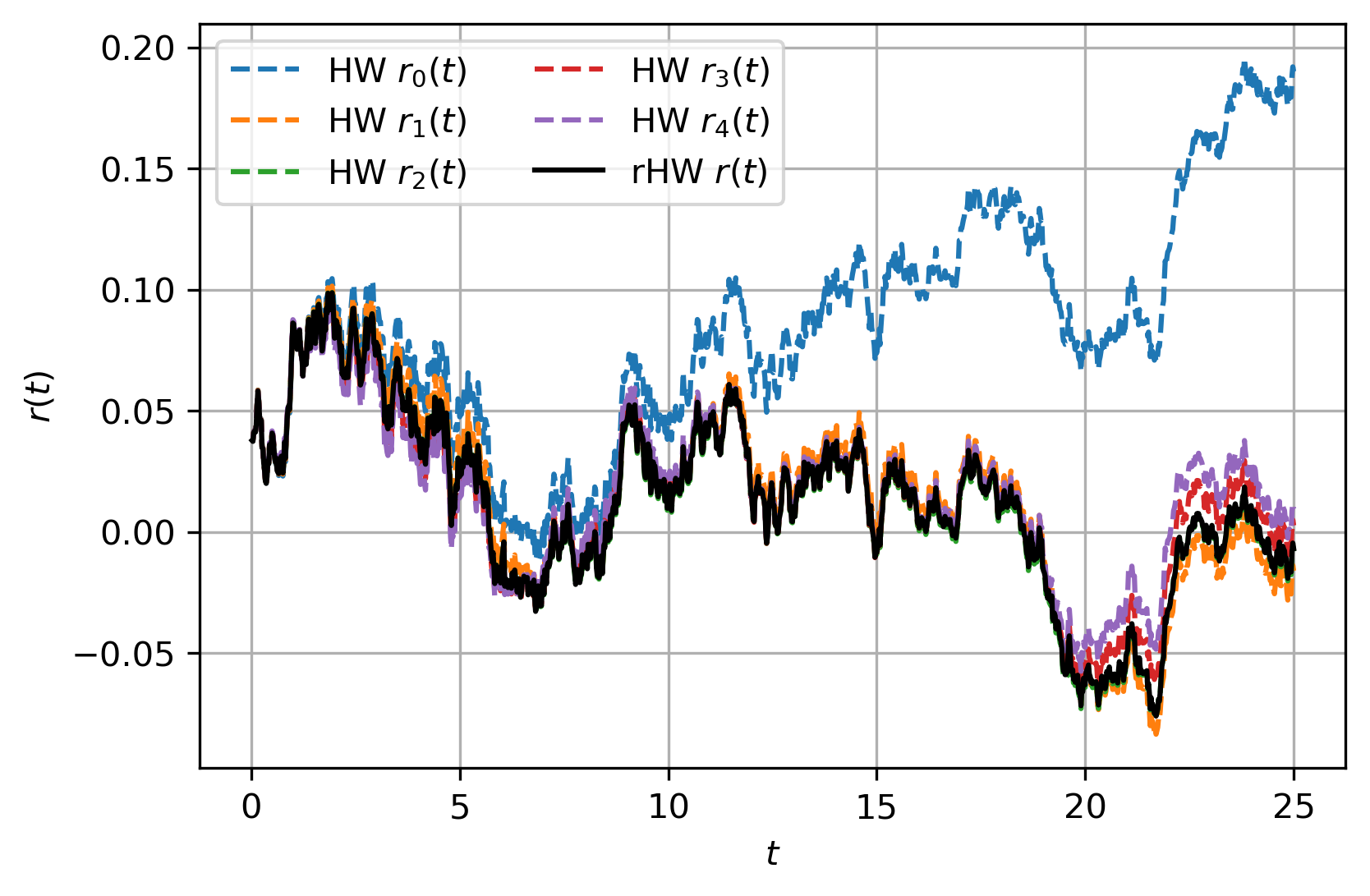}
    \caption{Example path.}
    \label{fig:RAnDUSDNormalCotsmilesDistributionPathsComparisonNormal}
  \end{subfigure}
  \begin{subfigure}[b]{0.435\linewidth}
    \includegraphics[width=\linewidth]{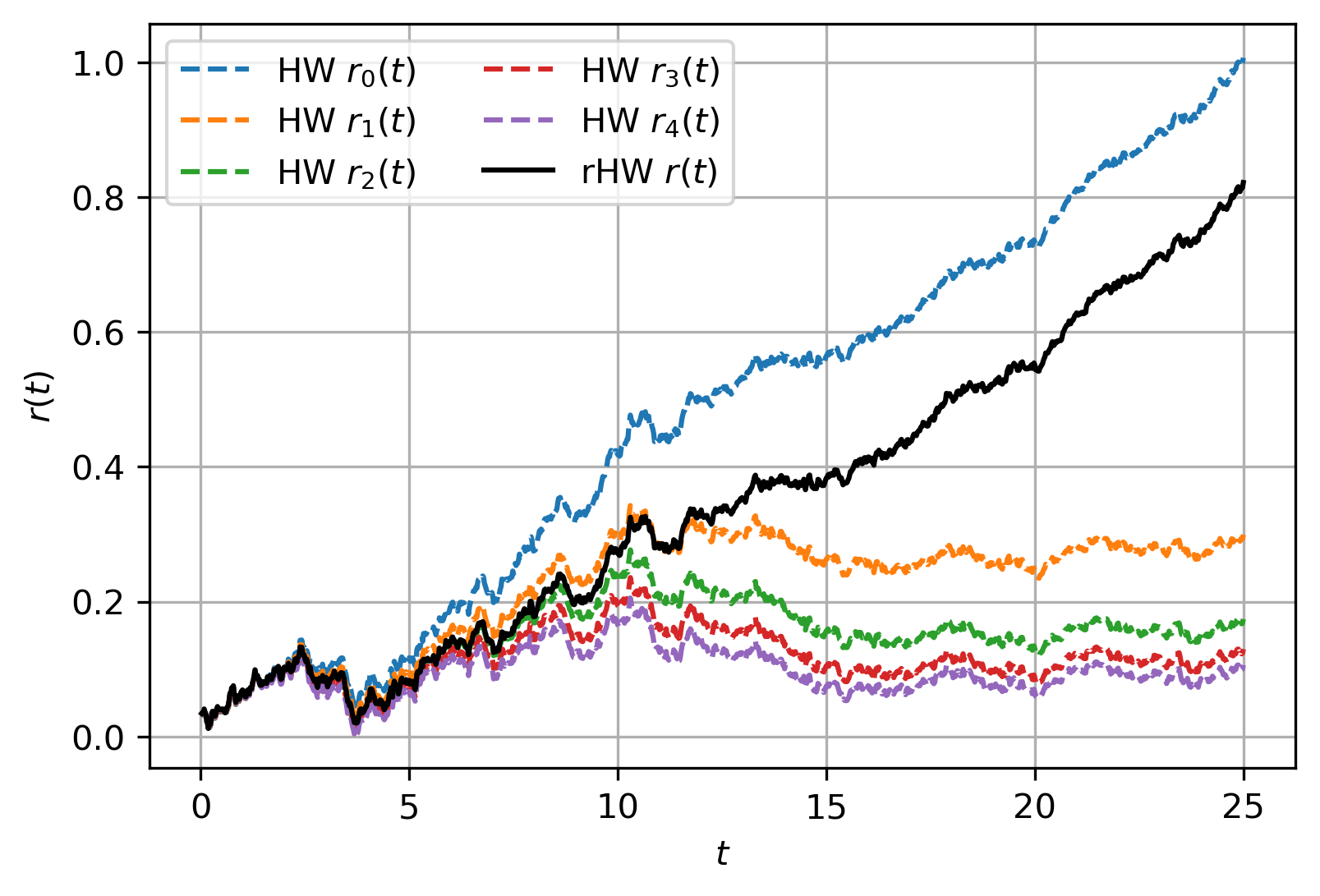}
    \caption{Example path ending high.}
    \label{fig:RAnDUSDNormalCotsmilesDistributionPathsComparisonHigh}
  \end{subfigure}
  \caption{Comparing the paths of rHW $\shortRate(t)$ with all HW processes $\shortRate_n(t)$.}
  \label{fig:RAnDUSDNormalCotsmilesDistributionPathsComparison}
\end{figure}

Looking at the rHW PDF and CDF at $t=25$ in Figure~\ref{fig:RAnDUSDNormalCotsmilesDistributionHWDistrCheck2} and comparing them with the HW process $\shortRate_2(t)$, the right tail of $\shortRate(t)$ is significantly fatter than the normal distribution of $\shortRate_2(t)$.
This observation is in line with the fat tails implied by the market smile and skew.

\begin{figure}[ht!]
  \centering
  \begin{subfigure}[b]{0.58\linewidth}
    \includegraphics[width=\linewidth]{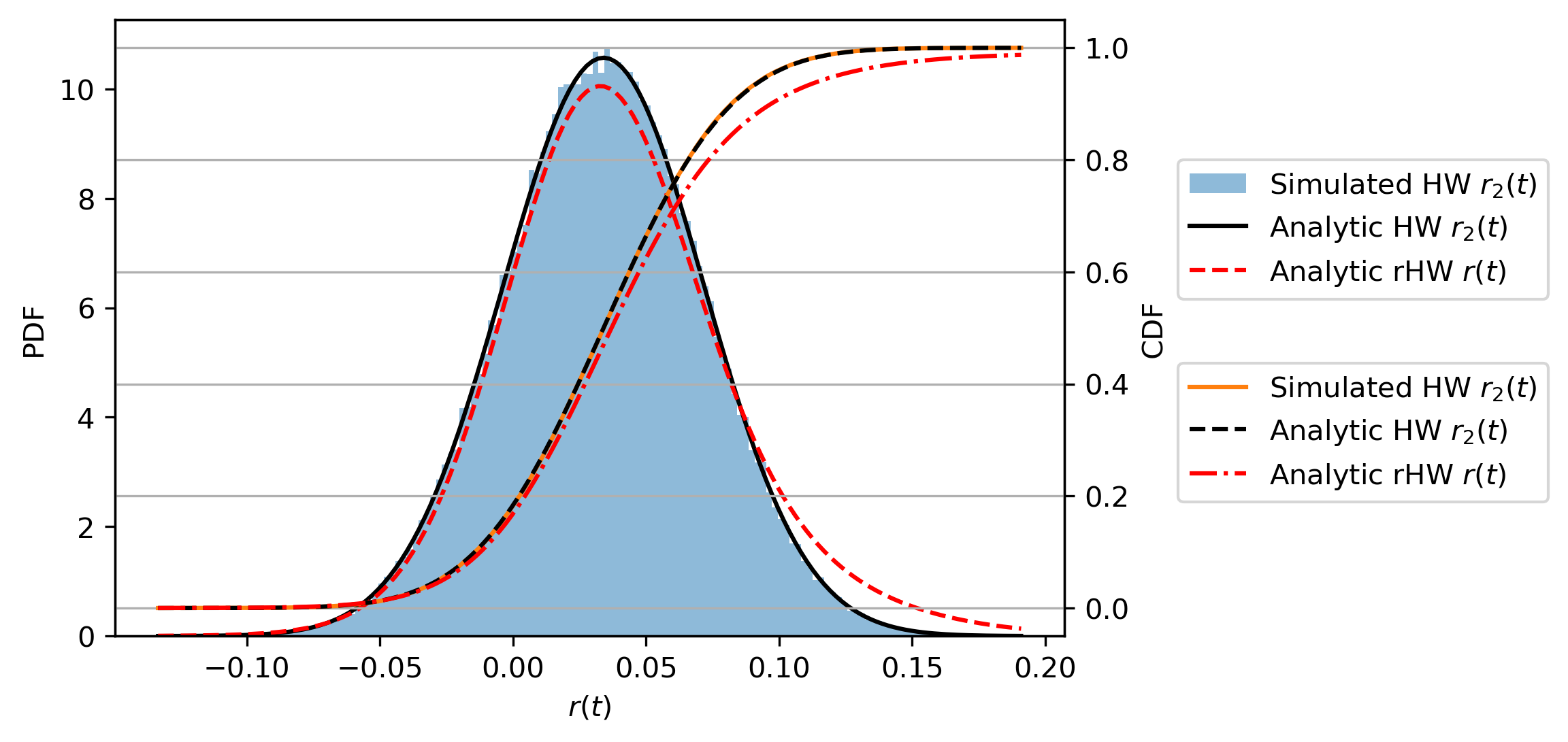}
    \caption{CDF and PDF.}
    \label{fig:RAnDUSDNormalCotsmilesDistributionHWDistrCheck2}
  \end{subfigure}
  \begin{subfigure}[b]{0.41\linewidth}
    \includegraphics[width=\linewidth]{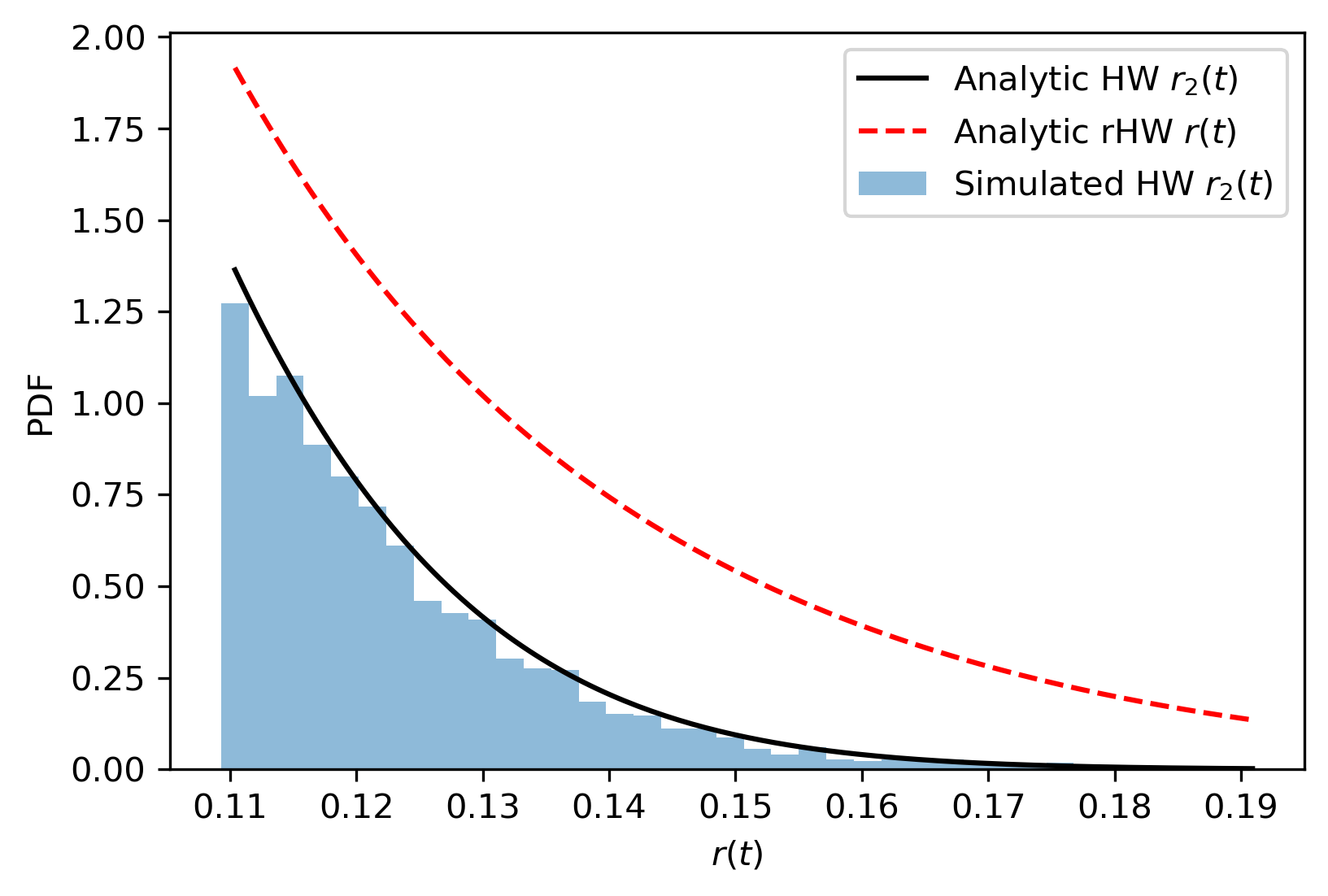}
    \caption{Right tail of the PDF.}
    \label{fig:RAnDUSDNormalCotsmilesDistributionHWRightTailCheck2}
  \end{subfigure}
  \caption{Comparing the rHW PDF and CDF at $t=25$ with the HW process $\shortRate_2(t)$.}
  \label{fig:RAnDUSDNormalCotsmilesDistributionComparison}
\end{figure}

\begin{rem}[Parameter intuition]
    The parameters $\hat{a}$ and $\hat{b}$ do not have a clear financial meaning, which is not a problem, as these parameters are merely a modelling aspect to introduce additional degrees of freedom to calibrate to the market smile and skew.
    In other words, a clear financial meaning of these parameters is not required as long as there is a fit to the market.
\end{rem}

\subsection{ZCB regression} \label{sec:resultsZCBRegression}
    
We demonstrate that pricing derivatives with a regression-based ZCB expression, as introduced in Section~\ref{sec:rHWGenericPricing}, is appropriate.
For the HW model, we have several (semi-)analytic results which can serve as a benchmark.
First of all, the HW model allows for an analytic ZCB expression, see Proposition~\ref{prop:HWZCB}.
Based on this result, we derive a semi-analytic swaption pricing formula in Proposition~\ref{prop:HWSwaption} using Jamshidian's decomposition.
Alternatively, we use a Monte-Carlo simulation with $M_{\tradeVal}$ paths to approximate the swaption value, where we can use either the analytic ZCB expression or a regression-based ZCB that has been calibrated using an independent simulation with $M_{\zcb}$ paths.
At the same time, we can construct an instance of the rHW model with $\NMix=1$, $\theta_1 = a_x$ and $\omega_1=1$, such that the rHW model collapses to the HW model.
We can then use a Monte-Carlo swaption valuation, with regression-based ZCBs.
The only difference between the two models is that the HW model allows for exact simulation, whereas the rHW model requires an Euler-Maruyama discretization that results in an additional error, see Section~\ref{sec:rHWSimulation}.

As an example, consider a swaption with expiry $T_M = 5$, on an ATM payer swap with $\notional = 10k$, start $T_0 = T_M = 5$, maturity $T_m = 10$.
For both the ZCB regression simulation and the outer Monte-Carlo simulation, we use various numbers of paths $M_{\tradeVal}=M_{\zcb}$ and 25 simulation dates per year.
We use a 2nd order polynomial for the regression, where the regression variable is the short rate.
The results are presented in Figure~\ref{fig:HW1FUSDSwaptionVariousNrPathsSwaptionComparisonNrPathsRAnDReSim}.
We conclude that the ZCB regression performs well.

To assess the impact of the Euler-Maruyama discretization error w.r.t. the exact simulation, in Figure~\ref{fig:HW1FUSDSwaptionVariousNrDatesSwaptionComparisonHW1FvsRANDReSim}, we present the difference in swaption value between the HW model with regression-based ZCB and the rHW model that collapses to the HW model.
This is done for $M_{\tradeVal}=M_{\zcb}=10^4$ paths with various number of simulation dates per year.
We observe convergence in the number of dates per year.

\begin{figure}[h]
  \centering
  \begin{subfigure}[b]{0.44\linewidth}
    \includegraphics[width=\linewidth]{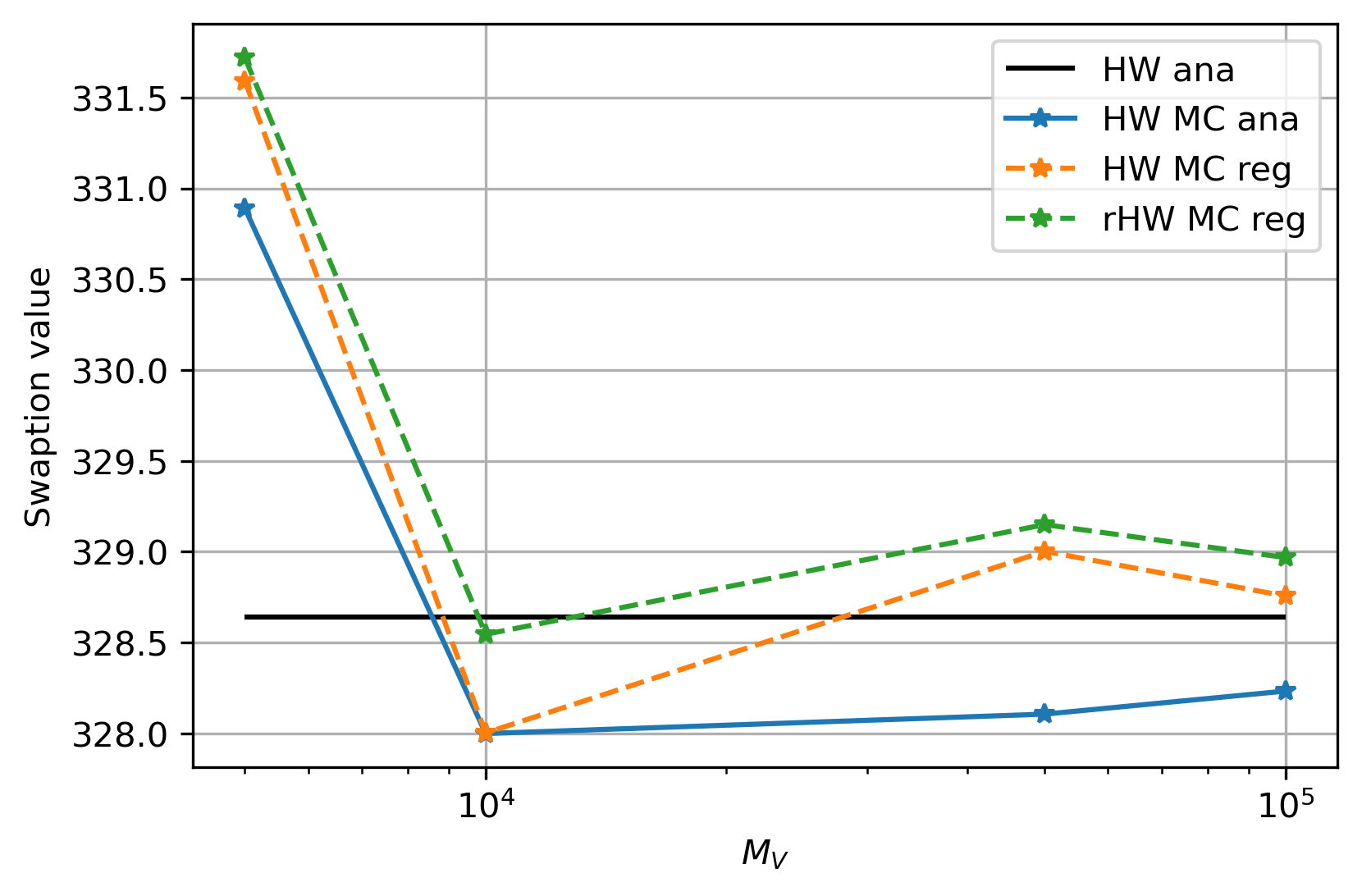}
    \caption{Effect of number of MC paths $M_{\tradeVal}$.}
    \label{fig:HW1FUSDSwaptionVariousNrPathsSwaptionComparisonNrPathsRAnDReSim}
  \end{subfigure}
  \begin{subfigure}[b]{0.43\linewidth}
    \includegraphics[width=\linewidth]{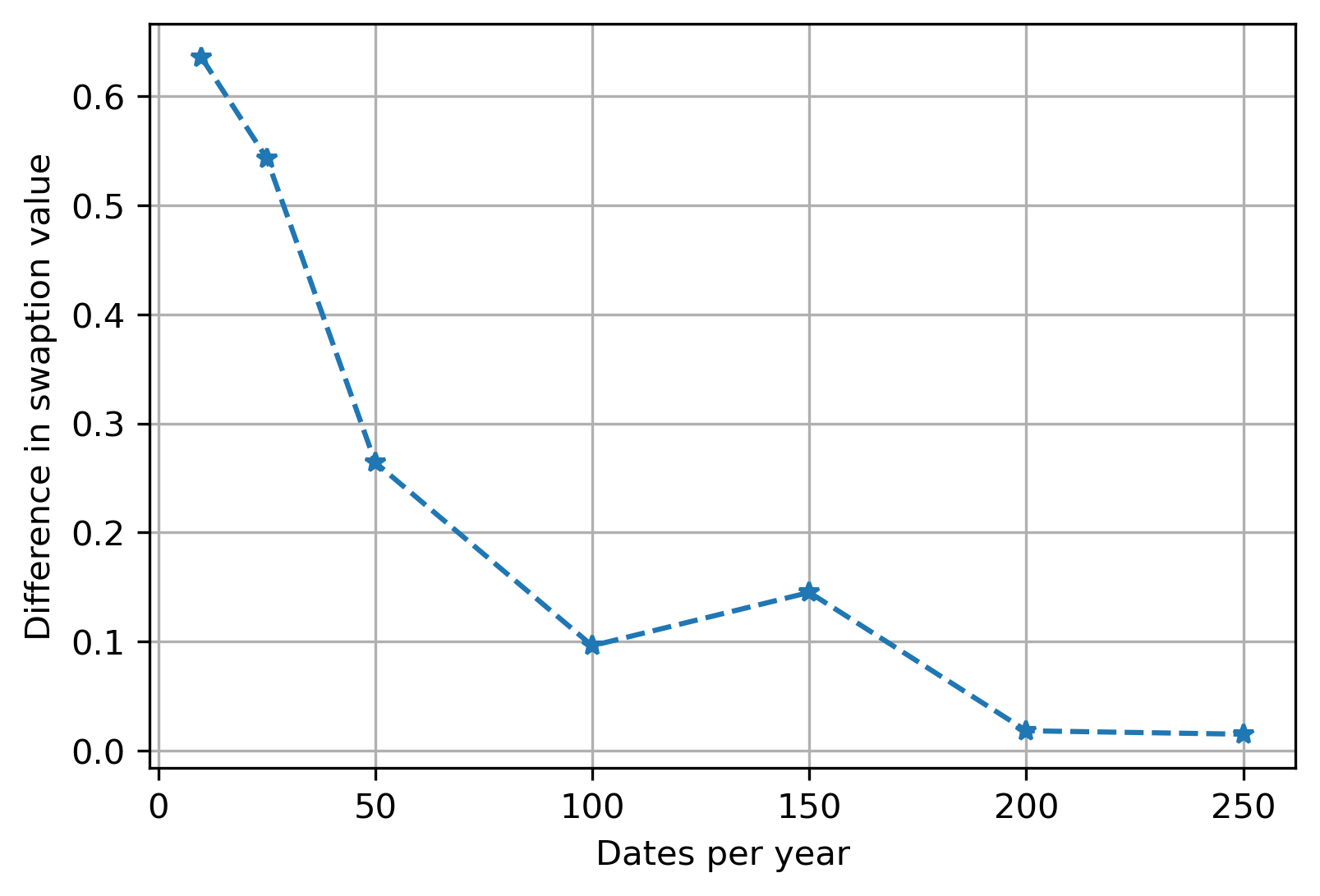}
    \caption{Effect of number of simulation dates.}
    \label{fig:HW1FUSDSwaptionVariousNrDatesSwaptionComparisonHW1FvsRANDReSim}
  \end{subfigure}
  \caption{Assessing the ZCB regression effect and the Euler discretization error.}
  \label{fig:HW1FUSDSwaptionVariousNrPathsSwaptionComparisons}
\end{figure}

\subsection{Exposures and $\xva$s} \label{sec:resultsExposures}

We use the industry standard Monte-Carlo simulation framework to compute exposures and $\xva$ metrics~\cite{Green201511}.
In particular, we build upon the pricing algorithm outlined in Section~\ref{sec:rHWGenericPricing}.
When examining exposures and $\xva$s, we assume that all the relevant ZCB functions at relevant future dates have been obtained through a least-squares regression based on an independent Monte-Carlo simulation.

All $\xva$ metrics are considered from the perspective of institution $I$, and all the trades are done with counterparty $C$.
Consider the $\CVA$ definition under the assumption of no Wrong-Way Risk, which is the weighted sum of Expected Positive Exposures ($\EPE$) at monitoring dates $t_0 < \ldots < t_n$:
\begin{align}
  \CVA(t)
    &\approx (1-\RR) \sum_{i=1}^{n}   \EPE(t, t_i) \left[\PD_C(t_i) - \PD_C(t_{i-1})\right] , \nonumber \\
  \EPE(t, t_i)
    &\ldef \condExpSmall{\expPower{-\int_t^{t_i} \shortRate(v) \dv} \maxOperator{\tradeVal(t_i)}}{t} 
    \approx \frac{1}{M_{\tradeVal}} \sum_{j=1}^{M_{\tradeVal}} \expPower{-\int_t^{t_i} \shortRate_j(v) \dv} \maxOperator{\tradeVal(t_i; \shortRate_j(t_i))},  \label{eq:EPE}
\end{align}
where in the latter step, we denote the Monte-Carlo approximation of the exposure using $M_{\tradeVal}$ paths.~\footnote{For exposures, we only include future payments w.r.t. the monitoring dates (not the payments at the monitoring dates themselves, they are assumed to have taken place already).}
$\CVA$ is computed with recovery rate $\recovRate = 0$ and counterparty $C$'s default probability $\PD_C(t)$ with constant hazard rate $\hazardRate_C = 0.02$.
$\DVA$ is computed similarly but with $\ENE$ rather than $\EPE$, i.e., the max-operator in Equation~\eqref{eq:EPE} is replaced by the min-operator, and using the institution's default probability $\PD_I(t)$ with constant hazard rate $\hazardRate_I = 0.01$.~\footnote{This hazard rate implies that the creditworthiness of institution $I$ is higher than that of counterparty $C$.}:
\begin{align}
  \DVA(t)
    &\approx (1-\RR) \sum_{i=1}^{n}   \ENE(t, t_i) \left[\PD_I(t_i) - \PD_I(t_{i-1})\right]. \nonumber
\end{align}
Finally, $\BCVA$ is computed with a survival adjustment, such that exposures are only considered if the other party did not yet go into default:
\begin{align}
  \BCVA(t)
    &\approx (1-\RR) \sum_{i=1}^{n}   \EPE(t, t_i) \left[\PD_C(t_i) - \PD_C(t_{i-1})\right] \left[1 - \PD_I(t_{i-1})\right] \nonumber \\
    &\quad + (1-\RR) \sum_{i=1}^{n}   \ENE(t, t_i) \left[\PD_I(t_i) - \PD_I(t_{i-1})\right] \left[1 - \PD_C(t_{i-1})\right] . \nonumber
\end{align}

In addition, we consider a tail metric, Potential Future Exposure ($\PFE$), which is the $\alpha$-quantile of the positive exposure $\maxOperator{\tradeVal(t_i)}$, i.e., at time $t$ this metric is defined as
\begin{align}
  \PFE(t, t_i; \alpha)
    &\ldef \inf\left\{ x\in\R : \frac{\alpha}{100} \leq F_{\maxOperator{\tradeVal(t_i)}}(x) \right\},  \label{eq:PFE}
\end{align}
where $F_{\maxOperator{\tradeVal(t_i)}}(x)$ is the CDF of the positive exposure at time $t_i$.
Analogously, the Potential Future Loss ($\PFL$) is the $\alpha$-quantile of the negative exposure $\minOperator{\tradeVal(t_i)}$.

The key to computing (potential future) exposures is to obtain derivative values at monitoring dates $t_i$ for all $j$ Monte-Carlo scenarios, i.e., $\tradeVal(t_i; \shortRate_j(t_i))$, where $\shortRate_j(t_i)$ denotes the $j$-th simulated short-rate $\shortRate$ at time $t_i$.
It is crucial to have an efficient valuation of ZCBs for all relevant dates $T_k > t_i$, i.e., $\zcb_{\shortRate}(t_i,T_k; x)$ conditional on reaching state $x$.
For the HW model, this expression is analytic, see~\ref{app:HWZeroCouponBond}, whereas for the rHW model, we use the polynomial approximation obtained from regression.
Hence, we have an efficient way to evaluate the future ZCBs for both models. 

See Section~\ref{sec:resultsExposuresSwaps} for IR swap exposures and the assessment of skew and smile effects on these exposure profiles and the effects on the corresponding $\xva$ metrics.
In Section~\ref{sec:resultsExposuresBermudanSwaption}, we look at Bermudan swaption exposures and $\xva$s to assess the impact of smile and skew on these metrics when the derivative does not depend on the underlying ZCBs in a linear way.

\subsubsection{Swaps} \label{sec:resultsExposuresSwaps}
To examine the smile and skew effects on the (potential future) exposure profiles of IR swaps, we consider swaps under a single curve setup starting at $T_0$, maturing at $T_m$ with intermediate payment dates $T_1 < T_2 < \ldots < T_{m-1} < T_m$, with strike $\strike$.~\footnote{See Definition~\ref{def:swap} for the swap pricing $\tradeVal^{\text{Swap}}$.}
The examples are limited to co-terminal exposures, i.e., $T_m = 30y$, to properly compare the exposures under both models.
This choice is motivated by the calibration to co-terminal swaptions.
From Figure~\ref{fig:USDNormalRandCotsmilesCalibImpvolError} it is clear that for the quotes that were not used for calibration, e.g., short expiries and tenors, the two models can imply significantly different volatilities, with a similar effect on exposures.

Puetter and Renzitti state that the moneyness~\footnote{Moneyness refers to the relation of the strike and the current swap rate. If the two are equivalent, the swap value is zero and is said to be at-the-money (ATM). Furthermore, the swap is in-the-money (ITM) if the swap value is positive, and out-of-the-money (OTM) if the swap value is negative.} of IR swaps influences how much the $\xva$ values depend on the calibration~\cite{PuetterRenzitti202011a}.
Since ITM swaps are largely driven by the centers of the simulated swap rate distributions, it is expected that the tail effects that follow from the market smile and skew are the smallest in this case.
On the other hand, OTM swaps are much more sensitive to the tails of the swap rate distributions.
Despite the large relative smile impact on $\xva$ numbers in the OTM case, the absolute impact is small.
The authors report that these observations are ``largely in line with the Totem's $\xva$ consensus pricing results, where the CVA submissions for ITM swaps consistently exhibit less variation than those for ATM and OTM swaps.''
Despite the useful insights on $\xva$ metrics, the impact on tail exposures such as $\PFE$ is not addressed in this paper.

The examples presented here are for a receiver swap, starting at $T_0 = 0$, ending at $T_m = 30$, with payments every two years using $M_{\zcb} = 10^4$ paths and 200 dates per year, with a regression polynomial of degree 3, where the short rate is used as the regression variable.
For the exposure simulation, $M_{\tradeVal} = 10^4$ paths and 20 monitoring dates per year are used (with 200 simulation dates per year for the Euler discretized simulation).

The simulated swap rate at $t=20$ is presented in Figure~\ref{fig:RAnDUSDNormalCotsmilesExpiry0SwapRateDistribution}.
The rHW model including smile has a significantly fatter right tail compared to the HW model and is a result of the volatility smile.
The translation of this fat tail to average and tail exposures of a swap will depend on the strike and type of the swap.

\begin{figure}[ht!]
  \centering
  \includegraphics[width=0.5\linewidth]{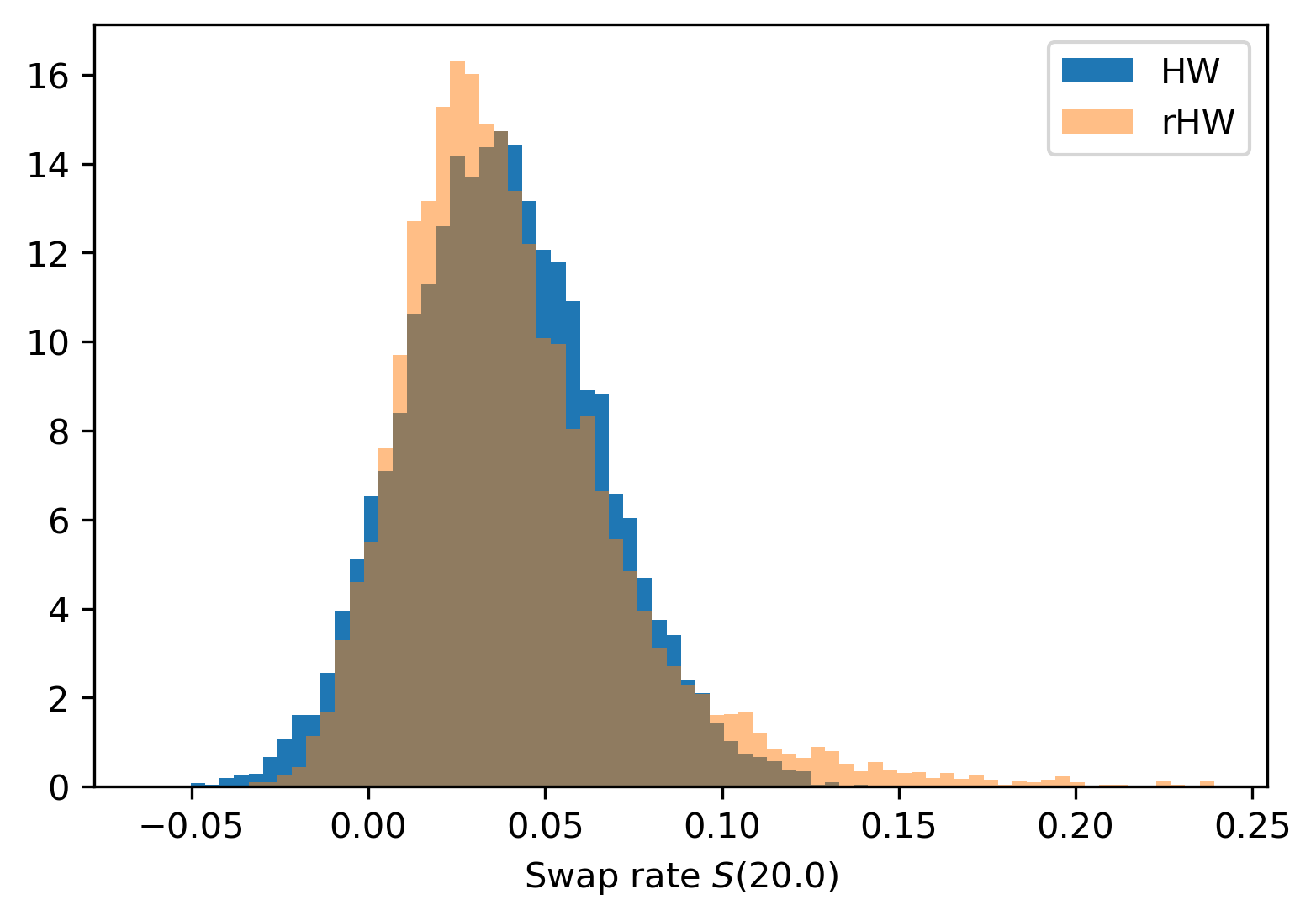}
  \caption{Swap rate distribution at $t=20$.}
  \label{fig:RAnDUSDNormalCotsmilesExpiry0SwapRateDistribution}
\end{figure}

See Figure~\ref{fig:RAnDUSDNormalCotsmilesReceiverSwapATMExposureExpiry0} for ATM receiver swap exposures.
The corresponding $\CVA$ and $\DVA$ numbers for various strikes are reported in Table~\ref{tab:RAnDUSDNormalCotsmilesReceiverSwapExposureExpiry0}.
In this example, the $\EPE$ is significantly affected by the smile, whereas the $\ENE$ of the two models is comparable.
These effects directly translate to the reported $\xva$ metrics, where the $\CVA$ and $\BCVA$ are significantly impacted, with up to a 15\% decrease in $\BCVA$ when smile is included in the model.
In addition, the tail exposures in Figure~\ref{fig:RAnDUSDNormalCotsmilesReceiverSwapATMExposureExpiry0} are affected by the smile, in the sense that the HW model underestimates the tails, with the largest impact on $\PFL$ in this example.
From the $\xva$ metrics in Table~\ref{tab:RAnDUSDNormalCotsmilesReceiverSwapExposureExpiry0}, we conclude that the smile in the rHW model has a significant effect in all cases.
The results are in line with the aforementioned observations from Puetter and Renzitti~\cite{PuetterRenzitti202011a}, in the sense that the impact is relatively the smallest and absolutely the largest in the ITM case and relatively the largest and absolutely the smallest in the OTM case.

\begin{figure}[ht!]
  \centering
  \begin{subfigure}[b]{0.45\linewidth}
    \includegraphics[width=\linewidth]{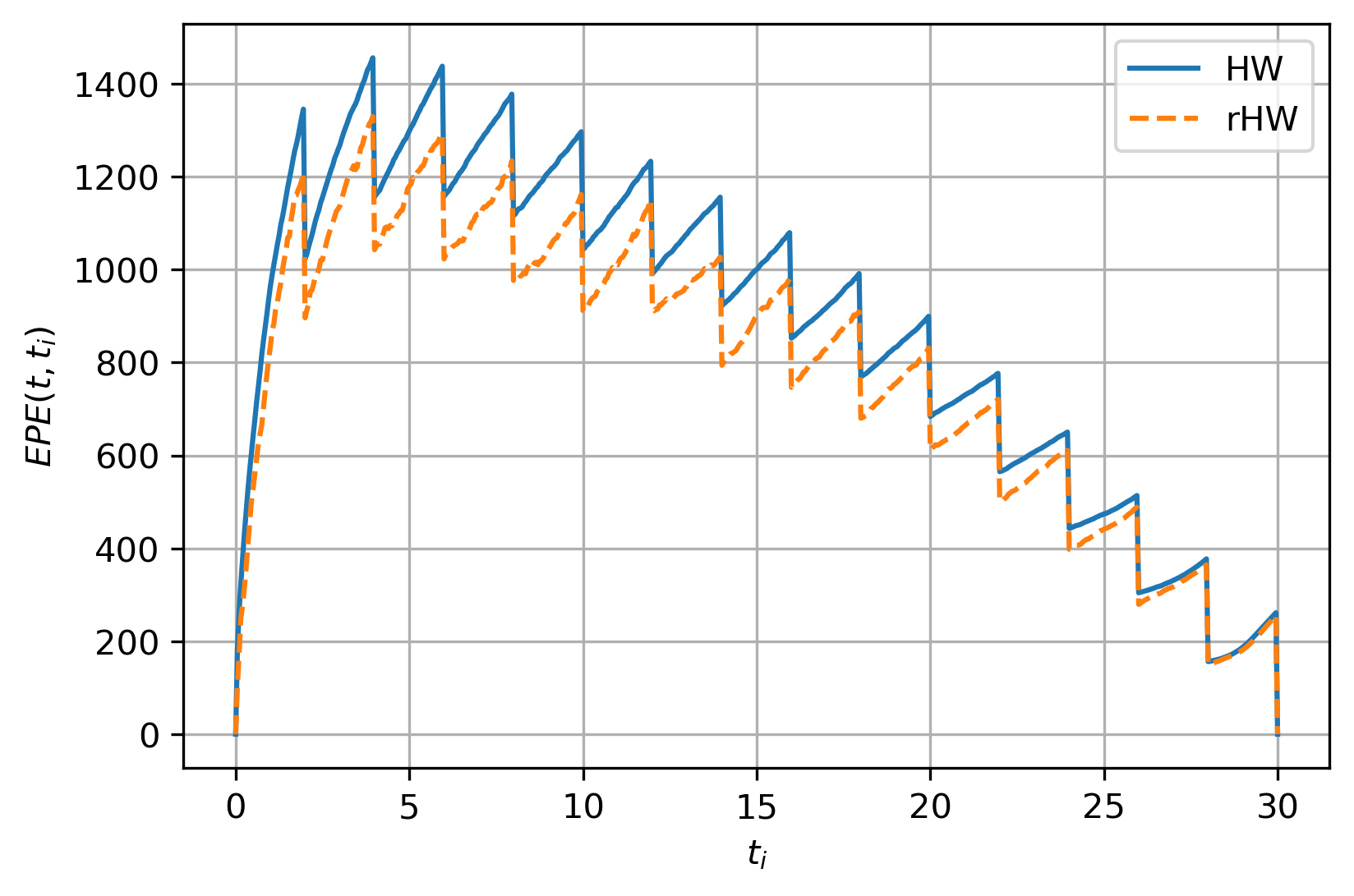}
    \caption{$\EPE(t, t_i)$}
    \label{fig:RAnDUSDNormalCotsmilesReceiverSwapATMExposureExpiry0EPE}
  \end{subfigure}
  \begin{subfigure}[b]{0.45\linewidth}
    \includegraphics[width=\linewidth]{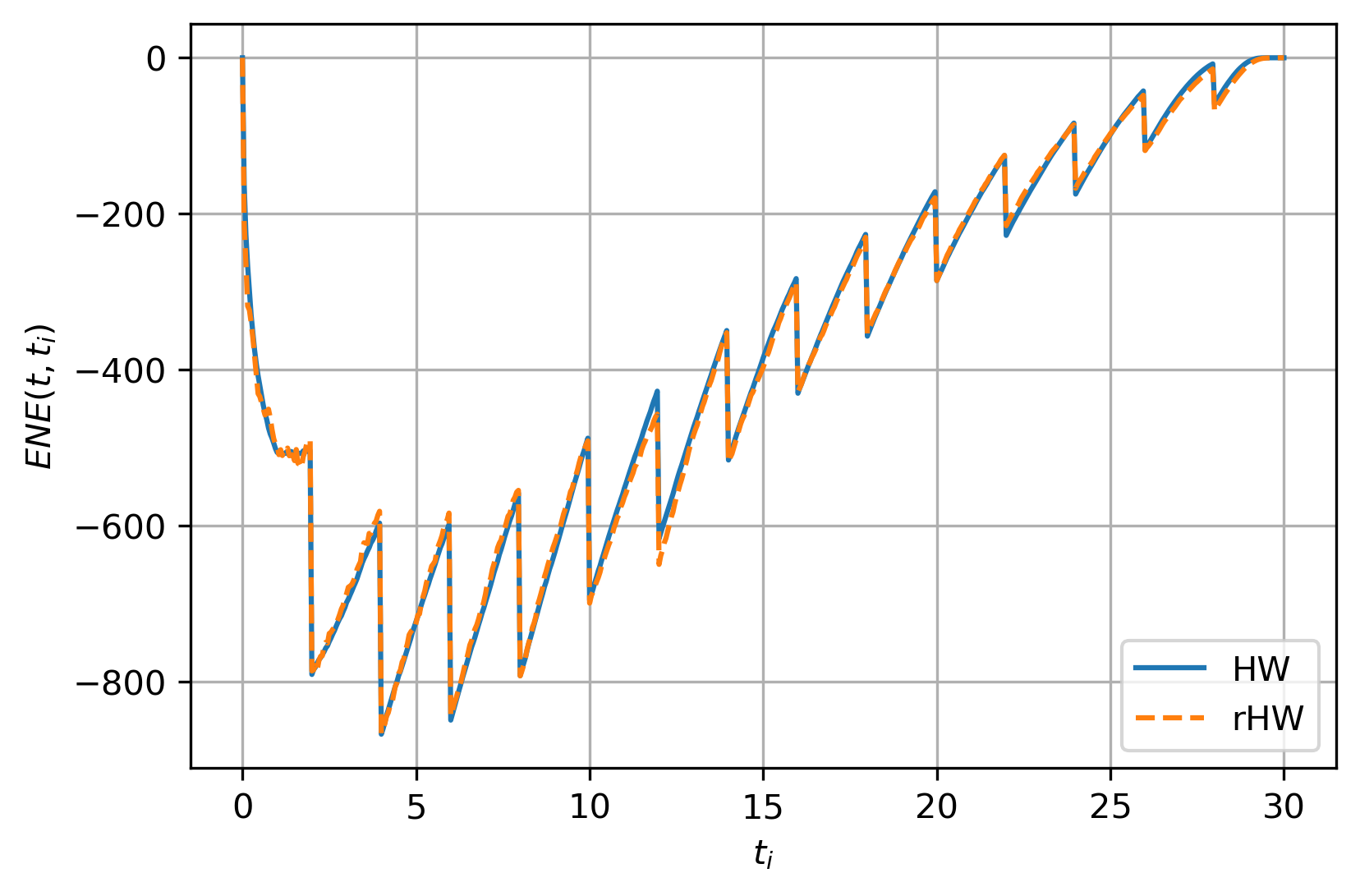}
    \caption{$\ENE(t, t_i)$}
    \label{fig:RAnDUSDNormalCotsmilesReceiverSwapATMExposureExpiry0ENE}
  \end{subfigure}
  \begin{subfigure}[b]{0.45\linewidth}
    \includegraphics[width=\linewidth]{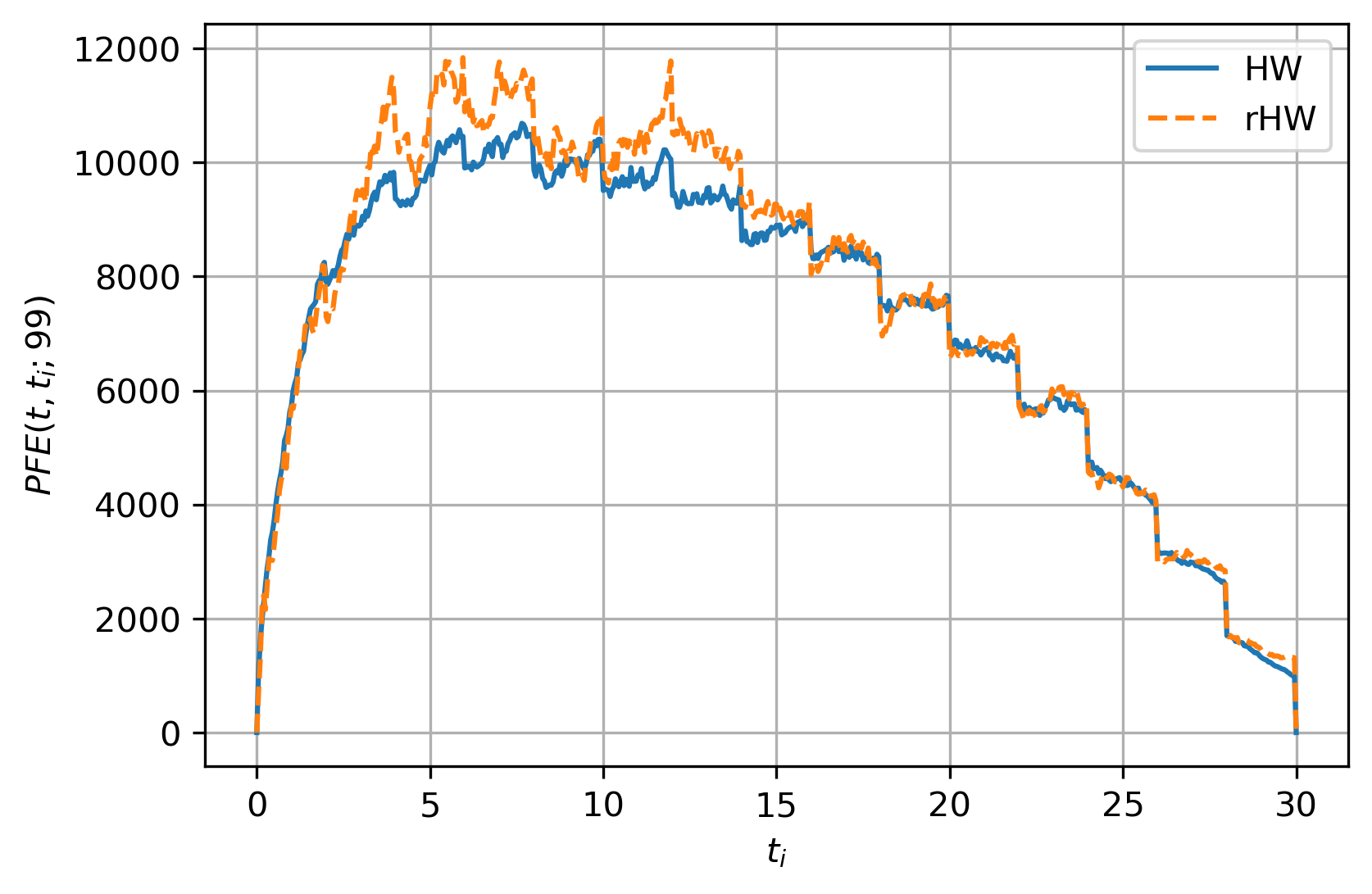}
    \caption{$\PFE(t, t_i; 99)$}
    \label{fig:RAnDUSDNormalCotsmilesReceiverSwapATMExposureExpiry0PFE(99)}
  \end{subfigure}
  \begin{subfigure}[b]{0.45\linewidth}
    \includegraphics[width=\linewidth]{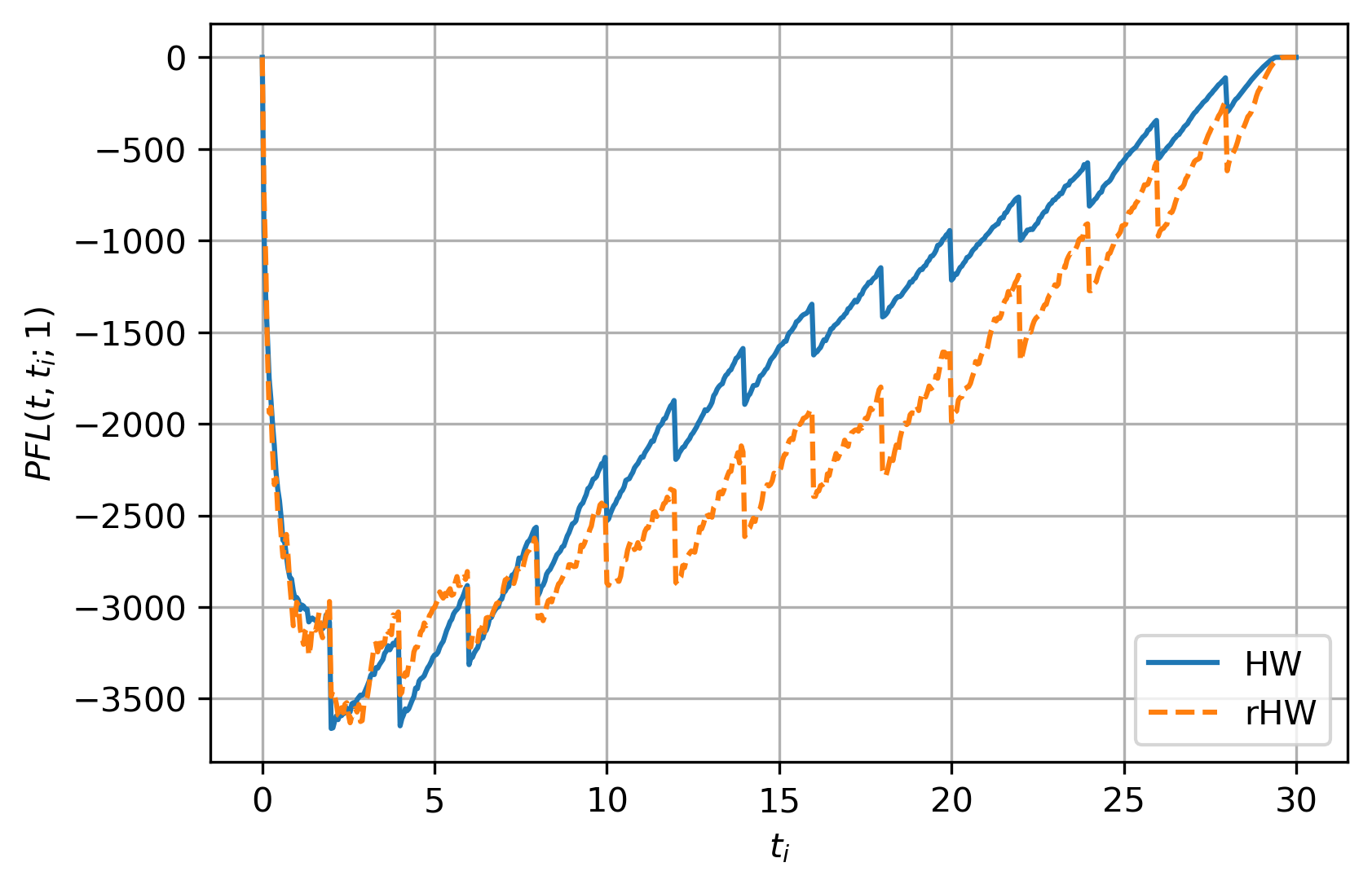}
    \caption{$\PFL(t, t_i; 1)$}
    \label{fig:RAnDUSDNormalCotsmilesReceiverSwapATMExposureExpiry0PFL(1)}
  \end{subfigure}
  \caption{Comparing exposures for an ATM receiver swap ($\strike = \strikeATM$).
  }
  \label{fig:RAnDUSDNormalCotsmilesReceiverSwapATMExposureExpiry0}
\end{figure}

\begin{table}[ht!]
    \centering
    \begin{tabular}{l|ll|rrr}
        Model   & $\strike$             & Moneyness     & $\CVA(t_0)$   & $\DVA(t_0)$   & $\BCVA(t_0)$  \\ \hline
        HW      & $\strikeATM$          & ATM           & 419.098       & -104.201      &  289.612      \\
        rHW     &                       &               & 375.367       & -104.459      &  249.831      \\ \hline
        HW      & $1.5\cdot\strikeATM$  & ITM           & 967.818       &  -23.569      &  862.803      \\
        rHW     &                       &               & 924.550       &  -27.847      &  820.243      \\ \hline
        HW      & $0.5\cdot\strikeATM$  & OTM           & 122.268       & -319.868      & -165.661      \\
        rHW     &                       &               &  92.005       & -323.449      & -196.049
    \end{tabular}
    \caption{$\xva$ metrics for the receiver swap example, for various strikes.}
    \label{tab:RAnDUSDNormalCotsmilesReceiverSwapExposureExpiry0}
\end{table}

The smile effects on the tail exposures are in line with previous observations in the literature on the relevance of skew and smile for $\xva$ computations for FX and equity derivatives~\cite{GraafFengKandhaiOosterlee201406,FengOosterlee201709,SimaitisGraafHariKandhai201605}.
However, in those works, there were no clear conclusions on a significant smile and skew impact on average exposures and $\xva$ metrics of linear derivatives of all moneyness types.

The significant effect of smile and skew on the $\xva$ metrics is particularly relevant from a practical perspective.
Consider an uncollateralized receiver swap between a bank and a client hedging IR risk on a floating-rate loan.
See~\cite{ZwaardGrzelakOosterlee202210} for further background on this use case.
Typically, these swaps are entered at par, i.e., $\strike = \strikeATM$.
From Table~\ref{tab:RAnDUSDNormalCotsmilesReceiverSwapExposureExpiry0} it is clear that when including smile, the $\CVA$ is significantly lower, which allows the bank to offer the client a better rate.
Alternatively, consider the case where the swap was entered in the past, and the trade is now OTM due to market movements, e.g., $\strike = 0.5 \cdot \strikeATM$.
The results in Table~\ref{tab:RAnDUSDNormalCotsmilesReceiverSwapExposureExpiry0} tell that there is net $\DVA$ for this trade ($\BCVA$ is negative).
If the client wants to unwind the swap, the bank needs to charge the client to compensate for the loss made from the $\DVA$ benefit that will disappear.
When including smile in the model, the client needs to be charged more than the amount computed with the HW model.
Furthermore, next to the mispricing effect, there will also be an impact on $\xva$ sensitivities used for hedging.
Ignoring smile in the $\xva$ model will cause under- or over-hedging.

The most computationally expensive part of $\xva$ calculations is the valuation of derivatives at future monitoring dates for all Monte-Carlo scenarios, as this has to be done for each derivative separately.
For linear derivatives such as IR swaps, the future derivative value only depends on the future ZCBs such that the corresponding exposures can be easily obtained.
The exposure calculations for the HW model take 2.03 seconds~\footnote{All the reported runtimes are averages over 20 runs.}, whereas the rHW exposure calculations take 1.90 seconds.
Both implementations are not fully optimized in terms of computational speed, but we can conclude that the runtimes of both models are approximately as fast as one another.

\begin{rem}[Forward smile]
    Extra caution is required for forward-smile sensitive exposures or derivatives.
    Since both models are not calibrated to the forward smile evolution as would be done for a stochastic volatility model, the model-implied forward smile from the two models is likely to be different.
    In other words, one is `left at the mercy' of the smile evolution that the model of choice implies.
    However, the smile-aware rHW model is still preferred over the HW model, as for the former, the marginal information that the market smiles encode is included during the model calibration.
    
    For example, the exposure for a forward-starting swap starting at $T_0>0$ is a forward-starting option, such that smile evolution is relevant when computing this exposure.   
    Since a typical portfolio of IR swaps will not be similar to a forward-starting swap but to a spot-starting swap, the aforementioned results for the swap starting at $T_0=0$ are most relevant from a practical perspective.
\end{rem}

Results for the EUR market data can be found in~\ref{app:resultsEURExposuresSwaps}.
The conclusions drawn here extend to this other market data set.
Due to the smaller implied volatility smile in this example, the smile effects are smaller but non-negligible.

\subsubsection{Bermudan swaption} \label{sec:resultsExposuresBermudanSwaption}

In this experiment, we demonstrate that in addition to the effects of volatility smile and skew on $\PFE$ as observed for IR swaps, there is a significant impact on the $\EPE$ and $\CVA$ of Bermudan swaptions.

A receiver (payer) Bermudan swaption $\tradeVal^{\text{BS}}$ is an early-exercise contract where at a predetermined set of exercise dates $\mathcal{T}_E = \{T_{E,1}, \ldots, T_{E,n}\}$ the option holder has the right to enter into a receiver (payer) swap.
We consider a fixed maturity Bermudan swaption, i.e., the swap that is entered into upon exercise starts straight after the exercise date and always at the same pre-specified maturity date.
As a consequence, the lifetime of the swap is decreasing as time progresses.
Furthermore, we assume that the exercise dates $\mathcal{T}_E$ coincide with the swap reset dates.
So, the option holder can enter the underlying swap at the first reset date up and till the last reset date of the swap.
We value the Bermudan swaption in a Least Squares Monte Carlo (LSMC) setting using the Longstaff-Schwartz method~\cite{LongstaffSchwartz200101}.
This approach is a Monte-Carlo based algorithm where the exercise rule at each exercise date is approximated iteratively using a least-squares regression.
For further background, see~\ref{app:bermudanSwaptionPricing}.

The computation of Bermudan swaption exposures requires additional considerations.
We assume that all exercise dates ($\mathcal{T}_E = \{T_{E,1}, \ldots, T_{E,n}\}$) are also monitoring dates ($t_0 < t_1 < \ldots < t_n$).
We consider the case where there are monitoring dates between exercise dates to examine the exposure profile between consecutive exercise dates.
The results are limited to cash-settled Bermudan swaptions, such that the exposure of the Bermudan swaption after exercise is zero.~\footnote{For physical settlement, the Bermudan swaption exposure after the exercise is given by the swap exposure.}
The key to computing exposures is to obtain future market values $\tradeVal^{\text{BS}}(t_i)$ at all monitoring dates $t_i$ for all Monte-Carlo scenarios.
We compute the path-wise $\tradeVal^{\text{BS}}(t_i)$ using the collocation-based approximation technique from Deelstra~\etal\cite{DeelstraGrzelakWolf202304}, using Lagrangian interpolation to approximate the future market value over the entire domain.
For further background, see~\ref{app:bermudanSwaptionExposures}.
Finally, we assume that the exercise strategy of the Bermudan swaption is not affected by the possibility of default.

We compare exposures for a receiver Bermudan swaption.
The underlying swap starts at $T_0=0$, ends at $T_m=30$, has payments every two years and early-exercise dates at every swap payment date until the swap maturity.
The Monte-Carlo setup for the ZCB regression and exposure simulation is the same as in Section~\ref{sec:resultsExposuresSwaps}.
For the regression phase of the Bermudan swaption, we use a polynomial of degree 2.
Finally, for computing the exposures with the collocation method using 5 nodes, we use $M_{\tradeVal} = 5\cdot 10^3$ paths and 20 dates per year.

Exposures of a receiver Bermudan swaption on an ATM swap are presented in Figure~\ref{fig:RAnDUSDNormalCotsmilesReceiverBermSwaptionATMExposureExpiry0}.
For both models, the effect of the early-exercise is clearly visible in the $\EPE$ profile in Figure~\ref{fig:RAnDUSDNormalCotsmilesReceiverBermSwaptionATMExposureExpiry0EPE}, where at every exercise date, the exposure drops significantly.
The corresponding $\CVA$ numbers for all cases are reported in Table~\ref{tab:RAnDUSDNormalCotsmilesReceiverBermSwaptionExposureExpiry0}.
The results show a significant effect of the volatility smile and skew on the $\EPE$, $\PFE$, and $\CVA$.
For the latter, the effect is between 48\% and 67\%, and is thereby significantly larger than for the IR swap from Section~\ref{sec:resultsExposuresSwaps}.
Hence, the smile effects from the rHW model are more pronounced for this more exotic derivative.
This conclusion is in line with the increasing smile effect for more exotic FX derivatives in~\cite{SimaitisGraafHariKandhai201605}.

\begin{figure}[ht!]
  \centering
  \begin{subfigure}[b]{0.45\linewidth}
    \includegraphics[width=\linewidth]{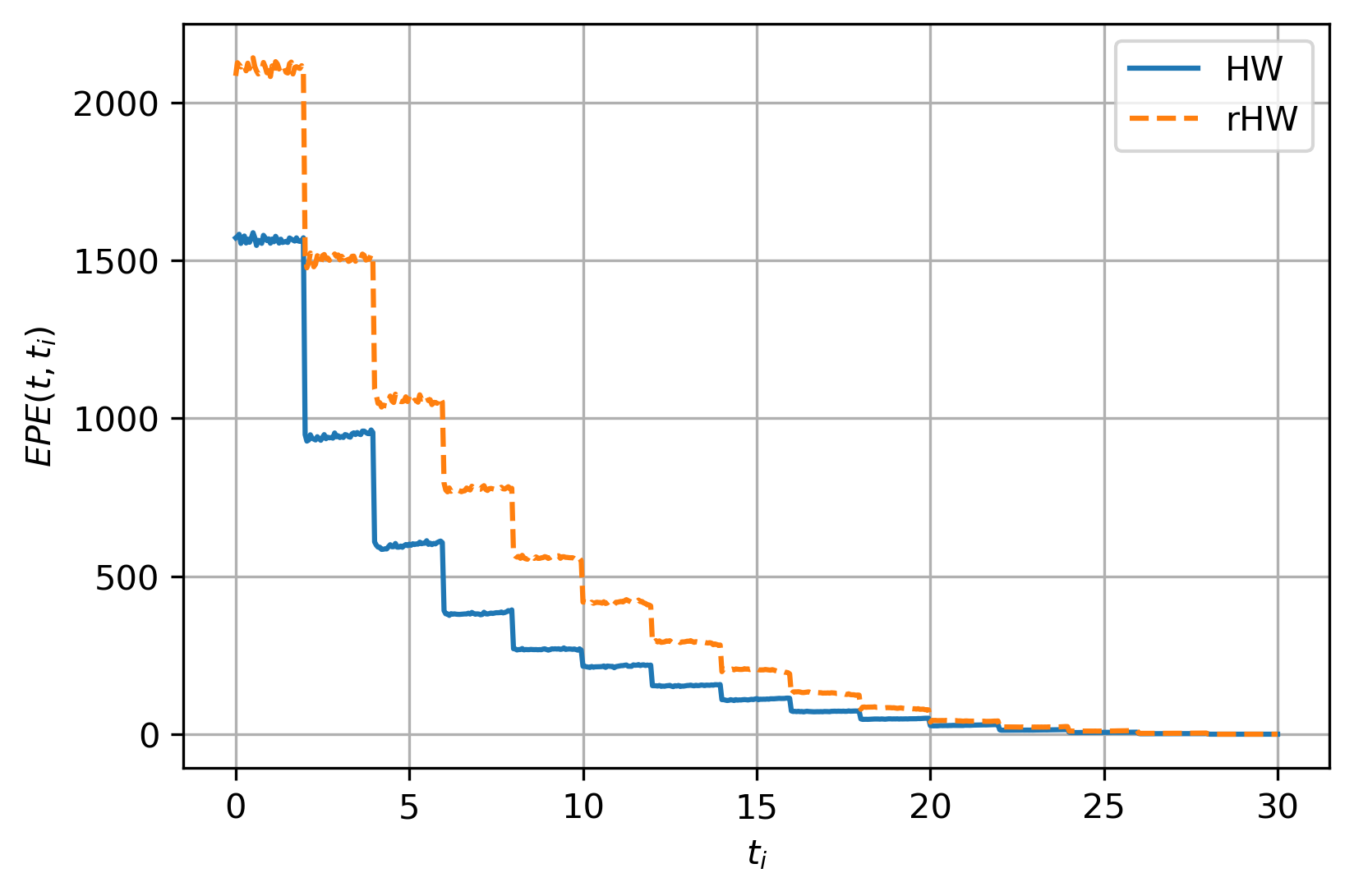}
    \caption{$\EPE(t, t_i)$}
    \label{fig:RAnDUSDNormalCotsmilesReceiverBermSwaptionATMExposureExpiry0EPE}
  \end{subfigure}
  \begin{subfigure}[b]{0.45\linewidth}
    \includegraphics[width=\linewidth]{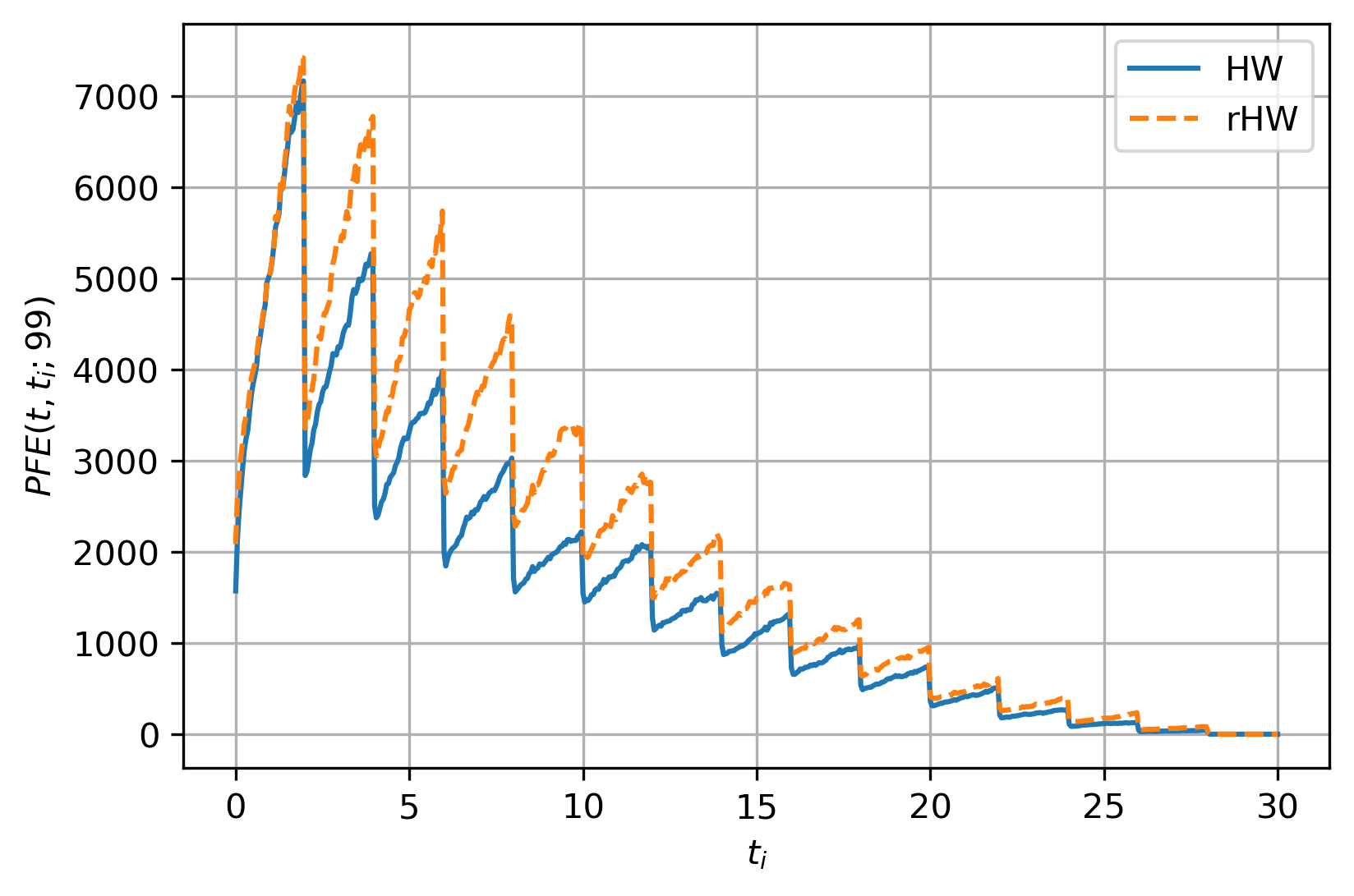}
    \caption{$\PFE(t, t_i; 99)$}
    \label{fig:RAnDUSDNormalCotsmilesReceiverBermSwaptionATMExposureExpiry0PFE(99)}
  \end{subfigure}
  \caption{Comparing exposures for a receiver Bermudan swaption on an ATM swap ($\strike = \strikeATM$).
  }
  \label{fig:RAnDUSDNormalCotsmilesReceiverBermSwaptionATMExposureExpiry0}
\end{figure}

\begin{table}[ht!]
    \centering
    \begin{tabular}{l|ll|r}
        Model   & $\strike$             & Moneyness     & $\CVA(t_0)$   \\ \hline
        HW      & $\strikeATM$          & ATM           & 159.156       \\
        rHW     &                       &               & 258.035       \\ \hline
        HW      & $0.5\cdot\strikeATM$  & OTM           & 106.292       \\
        rHW     &                       &               & 157.305       \\ \hline
        HW      & $1.5\cdot\strikeATM$  & ITM           & 190.855       \\
        rHW     &                       &               & 320.358   
    \end{tabular}
    \caption{$\CVA$ metrics for a receiver Bermudan swaption on a swap, for various strikes.}
    \label{tab:RAnDUSDNormalCotsmilesReceiverBermSwaptionExposureExpiry0}
\end{table}

The results for EUR Bermudan swaptions are presented in~\ref{app:resultsEURExposuresBermudanSwaption}, where again the smile effects are of a similar nature, but of a lower magnitude.

\section{Conclusion} \label{sec:conclusion}

We have proposed a novel way to capture market implied smile and skew for $\xva$ calculations of IR derivatives.
This approach extends the current market practice of HW-type dynamics merely calibrated to the ATM point of market implied co-terminal swaptions.
The inclusion of smile and skew in $\xva$ calculations are relevant both in terms of mispricing, as well as risk management of $\xva$s.

We have demonstrated that the rHW model is sufficiently flexible to be calibrated to the market-implied co-terminal swaption smiles efficiently, preserving the semi-analytic calibration of the underlying models.
Furthermore, the rHW model allows for fast pricing of derivatives in a generic Monte-Carlo framework, which is typically used for exposure calculations.
The future rHW ZCBs that are required to compute future market values are retrieved using an independent regression-based Monte-Carlo simulation, such that evaluating a ZCB in an exposure framework boils down to evaluating a polynomial with pre-computed coefficients.
Evaluating the rHW ZCBs is of similar computational cost evaluating the analytic ZCB expression for the HW model.

Using the rHW model calibrated to market data, it can be shown that there is a significant effect of smile and skew on exposures and $\xva$ metrics of IR derivatives.
For IR swaps, we have shown that in line with previous observations in the literature on exposures for FX and equity derivatives~\cite{GraafFengKandhaiOosterlee201406,FengOosterlee201709,SimaitisGraafHariKandhai201605}, the tails of the exposure distributions are affected by the smile.
Remarkably, we have also found that the average exposures and corresponding $\xva$ metrics are also significantly affected for all moneyness types, which has profound practical implications for $\xva$ pricing and risk management.
For early-exercise derivatives such as Bermudan swaptions, the same effects as for IR swaps are present, but with a larger magnitude of the smile impact.

These results motivate to investigate the extension of the rHW model to a multi-currency setting, which will undoubtedly come with additional challenges due to the high-dimensional nature of the underlying system of model dynamics.

\section*{Acknowledgements} This work has been financially supported by Rabobank.
The authors would like to thank Natalia Borovykh for her valuable input and fruitful discussions.
%Furthermore, we are grateful for the comments and feedback from the two anonymous referees, which contributed to an improvement of the manuscript.

{\footnotesize
\bibliographystyle{abbrv}
\bibliography{bib/MacroStrings,bib/Articles,bib/Books,bib/Misc,bib/Regulation} 
}

%\newpage
\appendix
%\small
\setlength\parindent{0pt}
\renewcommand{\thesection}{\appendixname\Alph{section}}
\renewcommand{\thedefinition}{\Alph{section}.\arabic{definition}}
\renewcommand{\thethrm}{\Alph{section}.\arabic{thrm}}
\renewcommand{\theprop}{\Alph{section}.\arabic{prop}}
\renewcommand{\thecrllry}{\Alph{section}.\arabic{crllry}}
\renewcommand{\thelem}{\Alph{section}.\arabic{lem}}

\section{Density measure change and invariance} \label{app:densityResults}

\begin{prop}[Density measure change] \label{prop:densityMeasureChange}
  Let $(\Omega, \F, \M)$ be a probability space where $\Omega$ is the set of outcomes (sample space), $\F$ is the filtration (events) and $\M$ is the probability measure.
  Consider real-valued random variable $X$ which is a measurable function $X : \Omega \to \R$.
  An $\M$-density of $X$ is a function $f^{\M}: \Omega \to [0,\infty)$, which can be recognized as the Radon-Nikodym derivative w.r.t. the Lebesgue measure $\mu$, i.e.,
  \begin{align}
    f^{\M} (y)
      &= \deriv{\M(y)}{\mu(y)}
      = \deriv{\M(y)}{y}, \nonumber
  \end{align}
  where the latter follows from the restriction to the real line.
  Assume that measures $\M$ and $\Nmeas$ are both absolutely continuous w.r.t. $\mu$, denoted as $\M \ll \mu$ and $\Nmeas \ll \mu$. 
  This assumption is not restrictive, as for continuous random variables this simply means that the densities are well defined.
  Assume that additionally $\M$ is absolutely continuous w.r.t. $\Nmeas$, i.e., $\M \ll \Nmeas \ll \mu$.
  Then, $\mu$-a.e.,
  \begin{align}
    f^{\M} (y)
      &= \deriv{\M(y)}{\Nmeas(y)} f^{\Nmeas} (y), \label{eq:randPDFtoN}
  \end{align}
  where $\deriv{\M(y)}{\Nmeas(y)}$ is the Radon-Nikodym derivative to change between measures $\M$ and $\Nmeas$.
\end{prop}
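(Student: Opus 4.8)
The plan is to recognize Equation~\eqref{eq:randPDFtoN} as the chain rule for Radon--Nikodym derivatives, applied to the \emph{law} of $X$ on the real line rather than to the abstract measures on $(\Omega,\F)$. First I would make the identification explicit: writing $\M_X$ and $\Nmeas_X$ for the pushforward measures $\M\circ X^{-1}$ and $\Nmeas\circ X^{-1}$ on $(\R,\mathcal{B}(\R))$, the object called $f^{\M}(y)$ in the statement is by definition a version of $\deriv{\M_X}{\mu}$, and similarly $f^{\Nmeas}(y)=\deriv{\Nmeas_X}{\mu}$, where $\mu$ is now Lebesgue measure on $\R$; the symbol $\deriv{\M(y)}{\Nmeas(y)}$ is to be read as $\deriv{\M_X}{\Nmeas_X}$. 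Under this reading the three stated absolute-continuity hypotheses are exactly $\M_X\ll\Nmeas_X\ll\mu$: indeed $\Nmeas(B)=0\Rightarrow\M(B)=0$ for all $B\in\F$ specialises, with $B=X^{-1}(A)$ for Borel $A$, to $\Nmeas_X(A)=0\Rightarrow\M_X(A)=0$, and likewise for the comparisons against $\mu$.

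Next I would invoke the two standard measure-theoretic facts. The first is the ``integration against a density'' lemma: if $h=\deriv{\Nmeas_X}{\mu}$ and $g\colon\R\to[0,\infty]$ is Borel, then $\int_A g\,\d\Nmeas_X=\int_A g\,h\,\d\mu$ for every Borel $A$; this is proved by the usual bootstrap from indicators to simple functions to nonnegative measurable functions, using linearity and monotone convergence. The second is the $\mu$-a.e. uniqueness of the Radon--Nikodym derivative, valid since $\mu$ is $\sigma$-finite. Taking $g=\deriv{\M_X}{\Nmeas_X}$, for every Borel $A$ one has $\M_X(A)=\int_A g\,\d\Nmeas_X=\int_A g\,h\,\d\mu$, so $g\,h$ is a version of $\deriv{\M_X}{\mu}$; by uniqueness $\deriv{\M_X}{\mu}=g\,h$ $\mu$-a.e., which is precisely Equation~\eqref{eq:randPDFtoN} after substituting the identifications of the first paragraph. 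A small point to record at the close is that ``$\mu$-a.e.'' is the right qualifier: $\deriv{\M_X}{\Nmeas_X}$ is a priori only defined $\Nmeas_X$-a.e., but since $\Nmeas_X\ll\mu$, a $\mu$-a.e. statement is at least as strong as a $\Nmeas_X$-a.e. one, so nothing is lost.

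The pushforward identification and the absolute-continuity transfer are routine bookkeeping; the only genuine content is the change-of-variable lemma $\int_A g\,\d\Nmeas_X=\int_A g\,h\,\d\mu$, and even that is textbook. Hence I do not expect a real obstacle here — the main care needed is purely notational, namely keeping straight that every ``density'' appearing in the statement is the density of the law of $X$, with $y$ the coordinate on $\R$, so that the chain rule is being applied on $(\R,\mathcal{B}(\R),\mu)$ and not on the original sample space.
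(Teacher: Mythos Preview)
Your proposal is correct and takes essentially the same approach as the paper: both invoke the chain rule (product rule) for Radon--Nikodym derivatives under the chain $\M\ll\Nmeas\ll\mu$. The paper's proof is a one-line citation of this rule, whereas you additionally spell out the pushforward identification and sketch the standard proof of the chain rule via the integration-against-a-density lemma and uniqueness; this extra care is sound but not required beyond what the paper assumes.
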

\begin{proof}
  Under the assumption of $\M \ll \Nmeas \ll \mu$, the result follows from the iterative application of the Radon-Nikodym derivative, i.e., the product rule for Radon-Nikodym derivatives:
  \begin{align}
      f^{\M} (y)
        &= \deriv{\M(y)}{\mu(y)}
        = \deriv{\M(y)}{\Nmeas(y)} \deriv{\Nmeas(y)}{\mu(y)} 
        = \deriv{\M(y)}{\Nmeas(y)} f^{\Nmeas} (y). \label{eq:randPDFtoNProof}
  \end{align}
\end{proof}

\begin{crllry}[Density measure invariance] \label{crllry:densityMeasureInvariance}
  Let the density equation~\eqref{eq:randPDF} be given under measure $\M$.
  If $\M \ll \Nmeas$, then the density equation is measure invariant, i.e., when the density equation holds under measure $\M$, it also holds for $\Nmeas$.
\end{crllry}
\begin{proof}
  This result follows directly from applying Proposition~\ref{prop:densityMeasureChange} to both sides of Equation~\eqref{eq:randPDF}, and the linearity of the Radon-Nikodym derivative.
\end{proof}

\section{The Hull-White model} \label{app:HW}

\subsection{Dynamics with time-dependent volatility} \label{app:HWTimeDepVol}
For ease of display, the results in Section~\ref{sec:rHW} are presented for the case of constant model volatility $\vol_x$.
However, in the $\xva$ context we require a time-dependent model volatility $\vol_x(t)$.
Here we present the formulae from Section~\ref{sec:rHW} for $\vol_x(t)$ rather than $\vol_x$.

The majority of the formulae in Section~\ref{sec:rHWDynamics} stay the same, but Equation~\eqref{eq:HWSDE1Integrated} changes into
\begin{align}
  \shortRate_n(t)
    &= x_n(s)\expPower{-\theta_n (t-s)} +  b_n(t) + \int_{s}^{t} \vol_{x}(u) \expPower{-\theta_n  (t-u)} \d\brownian^{\M}(u). \label{eq:HWSDE1IntegratedApp}
\end{align}
The variance of this process, see Equation~\eqref{eq:HWPDF}, becomes:
\begin{align}
  \condVarSmall{\shortRate_n(t)}{s}
    &= \int_{s}^{t} \vol_{x}^2(u) \expPower{-2\theta_n (t-u)}\du. \nonumber
\end{align}
Furthermore, the following expressions change in Proposition~\ref{prop:HWZCB}:
\begin{align}
  V_n(s,t) 
    &\ldef \int_{s}^{t} \vol_{x}^2(u) B_n^2(u,t) \du, \nonumber \\
  b_n(T)
    &= f^{\text{M}}(0,T) - x_n(0) \expPower{-\theta_nT} + \int_{0}^{T} \vol_{x}^2(u) B_n(u,T) \expPower{-\theta_n (T-u)}\du. \label{eq:HWb} 
\end{align}

\subsection{Zero-Coupon Bond} \label{app:HWZeroCouponBond}

The deterministic function $b_n(t)$ in Equation~\eqref{eq:HWSDE1} is obtained by fitting the model to the market yield curve, which is done in the derivation of the Zero-Coupon Bond (ZCB) under the HW dynamics, see Proposition~\ref{prop:HWZCB}.

\begin{prop}[HW ZCB] \label{prop:HWZCB}
  Let $\shortRate_n(t)$ be the HW dynamics from Equation~\eqref{eq:HWSDE1}.
  The ZCB under these dynamics is given by
  \begin{align}
    \zcb_{\shortRate_n}(t,T)
      &= \expPower{\half V_n(t,T) - \int_{t}^{T} b_n(u) \du - x_n(t)B_n(t,T) }, \nonumber \\
    V_n(s,t) 
      &\ldef \vol_{x}^2 \int_{s}^{t}  B_n^2(u,t) \du, \ \ 
    B_n(s,t)
      \ldef \frac{1}{\theta_n}\left( 1 - \expPower{-\theta_n (t-s)} \right), \nonumber \\
    b_n(T)
        &= f^{\text{M}}(0,T) - x_n(0) \expPower{-\theta_nT} +  \half \vol_{x}^2  B_n^2(0,T). \label{eq:HWb} 
%    \int_{t}^{T} b_n(u) \du
%      &=  \half\left[ V_n(0,T) - V_n(0,t) \right] - \ln \left(\frac{\zcb^{\text{M}}(0,T)}{\zcb^{\text{M}}(0,t)}\right) - x_n(0) \left[ B_n(0,T) - B_n(0,t) \right],\label{eq:HW9} \\
  \end{align}
  For $t=0$ and $x_n(0)=0$ we have $\zcb_{\shortRate_n}(0,T) = \zcb^{\text{M}}(0,T)$ $\forall T$.
\end{prop}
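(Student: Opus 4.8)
The plan is to compute the defining conditional expectation $\zcb_{\shortRate_n}(t,T) = \condExpSmallGeneric{\expPower{-\int_t^T \shortRate_n(s)\ds}}{t}{\Q_{\shortRate}}$ directly, exploiting that $\shortRate_n$ is a Gaussian process under $\Q_{\shortRate}$. First I would write $\shortRate_n(s) = x_n(s) + b_n(s)$ and pull the deterministic factor $\expPower{-\int_t^T b_n(s)\ds}$ out of the expectation, so that the task reduces to identifying the conditional law of $\int_t^T x_n(s)\ds$ given $\F_t$.

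For this, I would start from the integrated Ornstein--Uhlenbeck representation $x_n(s) = x_n(t)\expPower{-\theta_n(s-t)} + \vol_x\int_t^s \expPower{-\theta_n(s-u)}\dW^{\Q_{\shortRate}}(u)$, integrate over $s\in[t,T]$, and apply the stochastic Fubini theorem to exchange the $\ds$ and $\dW^{\Q_{\shortRate}}$ integrals. Using $\int_u^T \expPower{-\theta_n(s-u)}\ds = B_n(u,T)$ one obtains
\begin{align}
  \int_t^T x_n(s)\ds
    &= x_n(t)B_n(t,T) + \vol_x\int_t^T B_n(u,T)\dW^{\Q_{\shortRate}}(u). \nonumber
\end{align}
Thus, conditional on $\F_t$, the integral $\int_t^T \shortRate_n(s)\ds$ is normally distributed with mean $\int_t^T b_n(s)\ds + x_n(t)B_n(t,T)$ and, by the It\^o isometry, variance $\vol_x^2\int_t^T B_n^2(u,T)\du = V_n(t,T)$. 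Applying the Gaussian identity $\E[\expPower{-Y}] = \expPower{-\E[Y] + \half\,\text{Var}(Y)}$ then reproduces exactly the claimed exponential form for $\zcb_{\shortRate_n}(t,T)$.

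It then remains to determine $b_n$ by imposing the fit to the market yield curve, i.e. $\zcb_{\shortRate_n}(0,T) = \zcb^{\text{M}}(0,T) = \expPower{-\int_0^T f^{\text{M}}(0,u)\du}$. Evaluating the ZCB formula at $t=0$, taking logarithms and differentiating in $T$ gives $b_n(T) = f^{\text{M}}(0,T) - x_n(0)\expPower{-\theta_n T} + \half\pderiv{}{T}V_n(0,T)$. The computational heart of the proof is the identity $\pderiv{}{T}V_n(0,T) = \vol_x^2 B_n^2(0,T)$: by the Leibniz rule, using $B_n(T,T)=0$ and $\pderiv{}{T}B_n(u,T) = \expPower{-\theta_n(T-u)}$, one reduces $\half\pderiv{}{T}V_n(0,T)$ to $\vol_x^2\int_0^T B_n(u,T)\expPower{-\theta_n(T-u)}\du$, and the substitution $v=T-u$ evaluates this elementary integral to $\half\vol_x^2 B_n^2(0,T)$, which yields the stated $b_n$.

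Finally, for the consistency statement: with $t=0$ and $x_n(0)=0$, substituting this $b_n$ back into the ZCB expression and using the symmetry $\int_0^T B_n^2(u,T)\du = \int_0^T B_n^2(0,u)\du$ (obtained from the change of variable $w=T-u$, since $B_n(u,T) = B_n(0,T-u)$), the $\half V_n(0,T)$ term cancels the $\half\vol_x^2 B_n^2$ contribution in $\int_0^T b_n(u)\du$, leaving precisely $-\int_0^T f^{\text{M}}(0,u)\du$, i.e. $\zcb_{\shortRate_n}(0,T) = \zcb^{\text{M}}(0,T)$. I expect the main obstacle to be the careful bookkeeping in the stochastic Fubini step together with the integral identity for $\pderiv{}{T}V_n(0,T)$; the remaining manipulations are routine Gaussian algebra. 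For the time-dependent volatility variant of Appendix~\ref{app:HWTimeDepVol} the same argument carries over verbatim with $\vol_x$ replaced by $\vol_x(u)$ inside every integral.
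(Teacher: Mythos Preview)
Your proposal is correct and follows essentially the same route as the paper: both compute the conditional law of $\int_t^T \shortRate_n(s)\,\d s$ as Gaussian (the paper via ``integrating the dynamics'', you via stochastic Fubini), apply the exponential-normal identity to obtain the ZCB, and then fix $b_n$ by differentiating $\ln \zcb_{\shortRate_n}(0,T)$ in $T$ and matching the market instantaneous forward, using the Leibniz rule to evaluate $\tfrac12\,\partial_T V_n(0,T)=\tfrac12\,\vol_x^2 B_n^2(0,T)$. Your write-up is slightly more explicit (the Fubini step, the elementary integral for $\partial_T V_n$, and the final consistency check), but the underlying argument is identical.
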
    
\begin{proof}
    Integrating the dynamics from Equation~\eqref{eq:HWSDE1Integrated} yields an expression for $\int_{s}^{t}\shortRate_n(u)\du$:
    \begin{align}
      \int_{s}^{t} \shortRate_n(u) \du
        &= x_n(s) B_n(s,t)  + \int_{s}^{t} b_n(u) \du + \vol_{x}  \int_{s}^{t} B_n(u,t)\d\brownian^{\M}(u), \label{eq:HW2b}
    \end{align}
    which follows a normal distribution with mean and variance
    \begin{align}
      \condExpSmallGeneric{\int_{s}^{t}\shortRate_n(u) \du}{s}{\Q_{\shortRate_n}}
        &=  x_n(s) B_n(s,t) + \int_{s}^{t} b_n(u) \du, \nonumber \\
      V_n(s,t)
        &\ldef \condVarSmall{\int_{s}^{t} \shortRate_n(u) \du}{s}
        = \vol_{x}^2 \int_{s}^{t}  B_n^2(u,t) \du. \nonumber
    \end{align}
    Using the definition of the ZCB we write~\footnote{Using the property of exponential normals: if $z \sim \N\left(\mu_z,\sigma_z^2\right)$ then $\E\left[\expPower{z}\right] = \expPower{\mu_z + \half\sigma_z^2}$.}
    \begin{align}
      \zcb_{\shortRate_n}(t,T)
        &=  \expPower{\condExpSmallGeneric{-\int_{t}^{T} \shortRate_n(u) \du  }{t}{\Q_{\shortRate_n}}  + \half \condVarSmall{-\int_{t}^{T} \shortRate_n(u) \du  }{t}}
        = \expPower{ \half V_n(t,T) - \int_{t}^{T} b_n(u) \du - x_n(t) B_n(t,T)}. \label{eq:HW5}
    \end{align}
    In this equation, $b_n(t)$ is the only unknown.
    Writing down the term structure of the model can be done using Equation~\eqref{eq:HW5} and the Leibniz integration rule~\footnote{For $a$ constant, $\deriv{}{x}\left( \int_a^x f(x,t) \dt \right) = f(x,x) + \int_a^x \pderiv{}{x} f(x,t) \dt$.}:
    \begin{align}
      f_{\shortRate_n}(t,T)
        &= \frac{-\partial \left(\ln \zcb_{\shortRate_n}(t,T) \right)}{\partial T}
        =  - \half\pderiv{V_n(t,T)}{T} + x_n(t) \pderiv{B_n(t,T)}{T} + b_n(T). \label{eq:HW6}
    \end{align}
    We want the model's initial term structure $f_{\shortRate_n}(0,T)$ to fit the initial term structure from the market $f^{\text{M}}(0,T)$.
    As a consequence:
    \begin{align}
      b_n(T)
        &= f^{\text{M}}(0,T)  + \half \pderiv{V_n(0,T)}{T} - x_n(0) \pderiv{B_n(0,T)}{T} 
        = f^{\text{M}}(0,T) - x_n(0) \expPower{-\theta_nT} +  \half \vol_{x}^2  B_n^2(0,T). \nonumber
    \end{align}
    This concludes the proof.
    %Integrating Equation~\eqref{eq:HWb} yields the expression for $\int_{t}^{T} b_n(u) \du$ from Equation~\eqref{eq:HW9}.
\end{proof}

\subsection{Zero-Coupon Bond Option pricing} \label{app:HWZCBO}
Using the ZCB expression from Proposition~\ref{prop:HWZCB}, we look at the pricing of a Zero-Coupon Bond Option (ZCBO), see Proposition~\ref{prop:HWZCBO}.
\begin{prop}[HW ZCBO pricing] \label{prop:HWZCBO}
  Let $\shortRate_n(t)$ be the HW dynamics from Equation~\eqref{eq:HWSDE1}.
  A ZCBO is an option expiring at $T$ on a ZCB $\zcb_{\shortRate_n}(T, S)$ with maturity $S > T$, with strike $\strike$, and option type $\optType=1$ for a call and $\optType = -1$ for a put.
  The value of a unit notional ZCBO under the $\shortRate_n(t)$ dynamics is given by:
  \begin{align}
    \tradeVal^{\text{ZCBO}}_{\shortRate_n}(t;T,S,\strike, \optType)
      &= \optType \left[\zcb_{\shortRate_n}(t,S) \normCDF(\optType d_1) - \strike  \zcb_{\shortRate_n}(t,T) \normCDF(\optType d_2)\right], \label{eq:zcboHW} \\
    d_1
      &\ldef \frac{\ln \left( \frac{\zcb_{\shortRate_n}(t,S)}{\strike \cdot \zcb_{\shortRate_n}(t,T)}\right)}{\Sigma_{\shortRate_n}(t,T,S)} + \half\Sigma_{\shortRate_n}(t,T,S), \ 
    d_2
      \ldef d_1 - \Sigma_{\shortRate_n}(t,T,S), \nonumber \\
    \Sigma_{\shortRate_n}^2(t,T,S)
      &\ldef \condVarSmall{\ln \zcb_{\shortRate_n}(T,S)}{t}. \nonumber
  \end{align}
\end{prop}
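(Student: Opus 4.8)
The plan is to price the ZCBO by martingale pricing, change to the $T$-forward measure $\Q_{\shortRate_n}^T$ with numeraire $\zcb_{\shortRate_n}(t,T)$, and then reduce the problem to the Black--Scholes lemma for a lognormal random variable. First I would write the risk-neutral price of the option and perform the change of numeraire from $\bank_{\shortRate_n}$ to $\zcb_{\shortRate_n}(\cdot,T)$:
\begin{align}
  \tradeVal^{\text{ZCBO}}_{\shortRate_n}(t;T,S,\strike,\optType)
    &= \condExpSmallGeneric{\expPower{-\int_t^T \shortRate_n(u)\du}\maxOperator{\optType\left(\zcb_{\shortRate_n}(T,S)-\strike\right)}}{t}{\Q_{\shortRate_n}}
    = \zcb_{\shortRate_n}(t,T)\,\condExpSmallGeneric{\maxOperator{\optType\left(\zcb_{\shortRate_n}(T,S)-\strike\right)}}{t}{\Q_{\shortRate_n}^T}, \nonumber
\end{align}
where the last equality uses that $\zcb_{\shortRate_n}(T,T)=1$ in the Radon--Nikodym density of $\Q_{\shortRate_n}^T$ with respect to $\Q_{\shortRate_n}$.

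Next I would identify the payoff variable. Let $G(u)\ldef\zcb_{\shortRate_n}(u,S)/\zcb_{\shortRate_n}(u,T)$ denote the $T$-forward price of the $S$-maturity bond. Being the ratio of a traded asset to the numeraire, $G$ is a $\Q_{\shortRate_n}^T$-martingale, so $\condExpSmallGeneric{G(T)}{t}{\Q_{\shortRate_n}^T}=G(t)=\zcb_{\shortRate_n}(t,S)/\zcb_{\shortRate_n}(t,T)$; moreover $G(T)=\zcb_{\shortRate_n}(T,S)$ since $\zcb_{\shortRate_n}(T,T)=1$. From Proposition~\ref{prop:HWZCB}, $\ln\zcb_{\shortRate_n}(T,S)$ is an affine function of $x_n(T)$ with deterministic coefficients, and $x_n$ is a Gaussian process with a \emph{deterministic} (state-independent) diffusion coefficient. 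Hence the change of measure from $\Q_{\shortRate_n}$ to $\Q_{\shortRate_n}^T$ is a Girsanov transformation with a deterministic kernel: it only shifts the drift of $x_n$, leaving the conditional law Gaussian and, crucially, the conditional variance unchanged. Therefore, conditionally on $\F(t)$, $\ln G(T)$ is normal under $\Q_{\shortRate_n}^T$ with variance $\Sigma_{\shortRate_n}^2(t,T,S)=\condVarSmall{\ln\zcb_{\shortRate_n}(T,S)}{t}$ and a mean pinned down by the martingale constraint $\condExpSmallGeneric{G(T)}{t}{\Q_{\shortRate_n}^T}=G(t)$.

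Finally I would apply the standard lognormal option lemma: if $Y>0$ with $\condExpSmallGeneric{Y}{t}{\Q_{\shortRate_n}^T}=F$ and $\condVarSmall{\ln Y}{t}=\Sigma^2$, then $\condExpSmallGeneric{\maxOperator{\optType(Y-\strike)}}{t}{\Q_{\shortRate_n}^T}=\optType\left[F\normCDF(\optType d_1)-\strike\normCDF(\optType d_2)\right]$ with $d_1=\ln(F/\strike)/\Sigma+\half\Sigma$ and $d_2=d_1-\Sigma$. Taking $Y=\zcb_{\shortRate_n}(T,S)$, $F=G(t)=\zcb_{\shortRate_n}(t,S)/\zcb_{\shortRate_n}(t,T)$, $\Sigma=\Sigma_{\shortRate_n}(t,T,S)$, and multiplying through by $\zcb_{\shortRate_n}(t,T)$ — which converts $F\normCDF(\optType d_1)$ into $\zcb_{\shortRate_n}(t,S)\normCDF(\optType d_1)$ and $\strike\normCDF(\optType d_2)$ into $\strike\,\zcb_{\shortRate_n}(t,T)\normCDF(\optType d_2)$ — gives exactly Equation~\eqref{eq:zcboHW}. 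The main obstacle is the variance bookkeeping in the third-to-last step: one must argue cleanly that passing to the $T$-forward measure does not change $\Sigma_{\shortRate_n}^2(t,T,S)$, which rests on the diffusion coefficient of $x_n$ being deterministic so that only the drift is affected by Girsanov (and that drift shift never needs to be computed, since the martingale property fixes $F$). The explicit evaluation $\Sigma_{\shortRate_n}^2(t,T,S)=B_n^2(T,S)\condVarSmall{x_n(T)}{t}$ via $V_n,B_n$ from Proposition~\ref{prop:HWZCB} is then a routine deterministic computation that I would not spell out here.
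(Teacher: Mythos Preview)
Your proof is correct and follows essentially the same approach as the paper: change to the $T$-forward measure, establish that $\zcb_{\shortRate_n}(T,S)$ is conditionally lognormal under $\Q_{\shortRate_n}^T$, and apply the Black formula. Your use of the martingale property of the forward bond price $G(u)=\zcb_{\shortRate_n}(u,S)/\zcb_{\shortRate_n}(u,T)$ to pin down the mean is slightly slicker than the paper's explicit Girsanov computation of the drift adjustment $M_n^T(s,t)$, but the overall route and the variance identification $\Sigma_{\shortRate_n}^2(t,T,S)=B_n^2(T,S)\condVarSmall{x_n(T)}{t}$ are the same.
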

\begin{proof}
    The ZCBO payoff is given by $\payoff^{\text{ZCBO}}(T; \shortRate_n(T)) \ldef \maxOperator{\optType \left[\zcb_{\shortRate_n}(T,S) - \strike \right] }$, such that the value is given by
      \begin{align}
        \tradeVal^{\text{ZCBO}}_{\shortRate_n}(t;T,S,\strike,\optType)
          %&= \condExpSmallGeneric{\frac{\bank_{\shortRate_n}(t)}{\bank_{\shortRate_n}(T)}\maxOperator{\optType \left[\zcb_{\shortRate_n}(T,S) - \strike \right] }}{t}{\Q_{\shortRate_n}} \nonumber \\
          &= \zcb_{\shortRate_n}(t, T)\condExpSmallGeneric{\payoff^{\text{ZCBO}}(T; \shortRate_n(T))}{t}{\Q_{\shortRate_n}^T}. \label{eq:zcbo}
      \end{align}
      The ZCB $\zcb_{\shortRate_n}(T,S)$ is stated in Proposition~\ref{prop:HWZCB}.
      We move to the corresponding $T$-forward measure $\Q_{\shortRate_n}^T$, so we write down the following Radon-Nikodym derivative:
      \begin{align}
        \left.\deriv{\Q_{\shortRate_n}^T}{\Q_{\shortRate_n}}\right|_{\F(T)}
          &= \frac{\bank_{\shortRate_n}(0)}{\bank_{\shortRate_n}(T)} \frac{\zcb_{\shortRate_n}(T,T)}{\zcb_{\shortRate_n}(0,T)} 
          = \expBrace{ -\half \vol_{x}^2 \int_{0}^{T}  B_{n}^2(u,T) \du - \vol_{x} \int_{0}^{T}  B_{n}(u,T)\d\brownian^{\Q_{\shortRate_n}}(u)}. \nonumber
      \end{align}
      Girsanov then gives the $\Q_{\shortRate_n}^T$-Brownian motion: $\d \brownian^{\Q_{\shortRate_n}^T}(t) = \d \brownian^{\Q_{\shortRate_n}}(t) + \vol_{x} B_{n}(t,T) \dt$.
      The HW dynamics from Equation~\eqref{eq:randUnderlyingSDE} for $\M = \Q_{\shortRate_n}^{T}$ are then given by
      \begin{align}
        \d\shortRate_n(t)
          &= \mu_{\shortRate_n}^{\Q_{\shortRate_n}^T}(t, \shortRate_n(t)) \dt + \eta_{\shortRate_n}(t, \shortRate_n(t)) \dW^{\Q_{\shortRate_n}^T}(t), \nonumber \\
        \mu_{\shortRate_n}^{\Q_{\shortRate_n}^T}(t, \shortRate_n(t)) 
          &= \mu_{\shortRate_n}^{\Q_{\shortRate_n}}(t, \shortRate_n(t)) - \vol_x^2 B_{n}(t,T), \label{eq:hw1fDriftTFwd}
      \end{align}
      where the diffusion is as in Proposition~\ref{prop:randUnderlyingSDEHW}.
      Using the change of measure, we write the integrated HW dynamics~\eqref{eq:HWSDE1Integrated} under the $T$-forward measure $\Q_{\shortRate_n}^T$:
      \begin{align}
        \shortRate_n(t)
          &= x_n(s)\expPower{-\theta_n (t-s)} + b_n(t) - M_{n}^T(s,t) + \vol_{x}  \int_{s}^{t} \expPower{-\theta_n (t-u)} \d \brownian^{\Q_{\shortRate_n}^T}(u), \label{eq:HWSDE1IntegratedTforward} \\
        M_{n}^T(s,t)
          &\ldef \vol_{x}^2 \int_{s}^{t}  \expPower{-\theta_n (t-u)} B_{n}(u,T) \du. \label{eq:HWSDE1IntegratedTforwardDriftAdjustment}
      \end{align}
      Under $\Q_{\shortRate_n}^T$, $\shortRate_n(t)$ still follows a normal distribution, but with an adjusted mean $\condExpSmallGeneric{\shortRate_n(t)}{s}{\Q_{\shortRate_n}^T} = \condExpSmallGeneric{\shortRate_n(t)}{s}{\Q_{\shortRate_n}} - M_{n}^T(s,t)$.
      Hence, under $\Q_{\shortRate_n}^T$, conditional on $\F_t$, $\ln \zcb_{\shortRate_n}(T,S)$ is normally distributed with mean and variance
      \begin{align}
        \condExpSmallGeneric{\ln \zcb_{\shortRate_n}(T,S) }{t}{\Q_{\shortRate_n}^T}
          &= \half V_n(T,S) - \int_{T}^{S} b_n(u) \du - B_n(T,S)\left[ x_n(t)\expPower{-\theta_n (T-t)} - M_{n}^T(t,T) \right] \nonumber \\
          &= \condExpSmallGeneric{\ln \zcb_{\shortRate_n}(T,S) }{t}{\Q_{\shortRate_n}} + B_n(T,S)  M_{n}^T(t,T) , \label{eq:HW15a} \\
        \Sigma_{\shortRate_n}^2(t,T,S)
          &\ldef \condVarSmall{\ln \zcb_{\shortRate_n}(T,S)}{t}
          = B_n^2(T,S)  \condVarSmall{\shortRate_n(T)}{t}. \label{eq:HW15b}
      \end{align}
      As a consequence, the ZCBO pricing from Equation~\eqref{eq:zcbo} under the HW model reduces to a Black-type formula.
\end{proof}

\subsection{Swaption pricing} \label{app:HWSwaption}
Using the ZCBO pricing equation from Proposition~\ref{prop:HWZCBO}, we can now shift the attention to pricing swaptions in a similar setup.
First, in Definition~\ref{def:swap}, we introduce the valuation of a swap, after which in Proposition~\ref{prop:HWSwaption} the swaption pricing is discussed.

\begin{definition}[Swap] \label{def:swap}
  Let the value of a unit notional swap under a single curve setup starting at $T_0$, maturing at $T_m$ with intermediate payment dates $T_1 < T_2 < \ldots < T_{m-1} < T_m$, with strike $\strike$ be given by:
  \begin{align}
    \tradeVal^{\text{Swap}}(t;T_1,\ldots,T_m,\strike,\swapType)
      &= \swapType \left(-\zcb(t,T_0) + \zcb(t,T_m) + \strike \sum_{k=1}^m \dct_k \zcb(t,T_k)\right), \label{eq:swap1} \\
    \dct_k
      &\ldef T_k - T_{k-1}, \ 
    \swapType
      \ldef \left\{
           \begin{array}{cc}
        	-1, & \text{payer swap}, \\
        	1, & \text{receiver swap}.
           \end{array}
           \right. \nonumber
  \end{align}
\end{definition}

\begin{prop}[HW swaption pricing] \label{prop:HWSwaption}
  Let $\shortRate_n(t)$ be the HW dynamics from Equation~\eqref{eq:HWSDE1} under $\Q_{\shortRate_n}$.
  Consider a swaption on a swap as per Definition~\ref{def:swap}, with swaption expiry date $T_M \leq T_0$, where $T_0$ is the first swap reset date.
  The value of the swaption under the $\shortRate_n(t)$ dynamics and under the measure $\Q_{\shortRate_n}^T$ is given by:
  \begin{align}
    &\tradeVal^{\text{Swaption}}_{\shortRate_n}(t;T_M,T_1,\ldots,T_m,\strike,\swapType)
      = \sum_{k=1}^m w_k  \tradeVal^{\text{ZCBO}}_{\shortRate_n}(t;T_M,T_k,\tilde{\strike}_k,\swapType), \label{eq:swaption10HW} \\
    &w_k
        \ldef \left\{
        \begin{array}{ll}
          -1, & k=0, \\
          \strike\dct_k, & k=1,\ldots,m-1, \\
          1 + \strike \dct_k, & k=m,
        \end{array}
        \right. \ 
    \tilde{\strike}_k
        \ldef \left\{
        \begin{array}{ll}
          1, & k=0, \\
          \zcb_{\shortRate_n}(T_M,T_k;\shortRate_n^{*}(T_M)), & k \in \{1,\ldots,m\}.
        \end{array}
        \right. \nonumber
  \end{align}
  Here, $\shortRate_n^{*}(T_M)$ is obtained numerically by solving $\sum_{k=0}^m w_k \zcb_{\shortRate_n}(T_M,T_k;\shortRate_n^{*}(T_M)) = 0$.
  Furthermore, ZCB option $\tradeVal^{\text{ZCBO}}_{\shortRate_n}(t;T,S,\strike,\swapType)$ under the HW model and the $\Q_{\shortRate_n}^T$ measure is given in Proposition~\ref{prop:HWZCBO}.
\end{prop}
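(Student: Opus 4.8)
The plan is to reproduce Jamshidian's decomposition, exploiting the fact that under the Gaussian HW dynamics every zero-coupon bond observed at the swaption expiry $T_M$ is a deterministic, strictly monotone function of the single state variable $\shortRate_n(T_M)$. First I would rewrite the payoff: by Definition~\ref{def:swap} the swap value at $T_M$ is $\tradeVal^{\text{Swap}}(T_M)=\swapType\sum_{k=0}^m w_k\zcb_{\shortRate_n}(T_M,T_k)$ with exactly the weights $w_k$ appearing in the statement, so that by risk-neutral valuation and the change to the $T_M$-forward measure (identical to the step leading to Equation~\eqref{eq:zcbo})
\begin{align}
  \tradeVal^{\text{Swaption}}_{\shortRate_n}(t;T_M,T_1,\ldots,T_m,\strike,\swapType)
    &= \zcb_{\shortRate_n}(t,T_M)\,\condExpSmallGeneric{\maxOperator{\swapType\sum_{k=0}^m w_k\zcb_{\shortRate_n}(T_M,T_k)}}{t}{\Q_{\shortRate_n}^{T_M}}. \nonumber
\end{align}
By Proposition~\ref{prop:HWZCB} one has $\zcb_{\shortRate_n}(T_M,T_k;x)=\expPower{\,\tfrac12 V_n(T_M,T_k)-\int_{T_M}^{T_k}b_n(u)\du - x\,B_n(T_M,T_k)}$, and $B_n(T_M,T_k)>0$ for every $T_k>T_M$ (since $\theta_n$ and $1-\expPower{-\theta_n(T_k-T_M)}$ share the same sign, with the limiting value $T_k-T_M$ when $\theta_n=0$), so each map $x\mapsto\zcb_{\shortRate_n}(T_M,T_k;x)$ is strictly decreasing; the $k=0$ term is simply the constant $1$ in the case $T_M=T_0$.

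Next I would establish the existence and uniqueness of the critical state $\shortRate_n^{*}(T_M)$. Writing the swap value in the annuity/swap-rate form $\tradeVal^{\text{Swap}}(T_M)=\swapType\,\annuity(T_M)\bigl[\strike-\swapRate(T_M)\bigr]$ with $\annuity(T_M)>0$, the forward swap rate $\swapRate(T_M)$ is a continuous, strictly increasing function of $\shortRate_n(T_M)$ whose range covers $\strike$; hence there is a unique $\shortRate_n^{*}(T_M)$ with $\sum_{k=0}^m w_k\zcb_{\shortRate_n}(T_M,T_k;\shortRate_n^{*}(T_M))=0$, and with $\tilde{\strike}_k$ defined as in the statement this is precisely $\sum_{k=0}^m w_k\tilde{\strike}_k=0$.

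Then I would prove the pathwise identity underlying the decomposition. Because every $\zcb_{\shortRate_n}(T_M,T_k;\cdot)$ is decreasing and $\tilde{\strike}_k=\zcb_{\shortRate_n}(T_M,T_k;\shortRate_n^{*}(T_M))$, for any realisation of $\shortRate_n(T_M)$ the differences $\zcb_{\shortRate_n}(T_M,T_k)-\tilde{\strike}_k$ all carry the \emph{same} sign (nonnegative precisely when $\shortRate_n(T_M)\leq\shortRate_n^{*}(T_M)$). Splitting the real line into the two half-lines $\{\shortRate_n(T_M)\lessgtr\shortRate_n^{*}(T_M)\}$ and using $\sum_k w_k\tilde{\strike}_k=0$ yields
\begin{align}
  \maxOperator{\swapType\sum_{k=0}^m w_k\zcb_{\shortRate_n}(T_M,T_k)}
    &= \sum_{k=0}^m w_k\,\maxOperator{\swapType\bigl(\zcb_{\shortRate_n}(T_M,T_k)-\tilde{\strike}_k\bigr)}, \nonumber
\end{align}
since on the in-the-money half-line every positive part is active and the right-hand side collapses to the linear combination $\swapType\sum_k w_k(\zcb_{\shortRate_n}(T_M,T_k)-\tilde{\strike}_k)$, whereas on the other half-line every positive part vanishes; crucially the signs of the $w_k$ play no role. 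Taking $\condExpSmallGeneric{\cdot}{t}{\Q_{\shortRate_n}^{T_M}}$, pulling out $\zcb_{\shortRate_n}(t,T_M)$, and identifying each summand with $w_k\,\tradeVal^{\text{ZCBO}}_{\shortRate_n}(t;T_M,T_k,\tilde{\strike}_k,\swapType)$ through Equation~\eqref{eq:zcbo} of Proposition~\ref{prop:HWZCBO} (with $T\mapsto T_M$, $S\mapsto T_k$, $\optType\mapsto\swapType$) gives exactly~\eqref{eq:swaption10HW}.

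I expect the sign bookkeeping to be the main obstacle: one must argue carefully that the common monotonicity of the bonds forces all the differences $\zcb_{\shortRate_n}(T_M,T_k)-\tilde{\strike}_k$ to agree in sign so that the positive part distributes over the signed linear combination, and one must treat the degenerate $k=0$ term when $T_M=T_0$ (the constant bond $\equiv 1$ with $B_n(T_M,T_0)=0$, for which $\tilde{\strike}_0=1$ holds automatically). The remaining ingredients — the forward-measure change and the Black-type ZCBO formula — are already furnished by Propositions~\ref{prop:HWZCB} and~\ref{prop:HWZCBO}.
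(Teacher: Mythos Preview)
Your proposal is correct and follows essentially the same route as the paper: both arguments are Jamshidian's decomposition, using the monotonicity of $\zcb_{\shortRate_n}(T_M,T_k;\cdot)$ in the single state variable to locate the unique critical level $\shortRate_n^{*}(T_M)$ and then distribute the positive part over the signed linear combination via the constraint $\sum_k w_k\tilde{\strike}_k=0$. Your treatment is in fact slightly more careful than the paper's in handling both $\swapType$ cases and the degenerate $k=0$ term, but the underlying idea and the chain of steps are identical.
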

\begin{proof}
    The swaption payoff $\payoff^{\text{Swaption}}(T_M;\shortRate_n(T_M))$ reads
    \begin{align}
      &\payoff^{\text{Swaption}}(T_M;\shortRate_n(T_M))
      \ldef \maxOperator{\tradeVal^{\text{Swap}}_{\shortRate_n}(T_M;T_1,\ldots,T_m,\strike,\swapType)} \nonumber \\
      & = \maxOperator{\swapType \left[-\zcb_{\shortRate_n}(T_M,T_0;\shortRate_n(T_M)) + \zcb_{\shortRate_n}(T_M,T_m;\shortRate_n(T_M)) + \strike \cdot \sum_{k=1}^m \dct_k \zcb_{\shortRate_n}(T_M,T_k;\shortRate_n(T_M))\right]}, \label{eq:swaption1HW}
    \end{align}
    where we explicitly denote the dependence of the ZCB on $\shortRate_n(T_M)$.
    We define $\shortRate_n^{*}(T_M)$ such that the following holds:
    \begin{align}
      &\payoff^{\text{Swaption}}(T_M;\shortRate_n(T_M)) = 0, \nonumber \\
      \Leftrightarrow \ & - \zcb_{\shortRate_n}(T_M,T_0;\shortRate_n^{*}(T_M)) + \zcb_{\shortRate_n}(T_M,T_m;\shortRate_n^{*}(T_M)) + \strike \cdot \sum_{k=1}^m \dct_k \zcb_{\shortRate_n}(T_M,T_k;\shortRate_n^{*}(T_M)) = 0, \nonumber \\
      \Leftrightarrow \ & \sum_{k=0}^m w_k \zcb_{\shortRate_n}(T_M,T_k;\shortRate_n^{*}(T_M)) = 0. \label{eq:swaption7HW}
    \end{align}
    Here, $\shortRate_n^{*}(T_M)$ is obtained numerically such that the above holds.
    Since $\zcb_{\shortRate_n}(T_M,T_k;y)$ from Proposition~\ref{prop:HWZCB} is monotonically decreasing in $y$, the swaption only pays out if $\shortRate_n(T_M) > \shortRate_n^{*}(T_M)$.
    Using the monotonicity and the result from Equation~\eqref{eq:swaption7HW} allows us to rewrite the swaption payoff from Equation~\eqref{eq:swaption1HW}:
    \begin{align}
      \payoff^{\text{Swaption}}(T_M;\shortRate_n(T_M))
        &=  \swapType \left(\zcb_{\shortRate_n}(T_M,T_m;\shortRate_n(T_M)) - \tilde{\strike}_m \right) \indicator{\shortRate_n(T_M) > \shortRate_n^{*}(T_M)}  \nonumber \\
        &\quad + \left(\strike \cdot \sum_{k=1}^m \dct_k \swapType \left[  \zcb_{\shortRate_n}(T_M,T_k;\shortRate_n(T_M)) - \tilde{\strike}_k \right]\right) \indicator{\shortRate_n(T_M) > \shortRate_n^{*}(T_M)} \nonumber \\
        & = \sum_{k=1}^m w_k \maxOperator{\swapType \left[\zcb_{\shortRate_n}(T_M,T_k;\shortRate_n(T_M)) - \tilde{\strike}_k \right]}. \label{eq:swaption9HW}
    \end{align}
    Rewriting the swaption payoff in this particular form is known as Jamshidian's decomposition~\cite{Jamshidian198903}.
    Using the payoff from Equation~\eqref{eq:swaption9HW} we write the decompose the swaption payoff into a sum of ZCB put options.
    \begin{align}
      \tradeVal^{\text{Swaption}}_{\shortRate_n}(t;T_M,T_1,\ldots,T_m,\strike,\swapType)
        &= \zcb_{\shortRate_n}(t,T_M) \condExpSmallGeneric{\payoff^{\text{Swaption}}(T_M;\shortRate_n(T_M))}{t}{\Q_{\shortRate_m}^{T_M}}  \nonumber \\
        &= \zcb_{\shortRate_n}(t,T_M) \sum_{k=1}^m w_k  \condExpSmallGeneric{\maxOperator{\swapType \left[\zcb_{\shortRate_n}(T_M,T_k) - \tilde{\strike}_k \right]}}{t}{\Q_{\shortRate_m}^{T_M}} \nonumber \\
        &= \sum_{k=1}^m w_k  \tradeVal^{\text{ZCBO}}_{\shortRate_n}(t,T_M,T_k,\tilde{\strike}_k,\swapType),\label{eq:swaption10HW}
    \end{align}
    where ZCB option $\tradeVal^{\text{ZCBO}}_{\shortRate_n}(\cdot)$ is given in Proposition~\ref{prop:HWZCBO}.
\end{proof}

\subsection{Model calibration} \label{app:HWCalibration}
As market data for the HW volatility calibration we use a strip of ATM co-terminal swaptions with maturity $\TCot$.
We assume that the model volatility $\vol_{x}(t)$ is piece-wise constant on the intervals between the co-terminal expiries with value $\vol_{x,c}$, i.e.,
\begin{align}
  \vol_{x}(t)
    &\ldef \left\{
    \begin{array}{ll}
      \vol_{x,1} & \text{for } t\in\left(0, \TExpCot_1\right], \\
      \vol_{x,2} & \text{for } t\in\left(\TExpCot_1, \TExpCot_2\right], \\
      \vdots &  \\
      \vol_{x,\NCot} & \text{for } t\in\left(\TExpCot_{\NCot-1}, \TExpCot_{\NCot}\right].
    \end{array}
    \right. \label{eq:piecewiseVolHW}
\end{align}
Furthermore, we use flat extrapolation in volatility on both ends.

The calibration of the piece-wise constant model volatility is typically done for a given, constant, mean-reversion parameter.
To match the market quotes, we use a bootstrapping approach with local and sequential optimization, where the volatility is calibrated period by period.
The calibration procedure is summarized in Algorithm~\ref{algo:HWCalib}.

\begin{algorithm}[h!]
    \footnotesize
    \SetAlgoLined % For and if blocks end with keyword 'end'
    \DontPrintSemicolon % Some LaTeX compilers require you to use \dontprintsemicolon instead

    %\SetKw{KwCommand}{text to be printed}} creating custom language keywords
    %\SetKwFunction{KwFunctionCommand}{text to be printed} creating custom function keywords
    %\SetKwData{KwDataCommand}{text to be printed}} creating custom text different from the default

    %\TitleOfAlgo{Insert your title here}
    %\KwData{Possible input data}
    %\KwResult{Possible results}
    \KwIn{Market implied volatility surface, co-terminal maturity $\TCot$, HW mean reversion $\theta_n$}
    \KwOut{Calibrated $\vol_{x}(t)$}
    \BlankLine
    
    Initialize the HW process $\shortRate_n$ with mean reversion $\theta_n$ as per Equation~\eqref{eq:HWSDE1} \;
    Initialize the volatility pillar dates\;
    
    Set calibration tolerance $\varepsilon$ \;

    \For{$c \in \{1,2,\ldots,\NCot\}$}{
        Select expiry $\TExpCot_c$, tenor $\TTenCot_c$ and corresponding ATM strike $\strike_{c}$ \;
        Retrieve market implied volatility $\impliedVolMkt_{c}$ and convert to price $\tradeValMkt_{c}$ \;
        Create an objective function $\objFun$, e.g., $ \objFun(\vol_{x,c}) \ldef \left| \tradeValMkt_{c} - \tradeValMdl_{\shortRate_n; c}(\vol_{x,c}) \right| $ \;
        Use $\vol_{x,c-1}$ as initial guess for $\vol_{x,c}$\;
        Keep $\vol_{x,1}, \ldots, \vol_{x,c-1}$ fixed, we calibrate $\vol_{x,c}$ with the current calibration instrument \;
        \While{$\objFun(\vol_{x,c}) > \varepsilon$}{
            Newton-Raphson step to update $\vol_{x,c}$ \;
            Compute error metric $\objFun(\vol_{x,c})$ to assess the model fit to the market data \;
        }
    }
    %Compute required output metrics and visualize the results \;
    \caption{HW volatility calibration algorithm}
    \label{algo:HWCalib}
\end{algorithm}

The mean-reversion parameter can then be used to get a good overall fit to the ATM slice of the market volatility surface~\cite{PuetterRenzitti202011a}.
We fit the mean reversion such that all ATM swaption points for all expiries and all tenors (also the non-co-terminal ones) are fit best in a Mean Squared Error (MSE) sense.
A numerical solver such as Brent's algorithm can be used to obtain the fit.
Inside each iteration of this optimization, the model volatilities are bootstrapped as described in Algorithm~\ref{algo:HWCalib}.
Alternatively, one could try to calibrate the mean-reversion parameter by minimizing the MSE for only the ATM swaptions points that fall approximately under the counter diagonal~\cite{PuetterRenzitti202011b}.

\section{Bermudan swaptions} \label{app:bermudanSwaption}

\subsection{Option pricing} \label{app:bermudanSwaptionPricing}

The value of a Bermudan swaption, i.e., $\tradeVal^{\text{BS}}$, can be seen as an optimal stopping problem, where the holder of the option is looking for the optimal exercising strategy.
At exercise, the option holder has the right to enter into the underlying swap $\tradeVal^{\text{Swap}}$.
The payoff of the option is given by $\payoff(t; \shortRate(t)) = \maxOperator{\tradeVal^{\text{Swap}}(t; \shortRate(t))}$.
The value of the option is given by the risk-neutral expectation of the discounted payoff of entering a swap at the optimal stopping time $\tau$ in a predetermined set of exercise dates $\mathcal{T}_E = \{T_{E,1}, \ldots, T_{E,n}\}$:
\begin{align}
  \tradeVal^{\text{BS}}(t_0)
    %&= \sup_{\tau \in \mathcal{T}_E} \condExpSmall{\expPower{-\int_{t_0}^{\tau}\shortRate(s) \ds} \payoff(\tau; \shortRate(\tau))}{t_0} \nonumber \\
    &= \sup_{\tau \in \mathcal{T}_E} \condExpSmall{\expPower{-\int_{t_0}^{\tau}\shortRate(s) \ds} \maxOperator{\tradeVal^{\text{Swap}}(\tau; \shortRate(\tau))}}{t_0}. \nonumber
\end{align}
At the last exercise date, the Bermudan swaption collapses to a European swaption.

In practice, the option pricing problem is typically solved using a dynamic problem formulation.
We tackle the option pricing problem using the Least Squares Monte Carlo (LSMC) technique with a forward phase and a backward phase.
In particular we use the Longstaff-Schwartz method~\cite{LongstaffSchwartz200101}, where only the ITM paths at $T_{E,i}$ are used to calibrate the regression, and the regression is not used to value the product itself but only to approximate the exercise rule.
In this way, the regression functions only need to provide an accurate approximation of the continuation values near the exercise boundary rather than on the whole domain.
To avoid perfect foresight bias, which happens when the exercise criteria are computed with the same Monte-Carlo paths as the exercise values, we improve the LSM algorithm by running another, independent simulation to value the early-exercise product, using the exercise rule from the initial simulation.

Let $\tradeVal^{\text{BS}}(t; x)$ denote the time $t$ value of the option given $\shortRate(t) = x$, assuming no previous exercise.
The Bermudan swaption valuation algorithm as introduced in Section~\ref{sec:resultsExposuresBermudanSwaption} can be summarized as follows:
\begin{enumerate}
  \item Forward phase: simulate $M_{\tradeVal}$ short-rate paths, i.e., $\shortRate_j(t)$ for $t \in [t_0, T_{E,n}]$ for all $j$ paths.
  \item Backward phase:
      at final exercise date $T_{E,n}$, the value is given by the option payoff (as the continuation value is zero), i.e., 
      \begin{align}
        \tradeVal^{\text{BS}}(T_{E,n}; x)
          &= \payoff(T_{E,n}; x) = \maxOperator{\tradeVal^{\text{Swap}}(T_{E,n}; x)}. \label{eq:bermSwapPayoff} 
      \end{align}
      This directly gives the exercise rule at $T_{E,n}$.
      Iteratively for all previous exercise dates $T_{E,i} < T_{E,n}$:
      \begin{enumerate}
        \item Assume that the option value at the next exercise date $\tradeVal^{\text{BS}}(T_{E,i+1}; x)$ is available.
        \item The option value is given by the maximum of exercising or continuing:
          \begin{align}
            \tradeVal^{\text{BS}}(T_{E,i}; x)
              &= \maxFun{\payoff(T_{E,i}; x)}{c(T_{E,i}; x)} 
              = \maxFun{\maxOperator{\tradeVal^{\text{Swap}}(T_{E,i}; x)}}{c(T_{E,i}; x)}, \nonumber  
          \end{align}
          where $c(T_{E,i}; x)$ is the continuation value, which is the discounted option value from the next exercise date, i.e.,
          \begin{align}
            c(T_{E,i}; x)
              &= \E\left[ \left.  \expPower{-\int_{T_{E,i}}^{T_{E,i+1}}\shortRate(s) \ds} \tradeVal^{\text{BS}}(T_{E,i+1}; \shortRate(T_{E,i+1})) \right| \shortRate(T_{E,i}) = x\right]. \nonumber
          \end{align}
        This conditional expectation is a random variable itself.
        A brute-force approach would be to use nested Monte Carlo to approximate this continuation value.
        However, we decide to use LSM where we approximate the continuation value $c(T_{E,i}; x)$ as the regression of option value $\tradeVal^{\text{BS}}(T_{E,i+1}; \shortRate(T_{E,i+1}))$ on the current state $x$.
        Then we value the option by a backward iteration (recursion).
        \item Identify all ITM paths by computing $\payoff(T_{E,i}; \shortRate_j(T_{E,i})) = \maxOperator{\tradeVal^{\text{Swap}}(T_{E,i}; \shortRate_j(T_{E,i}))}$ $\forall j$ paths.
        \item We approximate the continuation value through a linear combination of basis functions $\psi_{k}(x)$:
          \begin{align}
            c(T_{E,i}; x)
              &= \sum_{l} \beta_{il} \psi_l(x)
              = \beta_i^\top \psi(x), \nonumber
          \end{align}
          where we use a least-squares regression over all ITM paths $k$ to estimate the coefficients $\beta_{il}$ of this approximation, yielding estimates $\hat{\beta}_{il}$.
          For all ITM paths $k$, compute $\expPower{-\int_{T_{E,i}}^{T_{E,i+1}}\shortRate_k(s) \ds} \hat{\tradeVal}^{\text{BS}}(T_{E,i+1}; \shortRate_k(T_{E,i+1}))$ and regress on $\shortRate_k(T_{E,i})$.
          These estimated coefficients now define the continuation value approximation $\hat{c}(T_{E,i}; x) = \hat{\beta}_{i}^{\top} \psi(x)$.
        \item For all ITM paths $k$, we approximate the exercise rule at $T_{E,i}$ using the payoff $\payoff(T_{E,i}; \shortRate_k(T_{E,i}))$ and the approximate continuation value $\hat{c}(T_{E,i}; \shortRate_k(T_{E,i}))$.
          For the paths where we do not exercise, we plug in the path-wise discounted value from the next time point, i.e., continuation is the base-line scenario.
          So for all paths $j$, $\hat{c}(\cdot)$ is used to approximate the exercise decision only, and not also the cash-flows in the valuation:
          \begin{align}
            \tradeVal^{\text{BS}}(T_{E,i}; x)
              &\approx \left\{
              \begin{array}{ll}
                \payoff(T_{E,i}; x), & \payoff(T_{E,i}; x) > \hat{c}(T_{E,i}; x), \\
                \expPower{-\int_{T_{E,i}}^{T_{E,i+1}}\shortRate(s) \ds} \tradeVal^{\text{BS}}(T_{E,i+1}; \shortRate_j(T_{E,i+1})), & \payoff(T_{E,i}; x) \leq \hat{c}(T_{E,i}; x)  
              \end{array}
              \right. \nonumber  
          \end{align}
          This then also gives the exercise boundary $x^*$, which is the regression variable value (e.g., short rate or swap rate) for which the first exercise is done, i.e., the first value for which the immediate payoff was bigger than the regressed continuation value.
        \item Once the first exercise date $T_{E,1}$ has been reached, the approximate Bermudan swaption value at $t_0$ is given by the discounted Monte Carlo average over all paths
          \begin{align}
            \tradeVal^{\text{BS}}(t_0)
              &\approx \frac{1}{M} \sum_{j = 1}^{M} \expPower{-\int_{t_0}^{T_{E,1}}\shortRate_j(s) \ds} \tradeVal^{\text{BS}}(T_{E,1}; \shortRate_j(T_{E,1})). \nonumber
          \end{align}
  \end{enumerate}
\end{enumerate}

\subsection{Exposures} \label{app:bermudanSwaptionExposures}

The key to computing exposures is to obtain future market values $\tradeVal(t_i)$ at all monitoring dates $t_i$ for all Monte-Carlo scenarios.
A naive approach for Bermudan swaptions would be to re-use the original LSMC method starting at $r_j(t_i)$ for generating future market scenarios to obtain the future market value $\tradeVal(t_i; \shortRate_j(t_i))$ at monitoring dates before the option is exercised on path $j$.
From a computational perspective, this method is too expensive to be used in practice.
Therefore, we use a modified version of the aforementioned Bermudan swaption pricing algorithm to obtain exposures.

For each monitoring date $t_i$, we compute the path-wise future market value $\tradeVal(t_i; \shortRate_j(t_i))$ $\forall j$ using the collocation-based approximation technique from Deelstra~\etal\cite{DeelstraGrzelakWolf202304}, using Lagrangian interpolation to get an approximation over the entire domain.
The idea is to construct an accurate approximation based on only a few valuations of the expensive nested LSMC algorithm at points based on the moments of the underlying distribution, where Corollary~\ref{crllry:randMoments} is used to obtain the moments of the rHW model.
This approximation can then be evaluated swiftly for all monitoring dates and for all paths to obtain an approximate future market value.

The exposure $\EPE(t, t_i)$ of a Bermudan swaption is approximated using a regression:
\begin{enumerate}
  \item First pass: estimating regression functions at all early-exercise dates $\mathcal{T}_E = \{T_{E,1}, \ldots, T_{E,n}\}$.
  \begin{enumerate}
    \item Forward phase: generate $M_{\tradeVal}$ Monte Carlo paths for LSMC regression as usual for early-exercise products.
    \item Backward phase: for all early-exercise products, perform the regression approximation at each early-exercise date $T_{E,i}$.
    This yields the approximate continuation value $\hat{c}(T_{E,i}; x)$ $\forall T_{E,i}$.
    In this step it is fine to only use the ITM paths to approximate the exercise boundary.
  \end{enumerate}
  \item Second pass: estimating exposures.
  \begin{enumerate}
    \item Generate an independent set of $M_{\tradeVal}$ Monte Carlo paths (this deals with foresight bias) for pricing the exposures.
    \item Using the regression from the first pass ($\hat{c}(T_{E,i}; x)$ $\forall T_{E,i}$), determine the path-wise early-exercise decision $\tau_j$ for each path $j$.
    \item For each monitoring date $t_i$, we compute the path-wise future market value $\tradeVal(t_i; \shortRate_j(t_i))$ $\forall j$ using the collocation-based approximation technique from Deelstra~\etal\cite{DeelstraGrzelakWolf202304}, using Lagrangian interpolation to get an approximation over the entire domain.
        The idea is to construct an accurate approximation based on only a few valuations of the expensive nested LSMC algorithm.
        This approximation can then be evaluated swiftly for all monitoring dates and for all paths to obtain an approximate future market value $\hat{\tradeVal}(t_i; \shortRate_j(t_i))$. 
    \begin{enumerate}
      \item Compute collocation points $x_k$ based on the moments of the distribution of the underlying short-rate.
      Typically, 5 collocation points are sufficient.
      \item For all these collocation points, run a fully nested LSMC simulation to obtain values $\tradeVal(t_i; x_k)$, re-using the calibrated exercise rule from the first pass.
      \item Use 1D Lagrangian interpolation over the collocation points to obtain $\hat{\tradeVal}(t_i; \shortRate_j(t_i))$ $\forall j$.
    \end{enumerate}
    \item Approximate the exposure from Equation~\eqref{eq:EPE} by replacing $\maxOperator{\tradeVal(t_i)}$ by the regression approximation $\maxOperator{\hat{\tradeVal}(t_i)}$.
  \end{enumerate}
\end{enumerate}

\begin{rem}[Moments for collocation points]
  The collocation points at monitoring date $t_i$ are based on moments of the underlying short-rate $\shortRate(t_i)$.

  When using the HW model as underlying, $\shortRate_n(t) \sim \N \left( \condExpSmall{\shortRate_n(t)}{0}, \condVarSmall{\shortRate_n(t)}{0}\right)$.
  In this case the raw (non-central) moments $\condExpSmall{\shortRate^m(t)}{0}$ are known and easy to compute:
  \begin{itemize}
    \item For $Z\sim \N(0,1)$ we have the following raw (non-central) moments:
    \begin{align}
      \E\left[ Z^m \right] 
        &= \left\{ 
        \begin{array}{ll}
          (m-1)!!,  & m \text{ even}, \\
          0,        & m \text{ odd}.
        \end{array}
        \right. \nonumber
    \end{align}
    \item For $X\sim \N\left(\mu, \sigma^2\right)$ the following central moments:
    \begin{align}
      \E\left[ (X-\mu)^m \right] 
        &= \E\left[ (\sigma Z)^m \right] 
        = \sigma^m \E\left[ Z^m \right]. \nonumber
    \end{align}
    Using the binomial theorem we can easily obtain an expression for the raw (non-central) moments $\E\left[ X^m \right]$.
  \end{itemize}
  For the rHW model, i.e., $\shortRate(t)$ is constructed as per Definition~\ref{def:randPDF}, then we use the result from Corollary~\ref{crllry:randMoments} to compute the desired moments.
\end{rem}    

Alternative methods to compute exposures of early-exercise products are an updated version of the LSMC algorithm for exposures~\cite{FengJainKarlssonKandhaiOosterlee201609}; the Stochastic Grid Bundling Method (SGBM) method~\cite{KarlssonJainOosterlee201609}, which uses a different way in which the regression approximation of the continuation value is obtained; Glau~\etal\cite{GlauPachonPotz202007} use an iterative polynomial interpolation over Chebyshev nodes, where a different algorithm is proposed such that the nested simulation is avoided altogether; finally, Joshi and Kwon~\cite{JoshiKwon201610} propose to only approximate the sign of the future exposure through a regression, such that the regression only needs to be accurate close to the exposure sign change, similarly to the Longstaff-Schwartz approach for pricing.

\section{Additional results for EUR} \label{app:resultsEUR}

Here, we present additional results for the EUR market data from Figure~\ref{fig:EURVolSurfaceTenor10Y}.

\subsection{Calibration} \label{app:resultsEURCalibration}

From the calibration results presented in Figure~\ref{fig:calibrationEUR} it can be seen that compared to the USD calibration results from Figure~\ref{fig:calibrationUSD}, the rHW quadrature points are substantially closer to the calibrated HW mean-reversion parameter $a_x$
Furthermore, the HW and rHW model volatilities are not far apart.

\begin{figure}[ht!]
  \centering
  \begin{subfigure}[b]{0.435\linewidth}
    \includegraphics[width=\linewidth]{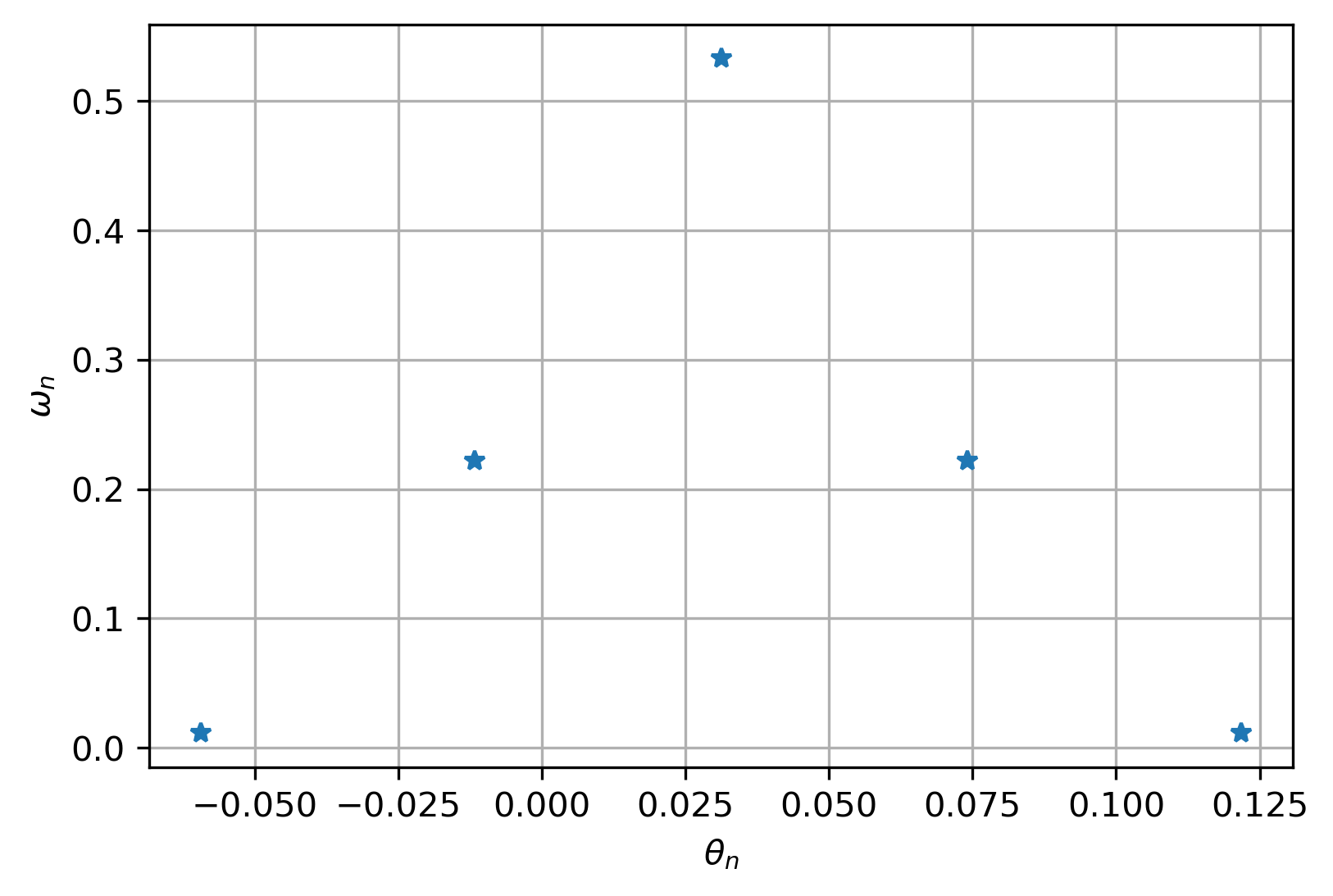}
    \caption{Quadrature points for rHW.}
    \label{fig:quadraturePointsEUR}
  \end{subfigure}
  \begin{subfigure}[b]{0.45\linewidth}
    \includegraphics[width=\linewidth]{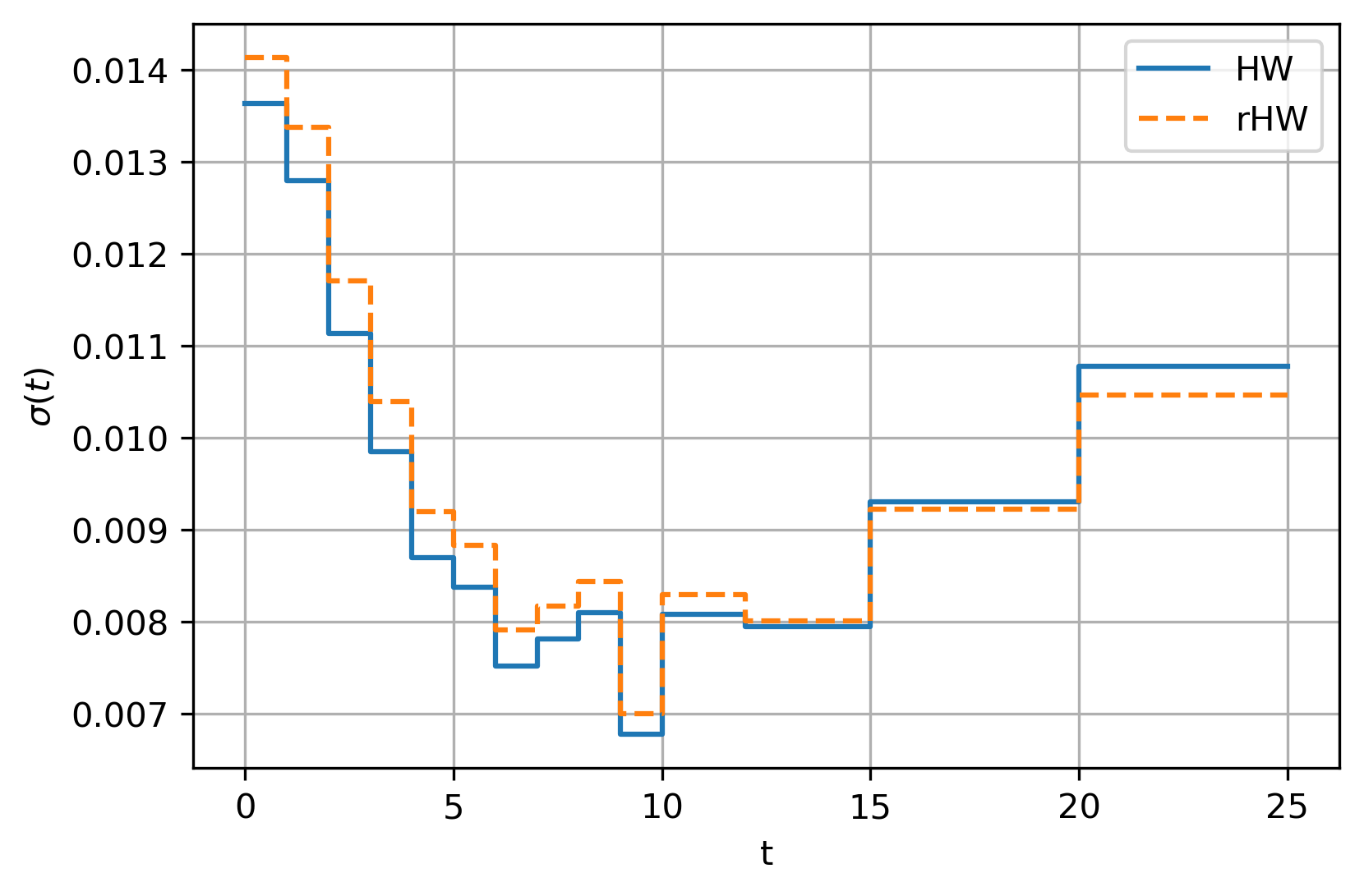}
    \caption{Model volatilities.}
    \label{fig:modelVolatilitiesEUR}
  \end{subfigure}
  \caption{Calibration of the EUR market data from Figure~\ref{fig:VolSurfaceCoterminal} for $\NMix=5$ and quadrature points for $\N (\hat{a}, \hat{b}^2 )$ with $\hat{a} = 0.031220$ and $\hat{b} = 0.031681$.
    The calibrated HW mean-reversion is $a_x = 0.019659$.}
    %Randomization cot-smiles: a = 0.031220172356952895 and b = 0.03168149804999561
    %rand_hw1f.x (theta) = array([-0.05929292, -0.0117281 ,  0.03122017,  0.07416844,  0.12173326])
    %rand_hw1f.w (omega) = array([0.01125741, 0.22207592, 0.53333333, 0.22207592, 0.01125741])
  \label{fig:calibrationEUR}
\end{figure}

In Figure~\ref{fig:EURVolSlicesNormalRandCotsmilesCalib} we see that there is less volatility smile compared to USD case, and in the EUR case there is only volatility smile for the short expiries, and the smile effect swiftly fades when the expiry increases. 
All in all, there is mainly skew in this market data set.
As a result, the HW implied volatilities are significantly closer to the market implied volatilities than in the USD case.

\begin{figure}[ht!]
  \centering
  \begin{subfigure}[b]{0.45\linewidth}
    \includegraphics[width=\linewidth]{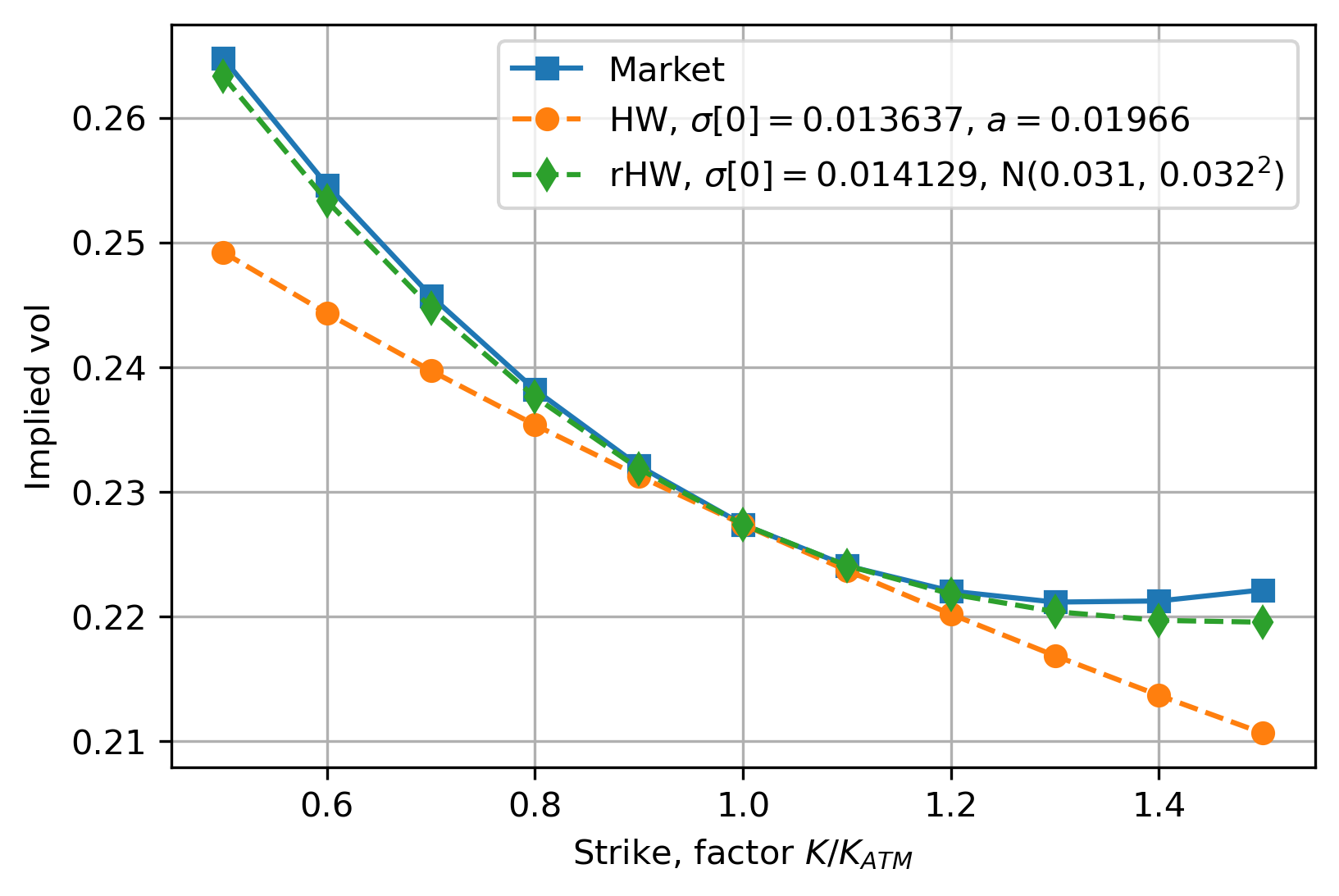}
    \caption{1Y expiry, 29Y tenor.}
    \label{fig:EUR1yExpiry29yTenorVolSliceNormalRandCotsmilesCalib}
  \end{subfigure}
  \begin{subfigure}[b]{0.45\linewidth}
    \includegraphics[width=\linewidth]{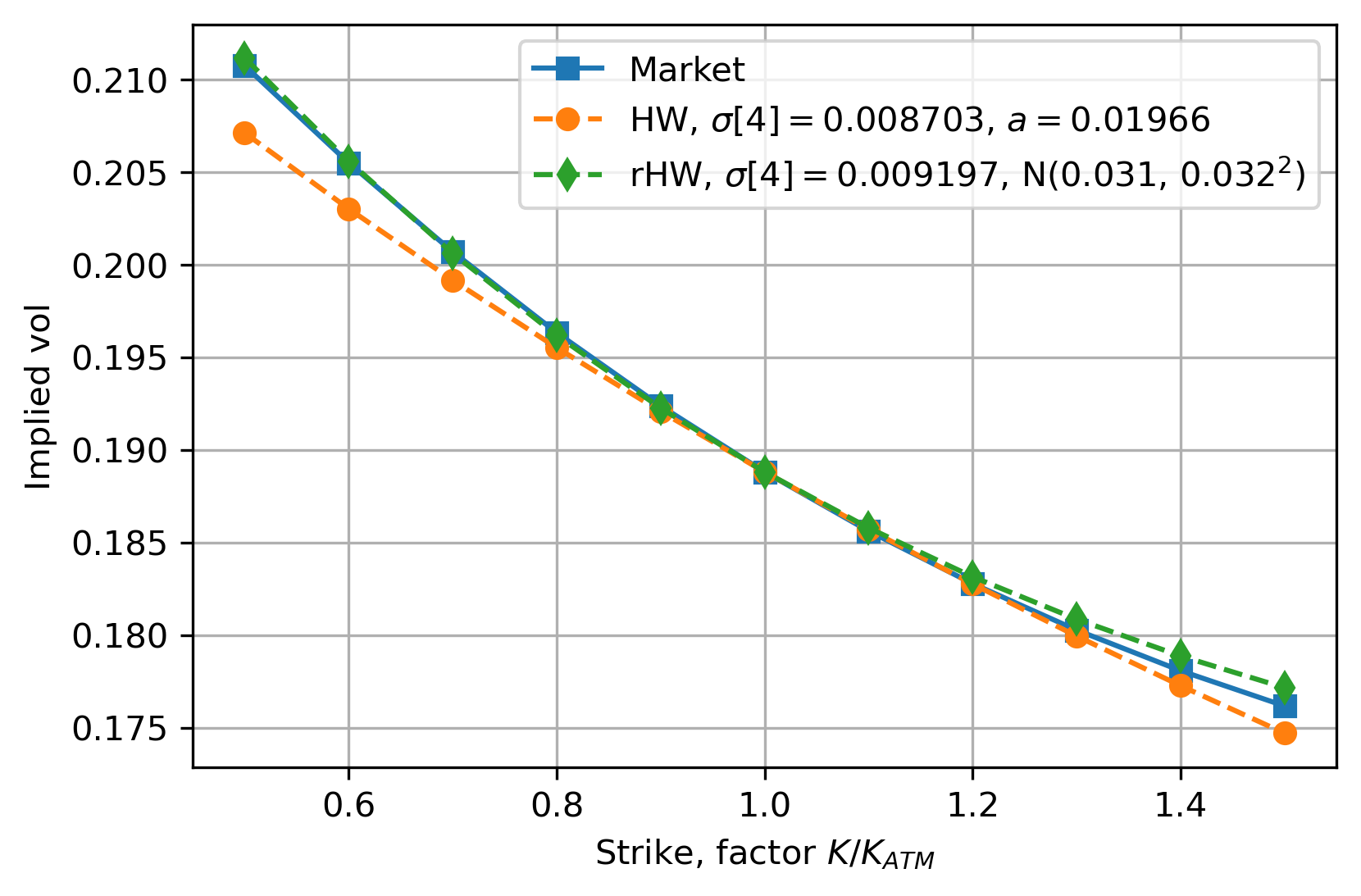}
    \caption{5Y expiry, 25Y tenor.}
    \label{fig:EUR5yExpiry25yTenorVolSliceNormalRandCotsmilesCalib}
  \end{subfigure}
  \begin{subfigure}[b]{0.45\linewidth}
    \includegraphics[width=\linewidth]{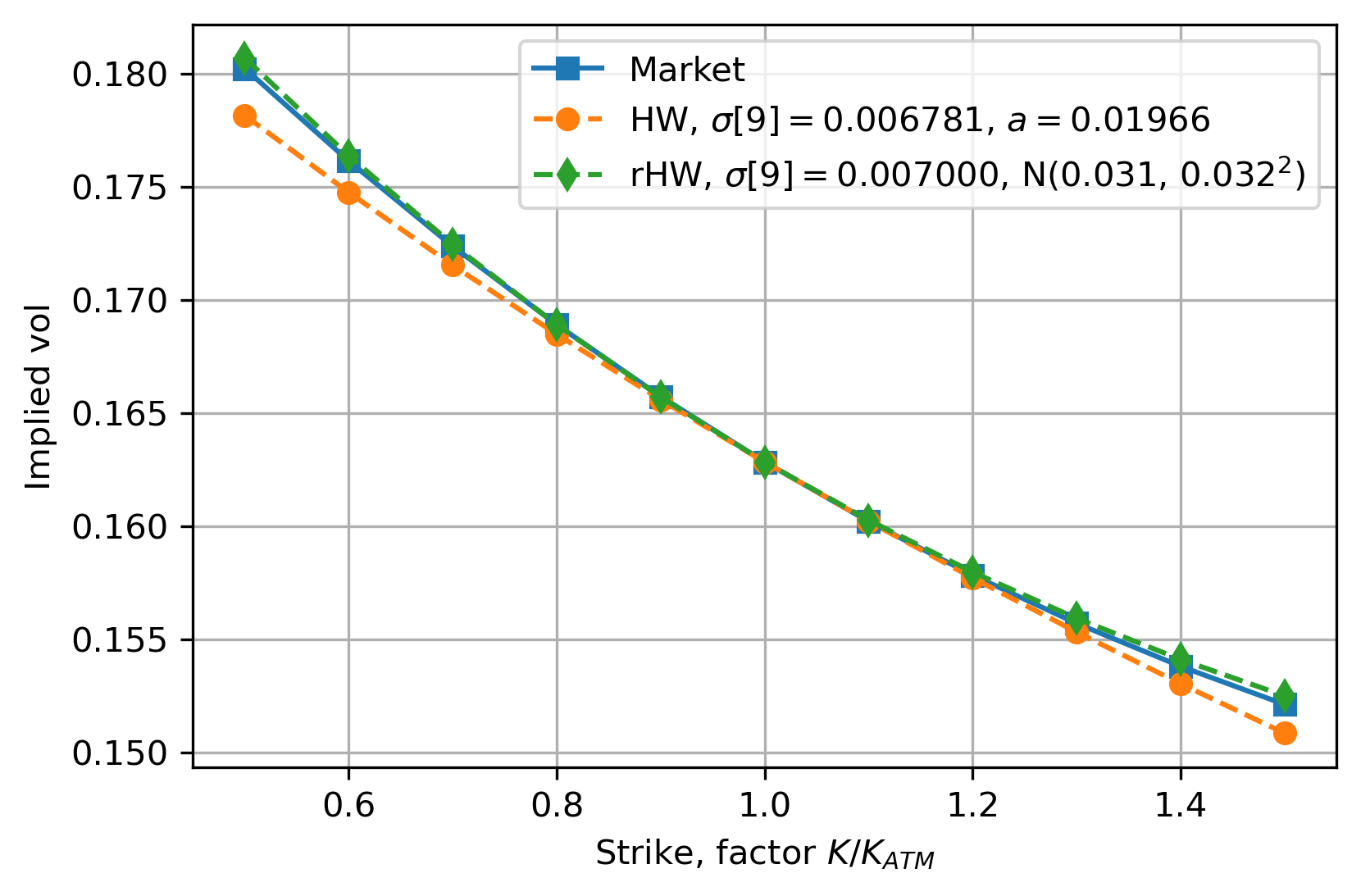}
    \caption{10Y expiry, 20Y tenor.}
    \label{fig:EUR10yExpiry20yTenorVolSliceNormalRandCotsmilesCalib}
  \end{subfigure}
  \begin{subfigure}[b]{0.45\linewidth}
    \includegraphics[width=\linewidth]{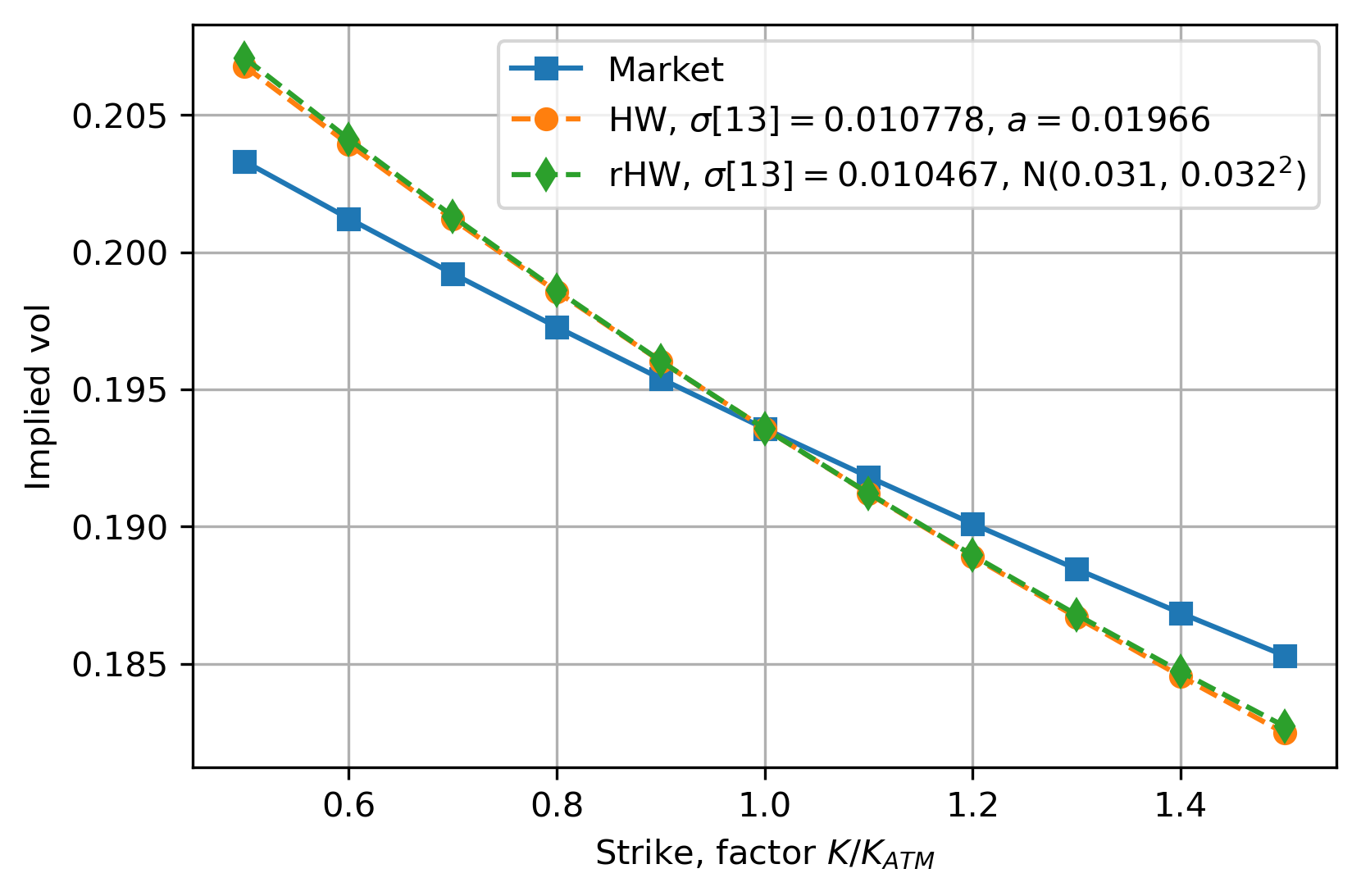}
    \caption{25Y expiry, 5Y tenor.}
    \label{fig:EUR25yExpiry5yTenorVolSliceNormalRandCotsmilesCalib}
  \end{subfigure}
  \caption{Market and model swaption implied volatilities for the calibrated parameters from Figure~\ref{fig:calibrationEUR}.}
  \label{fig:EURVolSlicesNormalRandCotsmilesCalib}
\end{figure}

Figure~\ref{fig:EURNormalRandCotsmilesCalibImpvolError} shows that the HW and rHW models have comparable error profiles, also for the non-coterminal ATM points.
This is a consequence of the limited amount of smile in the market, that results in quadrature points $\left\{\omega_n,\theta_n\right\}_{n=1}^{\NMix}$ where the $\theta_n$ are close to the HW mean-reversion $a_x$, as well as the comparable model volatilities in Figure~\ref{fig:modelVolatilitiesEUR}.
As a result, we also expect that this will translate into a smaller difference between HW and rHW exposures and $\xva$s than in the USD case.
    
\begin{figure}[ht!]
  \centering
  \begin{subfigure}[b]{0.45\linewidth}
    \includegraphics[width=\linewidth]{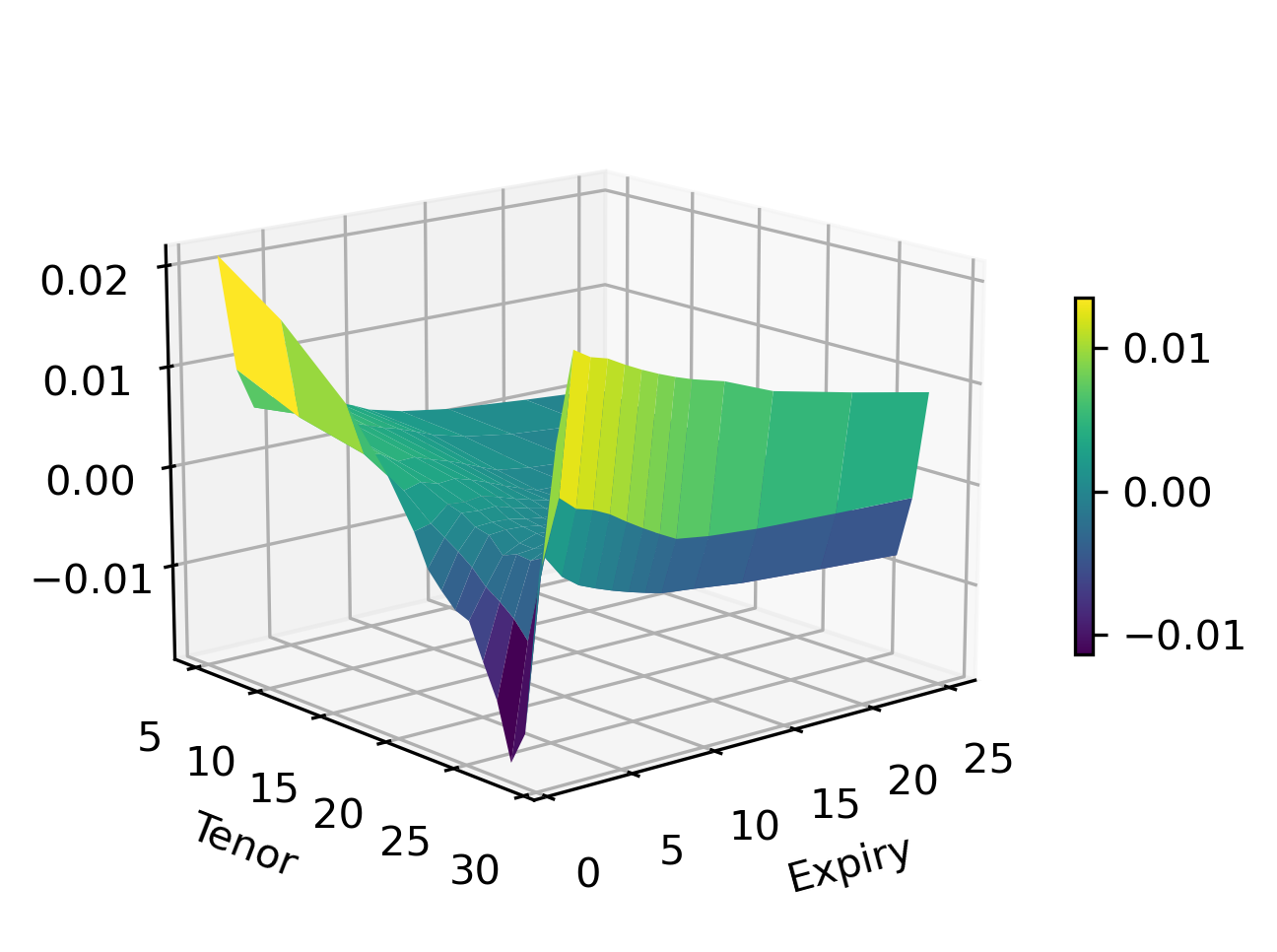}
    \caption{HW model.}
    %\caption{$\vol_{\text{HW}} - \vol_{\text{mkt}}$}
    \label{fig:EURNormalRandCotsmilesCalibImpvolErrorHW}
  \end{subfigure}
  \begin{subfigure}[b]{0.45\linewidth}
    \includegraphics[width=\linewidth]{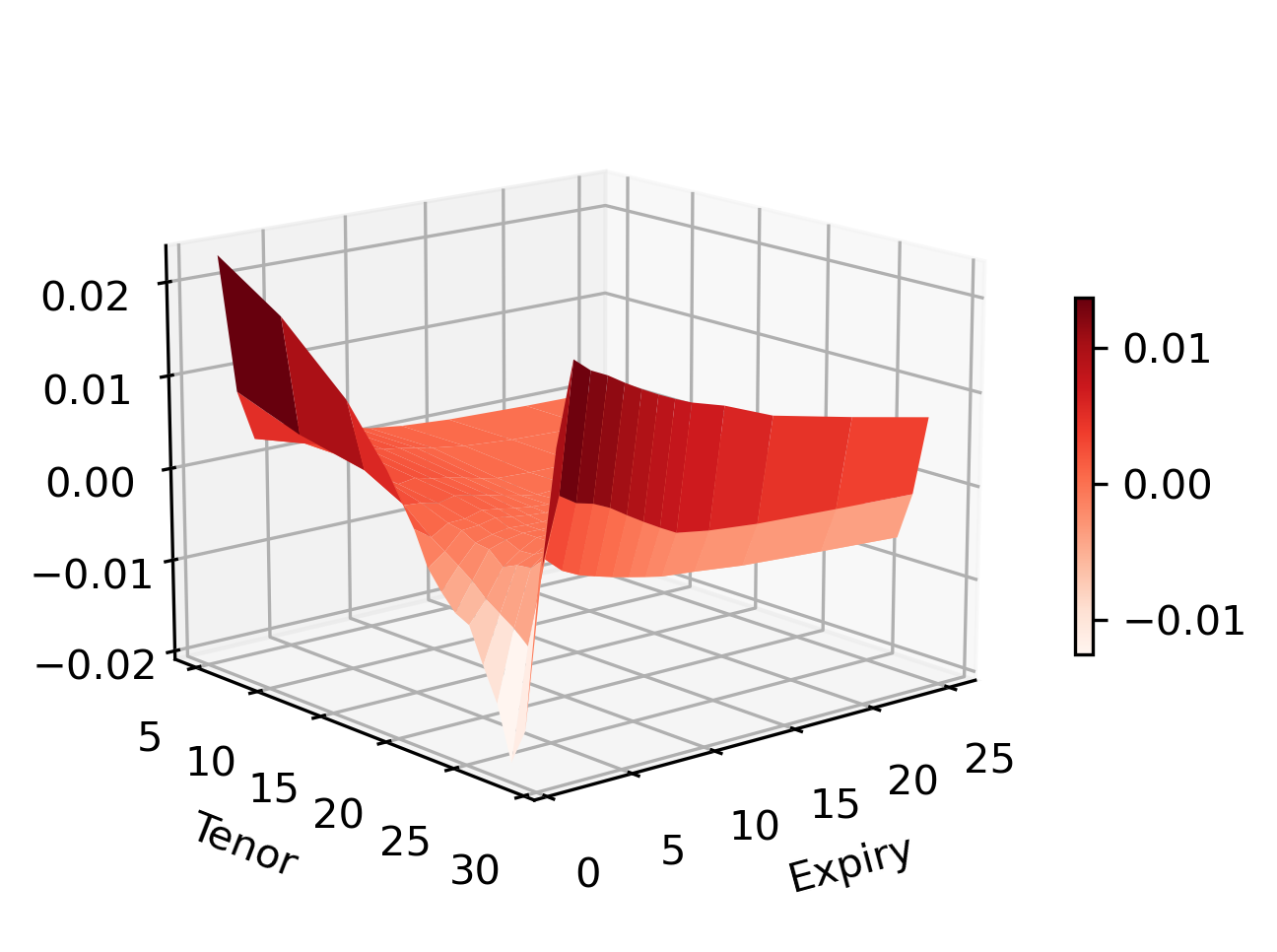}
    \caption{rHW model.}
    %\caption{$\vol_{\text{rHW}} - \vol_{\text{mkt}}$}
    \label{fig:EURNormalRandCotsmilesCalibImpvolErrorrHW}
  \end{subfigure}
  \caption{Implied volatility calibration error for all ATM points of the EUR volatility surface.}
  \label{fig:EURNormalRandCotsmilesCalibImpvolError}
\end{figure}

\subsection{Exposures and $\xva$s} \label{app:resultsEURExposures}

\subsubsection{Swaps} \label{app:resultsEURExposuresSwaps}
As expected, the exposures presented in Figure~\ref{fig:RAnDEURNormalCotsmilesReceiverSwapATMExposureExpiry0} show that the effects are the same as in the USD case, albeit with a less significant impact.

\begin{figure}[ht!]
  \centering
  \begin{subfigure}[b]{0.45\linewidth}
    \includegraphics[width=\linewidth]{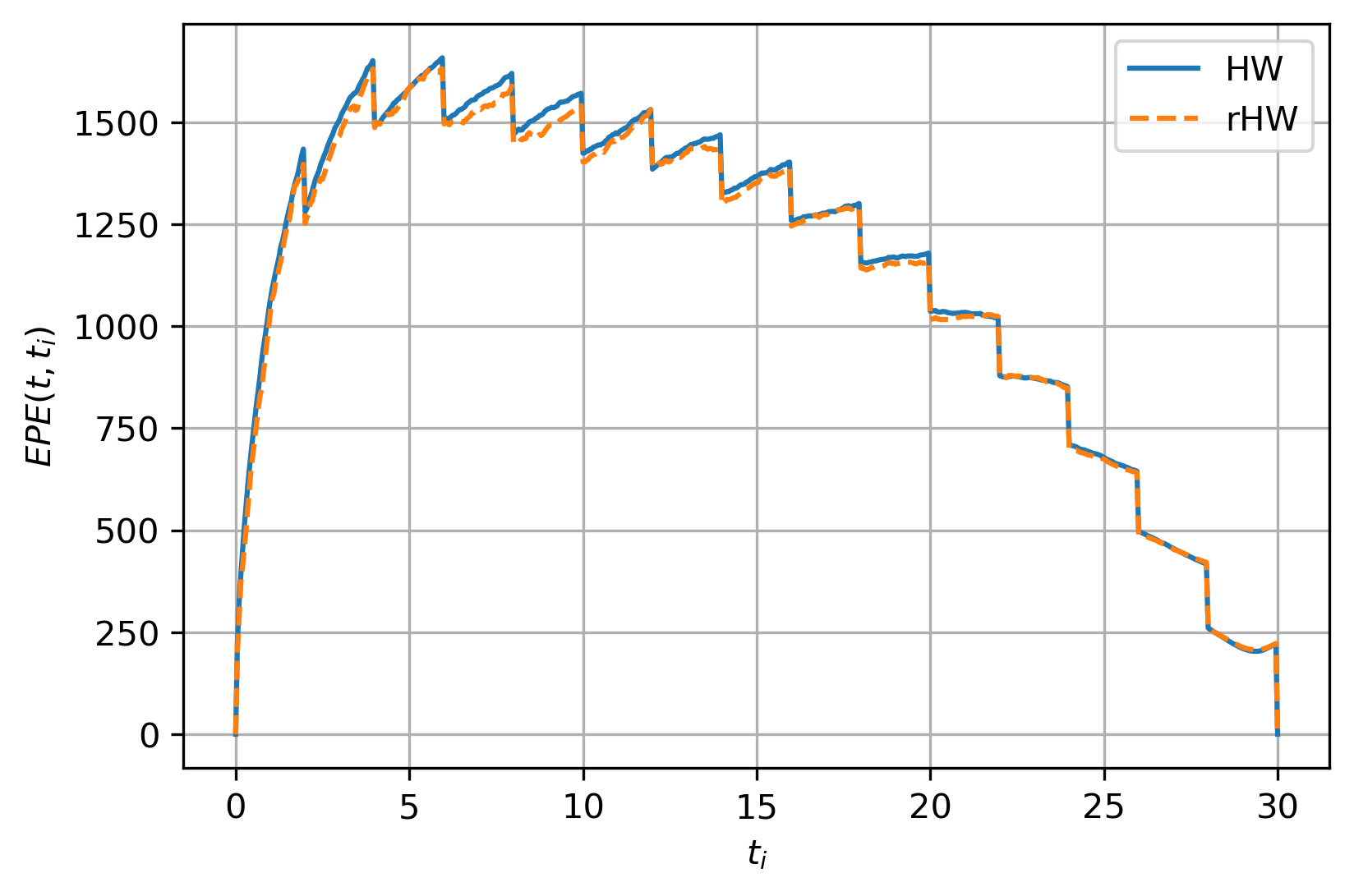}
    \caption{$\EPE(t, t_i)$}
    \label{fig:RAnDEURNormalCotsmilesReceiverSwapATMExposureExpiry0EPE}
  \end{subfigure}
  \begin{subfigure}[b]{0.45\linewidth}
    \includegraphics[width=\linewidth]{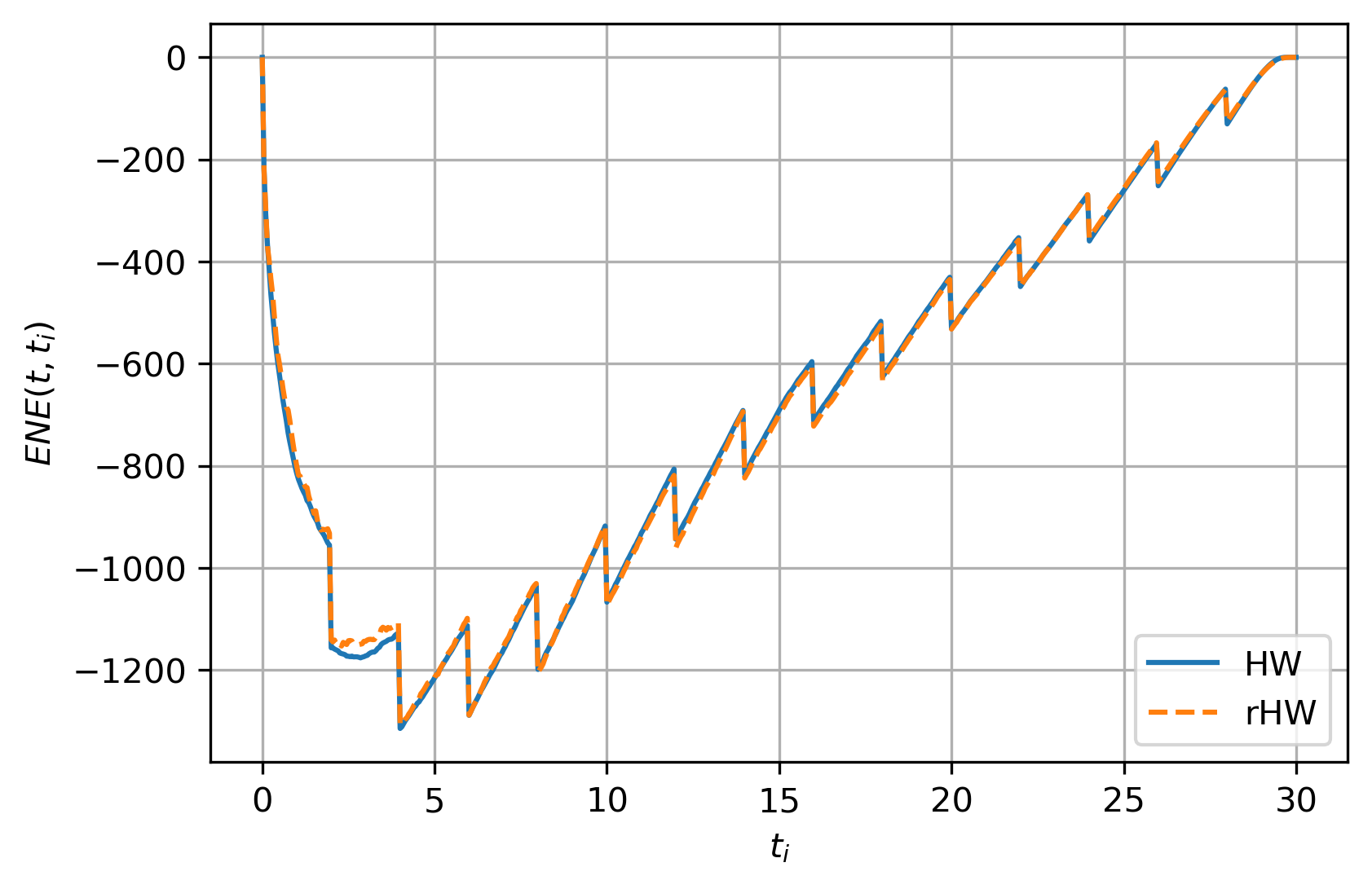}
    \caption{$\ENE(t, t_i)$}
    \label{fig:RAnDEURNormalCotsmilesReceiverSwapATMExposureExpiry0ENE}
  \end{subfigure}
  \begin{subfigure}[b]{0.45\linewidth}
    \includegraphics[width=\linewidth]{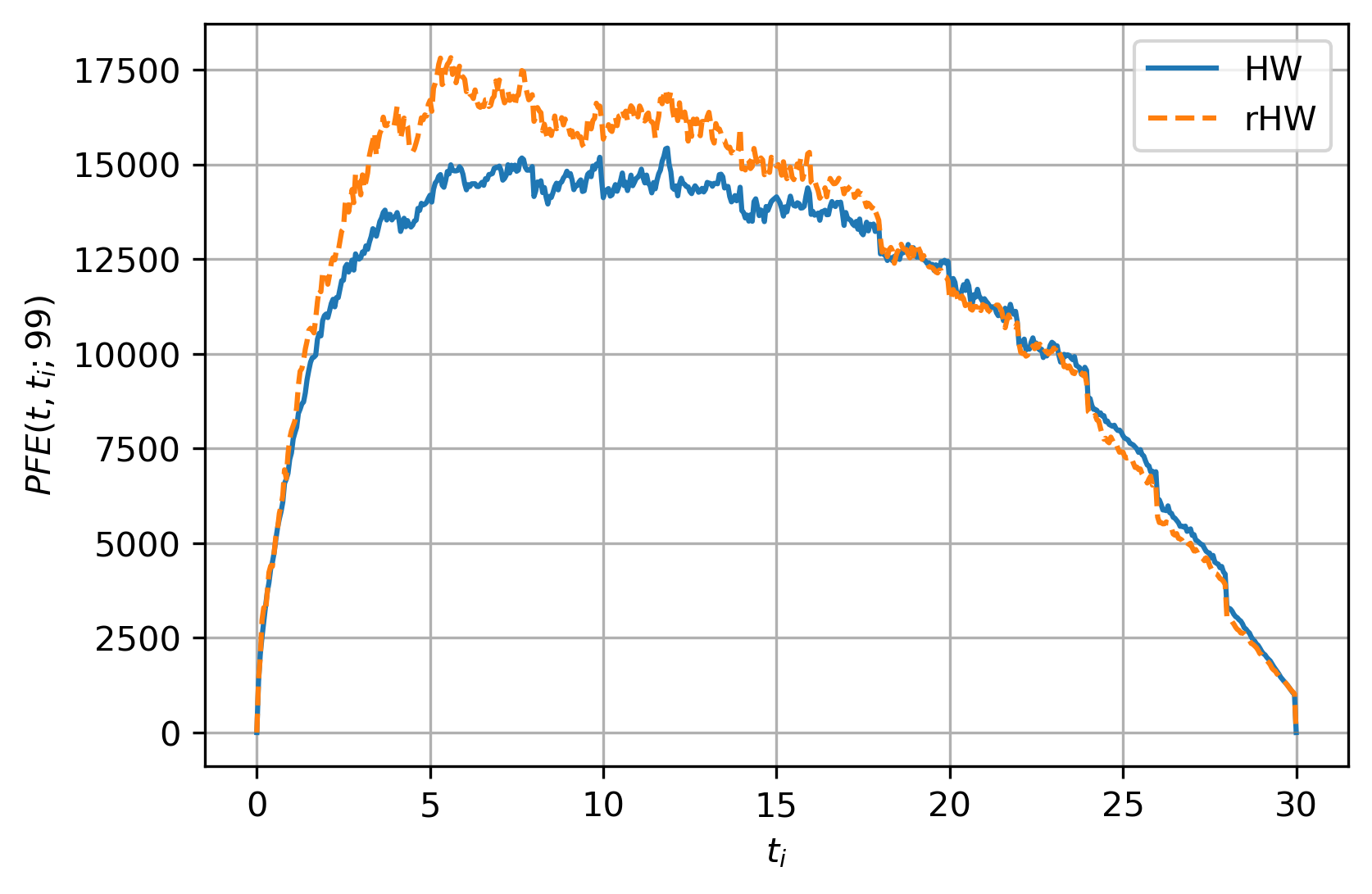}
    \caption{$\PFE(t, t_i; 99)$}
    \label{fig:RAnDEURNormalCotsmilesReceiverSwapATMExposureExpiry0PFE(99)}
  \end{subfigure}
  \begin{subfigure}[b]{0.45\linewidth}
    \includegraphics[width=\linewidth]{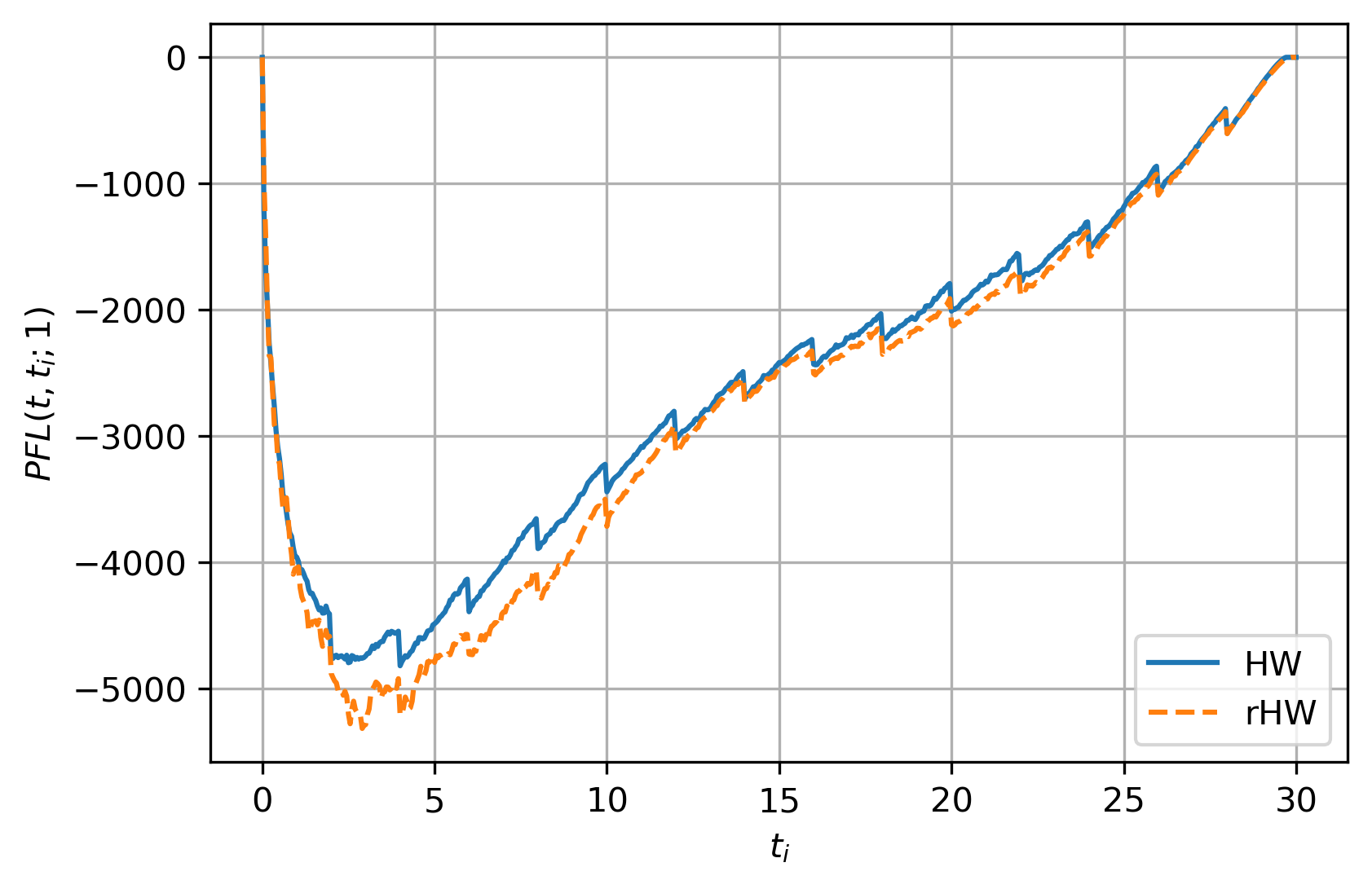}
    \caption{$\PFL(t, t_i; 1)$}
    \label{fig:RAnDEURNormalCotsmilesReceiverSwapATMExposureExpiry0PFL(1)}
  \end{subfigure}
  \caption{Comparing exposures for an ATM receiver swap ($\strike = \strikeATM$).
  }
  \label{fig:RAnDEURNormalCotsmilesReceiverSwapATMExposureExpiry0}
\end{figure}

The same can be said for the $\xva$ metrics in Table~\ref{tab:RAnDEURNormalCotsmilesReceiverSwapExposureExpiry0}: there is less impact of the smile on the metrics, where in the EUR case we see a maximum of 3\% impact on the metrics (ignoring the OTM $\BCVA$ case due to low magnitude).
Even if the smile effect are less pronounced than in the USD example, it is still non-negligible.

\begin{table}[ht!]
\centering
\begin{tabular}{l|ll|rrr}
    Model   & $\strike$             & Moneyness     & $\CVA(t_0)$   & $\DVA(t_0)$   & $\BCVA(t_0)$  \\ \hline
    HW      & $\strikeATM$          & ATM           & 537.883       & -183.441      & 330.007       \\
    rHW     &                       &               & 530.321       & -183.049      & 323.514       \\ \hline
    HW      & $1.5\cdot\strikeATM$  & ITM           & 879.959       &  -94.056      & 718.750       \\
    rHW     &                       &               & 867.425       &  -91.322      & 709.683       \\ \hline
    HW      & $0.5\cdot\strikeATM$  & OTM           & 298.740       & -327.966      & -10.502       \\
    rHW     &                       &               & 302.685       & -333.410      & -11.358
\end{tabular}
\caption{$\xva$ metrics for the EUR receiver swap example, for various strikes.}
\label{tab:RAnDEURNormalCotsmilesReceiverSwapExposureExpiry0}
\end{table}

\subsubsection{Bermudan swaption} \label{app:resultsEURExposuresBermudanSwaption}

For the Bermudan swaption we draw similar conclusions as before, where the USD conclusions extend to the EUR case, but the smile impact is smaller.
For example, for the ATM case presented in Figure~\ref{fig:RAnDEURNormalCotsmilesReceiverBermSwaptionATMExposureExpiry0} there is a 12\% impact on $\CVA$, whereas in the USD case this was a 62\% impact.
Again, even though the smile effect is smaller, it is still non-negligible.

\begin{figure}[ht!]
  \centering
  \begin{subfigure}[b]{0.45\linewidth}
    \includegraphics[width=\linewidth]{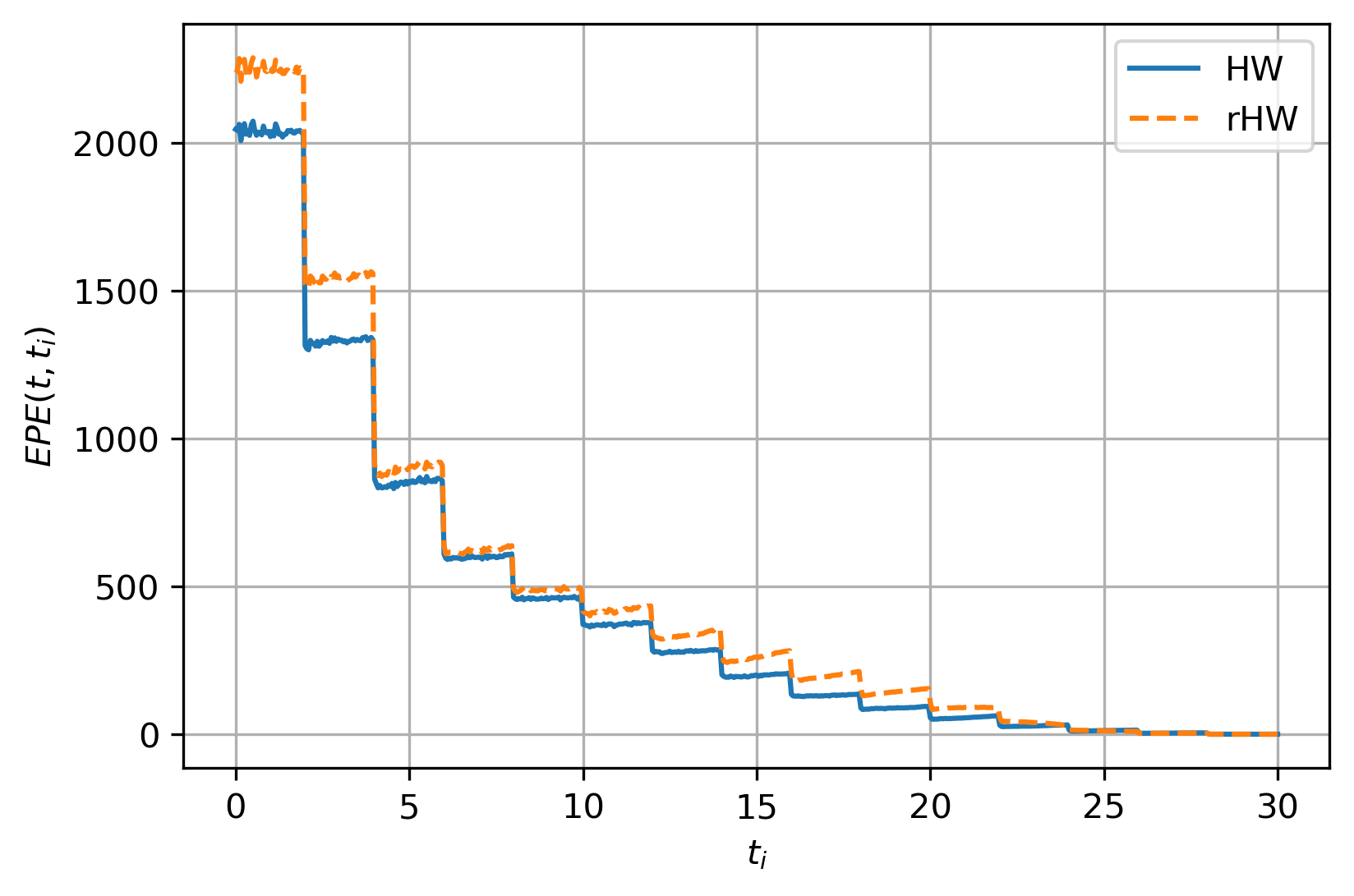}
    \caption{$\EPE(t, t_i)$}
    \label{fig:RAnDEURNormalCotsmilesReceiverBermSwaptionATMExposureExpiry0EPE}
  \end{subfigure}
    \begin{subfigure}[b]{0.45\linewidth}
    \includegraphics[width=\linewidth]{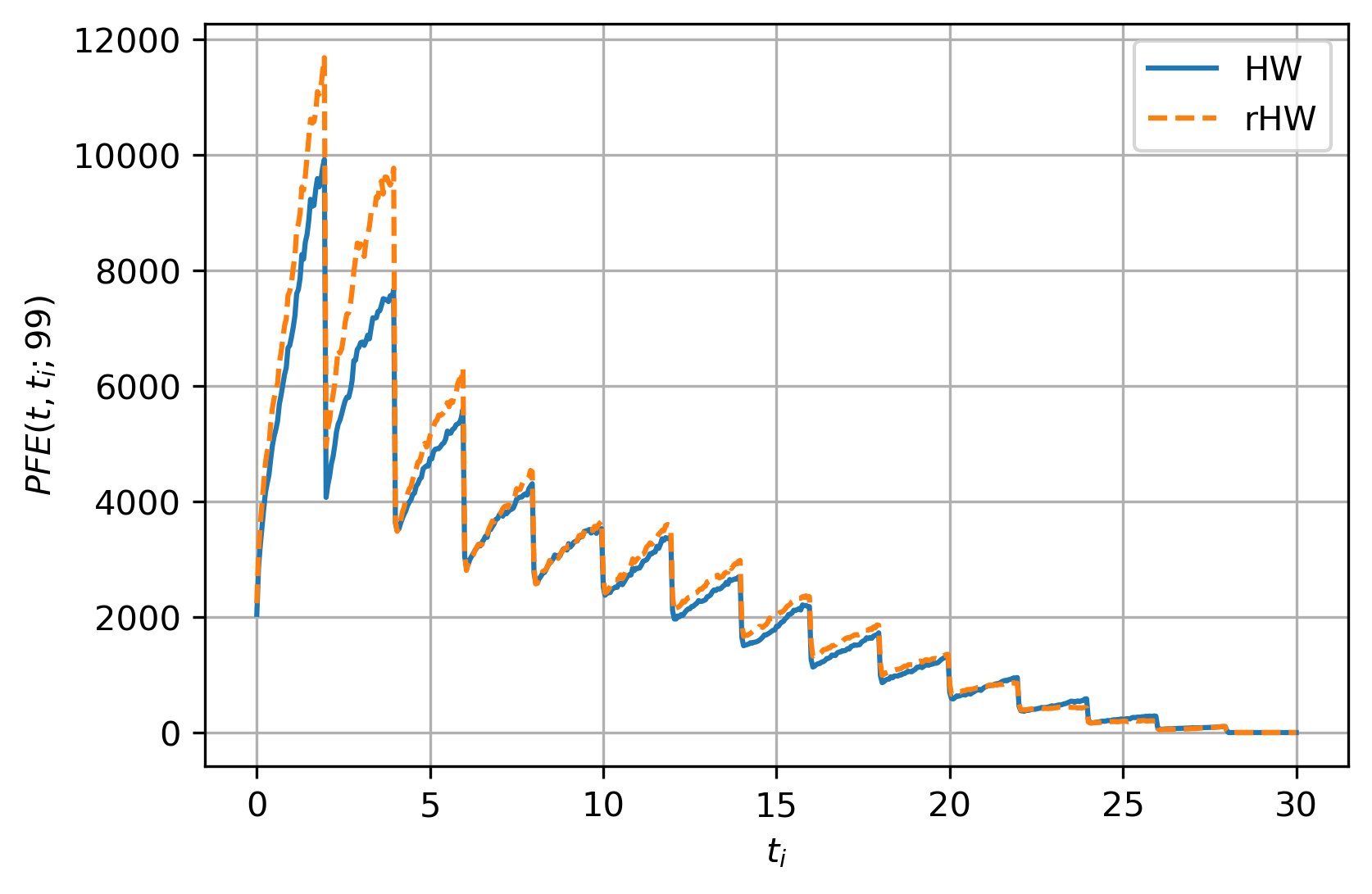}
    \caption{$\PFE(t, t_i; 97.5)$}
    \label{fig:RAnDEURNormalCotsmilesReceiverBermSwaptionATMExposureExpiry0PFE(99)}
  \end{subfigure}
  \caption{Comparing HW and rHW (potential future) exposures for a receiver Bermudan swaption on an ATM swap ($\strike = \strikeATM$).
  HW $\CVA(t_0)=230.486$ and rHW $\CVA(t_0)=259.783$.
  }
  \label{fig:RAnDEURNormalCotsmilesReceiverBermSwaptionATMExposureExpiry0}
\end{figure}

\end{document}